\DeclareMathOperator*{\argmin}{arg\,min}
\definecolor{Gray}{gray}{0.9}
\let\oldReturn\Return
\renewcommand{\Return}{\State\oldReturn}
\newcommand{\good}{\text{\tt GOOD}}
\newcommand{\bad}{\text{\tt BAD}}
\newcommand{\stai}{\text{\tt stnd}_i}
\newcommand{\staij}{\text{\tt stnd}_{ij}}
\newcommand{\stami}{\text{\tt stnd}_{-i}}
\newcommand{\ait}{a_i^t}
\newcommand{\aitars}{a_{i,ars}^t}
\newcommand{\aitmi}{a_i^{t-1}}
\newcommand{\aitmiars}{a_{i,ars}^{t-1}}
\newcommand{\aimitmi}{a_{-i}^{t-1}}
\newcommand{\aimitmiars}{a_{-i,ars}^{t-1}}
\def\algbackskip{\hskip-\ALG@thistlm}
\algnewcommand\algorithmiccase{\textbf{case}}
\newcommand{\removelatexerror}{\let\@latex@error\@gobble}
\begin{document}
\title{An Eye for an Eye: Economics of Retaliation in Mining Pools}

\copyrightyear{2019} 
\acmYear{2019} 
\acmConference[AFT '19]{1st ACM Conference on Advances in Financial Technologies}{October 21--23, 2019}{Zurich, Switzerland}
\acmBooktitle{1st ACM Conference on Advances in Financial Technologies (AFT '19), October 21--23, 2019, Zurich, Switzerland}
\acmPrice{15.00}
\acmDOI{10.1145/3318041.3355472}
\acmISBN{978-1-4503-6732-5/19/10}

\author{Yujin Kwon*, Hyoungshick Kim$^\dagger$, Yung Yi*, Yongdae Kim*}
\affiliation{\vspace{2mm}
\institution{*KAIST}}
\email{{dbwls8724, yiyung, yongdaek}@kaist.ac.kr}
\affiliation{\vspace{2mm} \institution{$^\dagger$Sungkyunkwan University}}
\email{hyoung@skku.edu}

\renewcommand{\shortauthors}{Kwon et al.}

\begin{abstract}
Currently, miners typically join \textit{mining pools} to solve cryptographic puzzles together, and mining pools are in high competition. 
This has led to the development of several attack strategies such as \textit{block withholding} (BWH) and \textit{fork after withholding} (FAW) attacks that can weaken the health of PoW systems and but maximize mining pools' profits.
In this paper, we present strategies called Adaptive Retaliation Strategies (ARS) to mitigate not only BWH attacks but also FAW attacks. 
In ARS, each pool cooperates with other pools in the normal situation, and adaptively executes either FAW or BWH attacks for the purpose of retaliation only when attacked. 
In addition, in order for rational pools to adopt ARS, ARS should strike to an adaptive balance between retaliation and selfishness because the pools consider their payoff even when they retaliate.
We theoretically and numerically show that ARS would not only lead to the induction of a no-attack state among mining pools, but also achieve the adaptive balance between retaliation and selfishness.
\end{abstract}

\begin{CCSXML}
<ccs2012>
<concept>
<concept_id>10002978.10003006.10003013</concept_id>
<concept_desc>Security and privacy~Distributed systems security</concept_desc>
<concept_significance>500</concept_significance>
</concept>
<concept>
<concept_id>10002978.10003029.10003031</concept_id>
<concept_desc>Security and privacy~Economics of security and privacy</concept_desc>
<concept_significance>500</concept_significance>
</concept>
</ccs2012>
\end{CCSXML}

\ccsdesc[500]{Security and privacy~Distributed systems security}
\ccsdesc[500]{Security and privacy~Economics of security and privacy}

\keywords{Bitcoin; Mining; Fork After Withholding Attack; Block Withholding Attack; Repeated Game}

\maketitle

\section{Introduction}
\label{Introduction}

Numerous cryptocurrencies based on a peer-to-peer network now exist, utilizing an open ledger called a \textit{blockchain}. In such blockchain systems, network nodes called \textit{miners} verify the transactions collected through the network, generate a block consisting of valid transactions, and propagate the block to the entire network. 
In general, (public) blockchain systems offer financial incentives to encourage miners to participate in this process. For example, in Bitcoin, a miner is currently rewarded with 12.5 BTC for a new block creation.

The security and reliability of blockchain systems depend on their consensus algorithm. Most popular cryptocurrency systems such as Bitcoin and Ethereum adopt a \textit{proof-of-work} (PoW) mechanism in order to agree on the same blockchain~\cite{blockchain}. In such PoW mechanisms, miners must solve cryptographic puzzles (i.e., \textit{proofs-of-work}) showing that a certain amount of computational resources (e.g., time and memory) was spent in order to generate a new block. The mining difficulty is adjusted automatically to maintain an average mining rate of one block at a fixed time interval with changes in the total computational power of the blockchain system.

The significant increase in mining difficulty led the formation of \textit{mining pools} in which miners gather to mine together; these pools perform mining as a single node in a network. 
Most pools consist of a manager and miners, and the manager assigns a puzzle to miners at the beginning of every round. 
The miners then find partial proofs-of-work (PPoWs) and full proofs-of-work (FPoWs) for a given puzzle and submit them to the manager.
PPoWs are needed for assessing each miner's contribution to share the reward in the pool.
If a miner fully solves the puzzle, the manager generates and propagates a block based on the submitted PoW. 
The manager then earns the reward for one block and splits it among miners in proportion to their number of submitted PPoWs.

However, previous studies~\cite{rosenfeld2011analysis,kwon2017selfish} demonstrated how existing mining pools' protocols can be vulnerable to \textit{block withholding} (BWH)~\cite{rosenfeld2011analysis} and \textit{fork after withholding} (FAW) attacks~\cite{kwon2017selfish}.
Moreover, pools in high competition can launch these attacks against each other by infiltrating a part of their mining power (i.e., computational power) into other pools.
A BWH attacking pool first infiltrates its mining power into other pools (i.e., victim pools) and submits only PPoWs but not FPoWs to the victim pools. The FAW attack is an extended attack of the BWH attack. 
Similar to the BWH attack, a FAW attacking pool also infiltrates its mining power into other pools, and submits only PPoWs but not FPoWs to the victim pools except for the case when an external honest miner (i.e., neither the attacker nor a miner in the victim pool) propagates a block. 
Unlike the BWH attack, in the case, the attacker \textit{intentionally} generates forks through the victim pools. 

To analyze the strategic interaction between miners for those two attacks, we can use game-theoretic models: the BWH game and the FAW game. 
Eyal~\cite{eyal2015miner} showed that the BWH game between two pools where they can execute BWH attacks is similar to the prisoner's dilemma. 
For the FAW game between two pools where they can execute FAW attacks, 
Kwon et al.~\cite{kwon2017selfish} showed that mining pools' strategies form a Nash equilibrium in which a larger pool among two pools earns extra profit. 
Thus, FAW attacks between pools can cause more centralization of the system because miners would join in large pools that earn the extra reward through the FAW attack. 
\textit{The objective of this study is to find strategies inducing a no-attack state where FAW and BWH attacks do not occur.}

\smallskip
\noindent\textbf{System model:} To achieve this goal, we take a more macroscopic view of the system by modeling long-term interactions between two pools as a notion of repeated game, which we call a \textit{repeated FAW-BWH game}, and by considering both FAW and BWH attacks together in one framework. 
In the repeated game, the \textit{FAW-BWH game} is repeated as a one-stage game over time, where in every stage each rational pool makes a decision to cooperate or execute FAW or BWH attacks (see Section~\ref{Model} for a full description of our model). 
Unlike previous studies~\cite{eyal2015miner, kwon2017selfish} focusing on a single stage game, we model a game with multiple stages to analyze the effects of long-term interactions among pools. 
Also, we considered both FAW and BWH attack strategies while they considered a single attack strategy (i.e., FAW or BWH attack) only.

\smallskip
\noindent\textbf{A novel strategy ARS:}
In game theory, the \textit{iterated prisoner's dilemma} (i.e., a repeated version of the prisoner's dilemma) has been extensively studied in terms of how rational players can cooperate through retaliation~\cite{axelrod1987evolution}.
Using this retaliation concept, we find how rational pools can cooperate in the repeated FAW-BWH game. 
Unlike the iterated prisoner's dilemma, the repeated FAW-BWH game leads to a situation where a larger pool always wins the game (i.e., the pool size game). This occurs when two pools execute FAW attacks against each other as in the \textit{FAW game}. 
Therefore, it is relatively under-explored to find a cooperation-inducing strategy in such a situation. 
In this paper, we propose strategies called Adaptive Retaliation Strategies (ARS), which can lead to the state of no-attack between two pools. 
In ARS, a pool cooperates at first and continues to cooperate, but executes attacks for retaliation only when attacked. 
Here, ARS must achieve a good balance between retaliation and selfishness because rational pools would consider their payoff even when they retaliate. 
In other words, if retaliation is costly, they would not follow ARS, which implies that we should find a \textit{credible retaliation}.
This is done by ARS' adaptive retaliation, i.e., adaptively deciding the amount of infiltration power for retaliation against FAW or BWH attacks.

We formally describe ARS and prove that 1) ARS leads rational pools to cooperate and 2) pools are likely to adopt ARS, by using a popular concept of equilibrium in repeated game theory, called subgame perfect Nash equilibrium (see Section~\ref{sec:cooperation}). 
Furthermore, our numerical analysis demonstrates that ARS makes the FAW and BWH attacks unprofitable (see Section~\ref{sec:numerical}).

\smallskip
\noindent\textbf{Practical requirements for ARS: } To apply ARS in real-world settings, pools should be able to monitor other pools' information about whether an attacker executes FAW or BWH attacks, and if so, how much the attacker's infiltration power is used. 
To monitor these information, victim pool's some statistical properties can be used.
For example, a pool can detect FAW and BWH attacks by observing its fork rate and the ratio between the number of submitted PPoWs and FPoWs, respectively. 
It can also determine the attacker's infiltration power in the victim pool through this detection method.
Another prerequisite for ARS is to identify which pools have attacked a victim pool, and this may take a long time. 
Therefore, we propose investigating the variance of the reward densities of pools through moles to reduce the identifying time (see Section~\ref{sec:detection}).
The benign pool's reward density is not at all related with block rewards of other pools. 
Meanwhile, when a pool executes either FAW or BWH attacks, its reward density would be correlated with block rewards of other pools because the attacker earns part of its reward from the victim. 
This implies that we can identity the pools executing attacks based on this correlation information (see Section~\ref{sec:detection}). 

\smallskip
\noindent\textbf{Multiple pools:} 
In addition, for generalization, we extend ARS to that for multiple pools beyond two pools. Through a case study, we show that ARS still makes FAW and BWH attacks unprofitable in the game among multiple pools (see Section~\ref{sec:npools}).

\section{Background}
\label{Background}

In current cryptocurrencies, peers verify transactions issued by clients.
Peers record the verified transactions in a ``blockchain.''
To maintain the blockchain, many cryptocurrencies including Bitcoin adopt the PoW mechanism.
In this section, we describe the mining process, focusing on Bitcoin. 
Further, we demonstrate FAW and BWH attacks against mining pools.

\subsection{Bitcoin Basics}

\noindent\textbf{Mining Process:}
For mining, miners first gather issued transactions, which are not yet recorded in the blockchain, into their local storage. 
Then, miners place the transactions into a block and find a PoW, spending their computational power to generate a valid block.
The header of a block includes a Merkle root~\cite{merkle1980protocols} of transactions in the corresponding block and the hash value of the header of the previous block.
The block header also includes a nonce, which is a key component necessary to become a valid block. 

To be a valid block, the hash value of block header must be less than a given target number $T_1$.
In particular, Bitcoin uses the double SHA256 hash function.
The hash value of a block header is obtained as an output of double SHA256 for an input containing a concatenation of block contents including a nonce.
Miners increment a nonce to find a valid nonce, which makes the hash value less than the target number $T_1$. 
If a miner finds a valid nonce as a PoW, the miner generates the valid block including the nonce and propagates the block to a peer-to-peer network. 
Finally, the block is appended to the blockchain, and the above mining process is repeated.

In Bitcoin, the target number $T_1$ is adjusted every 2016 blocks to keep the average period of one block generation (i.e, the average period of one round) at 10 mins. 
The smaller the value of $T_1$, the more difficult the mining process will be.

\smallskip 
\noindent\textbf{Forks:}
When a miner $A$ propagates a block, another block can also be generated and propagated by a miner $B$ who has not yet received $A$'s block.
Therefore, miners receive two blocks and regard the first received block as the blockchain head. This situation is called a \emph{fork}. When a fork occurs, only one block becomes valid. 
Moreover, an attacker can generate intentionally forks to execute double-spending~\cite{doublespend,karame2012double} and selfish mining~\cite{eyal2014majority, nayak2016stubborn, sapirshtein2015optimal,gervais2016security}.

\smallskip 
\noindent\textbf{Mining Pools:}
As the mining difficulty has been increasing, 
mining pools were introduced, in which miners gather to mine together. 
Major pools consist of a manager and many miners. 
These pools run as one node in the Bitcoin network, and pool miners only need to connect to the manager and create IDs.
The manager forms and distributes a potential block to miners, and then miners spend their computational power to generate a valid nonce based on the block form provided by the manager.
Moreover, there are \textit{open} and \textit{closed} pools, depending on the policy whether any miner can join or not. 

Pools' reward systems are different from the block reward system in Bitcoin.
Miners in pools receive rewards for nonces, which make the hash value of the block header less than a new target number $T_2$.
The number $T_2$ is greater than the original target number $T_1$. 
We refer to a nonce for the target $T_2$ and $T_1$ as a partial proof-of-work (PPoW) and a full proof-of-work (FPoW), respectively.
When a miner finds a PPoW or an FPoW, the miner submits it (called \textit{shares}) to the manager, where PPoWs are needed to assess each miner's contribution in order to share the reward in the pool.
When the submitted share is FPoW, the manager generates a valid block and earns the block reward.
The manager then distributes the block reward to miners in proportion to the number of submitted shares.

\subsection{Existing Attacks on Pools}

\noindent\textbf{Block Withholding:}
To execute the BWH attack~\cite{rosenfeld2011analysis,courtois2014subversive}, 
an attacker splits her computational power into two parts, which are used for solo mining and malicious mining in the victim pool, respectively.
In the malicious mining, she submits only a PPoW to the victim without submitting an FPoW.
Although she undermines the victim by withholding blocks in the victim pool, she still earns a portion of the reward through PPoWs submitted to the victim.
In addition, her solo mining has a higher efficiency compared to the case where she does not attack because the block reward is gained in proportion to each pool's \textit{relative} computational power in Bitcoin. 
In other words, by undermining the victim pool, her relative computational power would increase. 
As a result, this point allows the attacker to earn an extra reward. 
Naturally, the BWH attack can be executed in many proof-of-work cryptocurrencies including Bitcoin, Ethereum, and Litecoin.

In 2015, Eyal~\cite{eyal2015miner} developed a BWH game between two pools. 
In the game, each pool can launch the BWH attack against an opponent by infiltrating a part of the computational power into the opponent.
Eyal found that the BWH game results in \textit{the miner's dilemma}. 
In other words, there is only one Nash equilibrium in which two pools execute BWH attacks against each other and both pools suffer losses. 

\smallskip 
\noindent\textbf{Fork After Withholding:}
For the FAW attack proposed in 2017~\cite{kwon2017selfish}, similar to the BWH attack, an attacker splits her computational power into two parts, which are used for her solo mining and malicious mining in a victim pool.
However, when the attacker finds an FPoW in the victim pool, she submits the FPoW to the manager unlike in the BWH attack. 
This occurs only if an external honest miner (i.e., neither the attacker nor a miner in the victim pool) propagates a block. 
Therefore, the attacker intentionally generates a fork through pools. 
In the FAW game, pools can launch FAW attacks against each other by infiltrating a portion of their computational power into the other pools.
There is unique Nash equilibrium, where two pools execute the FAW attack against each other.
In the equilibrium, a larger pool earns an extra reward (unlike in the BWH game) while a smaller pool suffers a loss. In other words, the game leads to a \textit{pool size game.} 
This fact can make the decentralization level of Bitcoin decrease when occurring FAW attacks among pools.

\smallskip 
\noindent\textbf{Countermeasure:}
Because no viable countermeasure against FAW and BWH attacks~\cite{eyal2015miner, kwon2017selfish} exists, the attacks can be launched in practice.
Indeed, the BWH attack was executed against the ``Eligius'' pool in 2014.
To detect the attack, a pool's manager can investigate the ratio of FPoWs to PPoWs.
If the ratio is low enough, the manager can speculate that the BWH attack has occurred in his pool.
However, identification of the attacker is known to be difficult if she executes BWH attacks using many Sybil nodes (IDs) in the pool.

For the FAW attack, miners can detect it because the attack will cause a high fork rate.
However, it is still difficult to identify the attacker because she indirectly generate intentional forks through pools instead of generating them by herself.
Of course, in the victim pool, the manager can expel any miners suspected of causing forks.
Nonetheless, the attacker's reward can be unaffected by this manager's behavior by planting many Sybil IDs in the victim pool.
In other words, even though an ID that generates a fork is expelled by the manager, the attacker still earns the extra reward though other IDs. Eventually, the manager would be unable to prevent FAW attacks with a simple blacklist of suspects.

Even though many papers~\cite{twophase, rosenfeld2011analysis, daian2017short} proposed a new PoW mechanism that can prevent BWH and FAW attacks, they might be impractical. 
These protocols make pool miners not know whether their found nonce can generate a valid block, and thus the miners cannot execute BWH and FAW attacks. 
However, this point causes another withholding attack of a manager, where she can stealthily withhold blocks found by a pool miner and earn extra profit through her solo mining.  
In addition, the protocols increase the pool operation cost.
The above facts make the protocols impractical~\cite{kwon2017selfish}. 
\section{Model and Formulation}
\label{Model}

\subsection{System Model}
\label{subsec:systemmodel}

\noindent{\bf Block generation in PoW:} 
In PoW mechanisms, miners attempt to generate valid blocks by finding an inverse image of a hash function satisfying a certain condition in each round, where one round is defined as the time during which a new task is generated and a valid block is found by a miner. 
Due to the pseudorandomness of hash functions, we assume that the number of blocks found by a miner for one round follows a Poisson distribution with the miner's relative computational power. 
Then, the number of blocks found by a pool also follows a Poisson distribution because the sum of Poisson random variables is a Poisson random variable. 
For simplicity, we assume that natural forks do not occur in the block generating process, as the probability ($\approx$0.004~\cite{gervais2016security}) of natural forks occurring is significantly low in practice~\cite{eyal2015miner,gervais2016security,kwon2017selfish}. 

\smallskip
\noindent{\bf Computation power and reward:} 
We let $\mathcal{I}$ be the set of all pools and solo miners\footnote{A solo miner directly conducts mining, not joining pools.}, and denote by $\alpha_i$ the computational power of $i\in\mathcal{I}$. 
For analytical convenience, we normalize the total computation power with 1, and thus $\sum_{i \in \mathcal{I}} \alpha_i = 1$. 
We assume that a node cannot possess more than 50\% computational power (i.e., $\alpha_i \leq 0.5$) as in the previous works. 
A reward for one block is assumed to be 1, implying that the total reward of a node $i$ in a round is simply the total number of blocks found by $i$ in that round. 
When a pool finds a block and earns the reward for the block, the pool manager distributes the reward to miners in proportion to the number of shares submitted for the time duration over which the pool finds the block.  
We also assume that the manager honestly distributes the reward among the pool's miners.
We define node $i$'s {\em reward density} at round $r$ as $\frac{R^r_i}{\alpha_i}$ where $R^r_i$ is $i$'s reward earned for round $r.$

\smallskip
\noindent{\bf Attacks:} 
We consider a case in which only two types of attacks, FAW and BWH, can be executed by pools. 
In addition, because most pools are open pools, we consider only open pools, not closed pools.
Closed pools will be discussed in Section~\ref{Discussion}.
For the worst case analysis, we assume that the FAW attack is executed under the best network capability\footnote{A variable $0\leq c \leq 1$ represents the network capability~\cite{kwon2017selfish}, where we assume $c=1$ in this study. 
This assumption, which is made for the sake of simplicity, can be relaxed readily.}, meaning that the attacker's blocks always become valid after forks caused by the FAW attack.
During an attack, we assume that an attacking pool executes either FAW or BWH attacks, but not both at the same time.
Indeed, an attack combining FAW and BWH is equivalent to the FAW attack under some network capability.
An attacking pool infiltrates a part of its computation power into ``victim'' pools.  
Note that such a choice on the attack strategy can be time-varying (see \textbf{stage} in the next paragraph). 
We consider a regime in which there are a sufficient number of miners in each pool, so as to assume that each pool's infiltration power used for attacks is a real number. 
Moreover, an attacker infiltrates its partial computational power into a victim pool using Sybil IDs in order to make it more difficult to identify who the attacker is.  
Nonetheless, we assume that the victim can trace the attacker and know how much infiltration power the attacker has used for attacks because the attacking pool is an open pool. 
In Section~\ref{sec:detection}, we describe how this becomes possible in practice.

\smallskip
\noindent{\bf Stage:} 
Our interest lies in investigating how pools interact over a long-term period. To this end, the entire time is divided into 
a sequence of {\em stages,} where over each stage a pool can know
  whether an attack occurs against itself, how much infiltration power
  is being used, and who the attacker is. This
  notion of stage is the one that is popularly used in repeated games (see Section~\ref{sec:model_fbgame}).
  Therefore, at the end of each stage, a victim identifies an
  attacker when an attack was executed against the victim during the stage.
Note that a stage consists of multiple rounds. At the start of each stage, pools can change their actions based on other pools' actions.  
For analytical tractability, we assume that stages are synchronized among pools.

\vspace{-2mm}
\subsection{Repeated FAW-BWH Game}
\label{sec:model_fbgame}

\noindent{\bf Primer on repeated game:} 
We aim at modeling how multiple pools interact to achieve their own objectives over a {\em long-term} period.  
To this end, we use the theory of {\em repeated games,} 
popularly used to understand the long-term interactions among players.  
In repeated games, the interactions among players repeat for multiple stages,
and the players become aware of other players' past behaviors and
their future benefits, accordingly adapting their strategies over time.
The main idea of the theory of repeated games is that a player may be
deterred from exploiting her short-term advantage by the threat of
punishment that reduces her long-term payoff.

The basic component of a repeated game is a (simultaneous-move) stage game $G$ played for each stage. 
The stage game $G$ is represented by $<\mathcal{N}, (A_i)_{i \in \mathcal{N}}, (U_i)_{i \in \mathcal{N}}>~,$ where $\mathcal{N}$ is the set of players, $A_i$ is the set of actions of the player $i,$ and $U_i(a)$ is a player $i$'s payoff function when the players' action profile is $a \in (A_i)_{i\in\mathcal{N}}.$ 
We denote by $G^T$ the $T$-period repeated game of $G$ with perfect information, where the players play the same stage game $G$ for $T$ stages, possibly $T = \infty$. 
We use the superscript $t$ to express the stage $t$  
in all notations, and let $a^t := (a_i^t)_{i \in \mathcal{N}}$ denote the action profile at stage $t,$ i.e., the actions by all players at stage $t.$ 
Also, we denote by $a_i := (a_i^t)_{0\leq t\leq T}$ the actions of player $i$ for $T$ stages. 
For $t \geq 1,$ let $h^t = (a^0, a^1, \ldots, a^{t-1})$ denote the history up to stage $t-1,$ where $H^t$ is the space of all possible stage-$t$ histories.
Depending on whether $T < \infty$ or $T=\infty$, we call each case as a finitely or infinitely repeated game. 
At each stage $t,$ each player $i$ knows all past actions $h^{t-1}$ and chooses the next action $a^t_i$ according to $i$'s strategy, thus $a^t=(a_i^t)_{i \in \mathcal{N}}$ is determined. 
Here, a {\em strategy} $s_i$ for player $i$ in the repeated game is a sequence of maps $s_i^t$---one for each stage $t$---that map a $(t-1)$-history $h^{t-1}$ to an action $a_i^t$ in $A_i.$
By perfect information, we mean that at the end of each stage game, players are able to know other players' actions and their payoffs.
In this paper, we focus on the infinitely repeated game, in which given the whole action profiles $\bm{a} = (a_1, a_2, \ldots )$, the payoff $\mathcal{U}_i(\bm{a})$ of player $i$ in the corresponding repeated game is the discounted average payoff, i.e., 

\vspace*{-5mm}

\begin{equation}
\label{eq:payoff}
  \mathcal{U}_i(\bm{a}) = \sum_{t=1}^\infty \delta^{t-1} U_i(a^t), 
\end{equation}

\vspace{-1mm}
\noindent where $ 0 < \delta < 1$ is the discount factor. The discount factor indicates how much a player discounts in the future.
Next, we introduce the concept of the subgame perfect Nash equilibrium.

\begin{definition}[Subgame Perfect Nash Equilibrium (SPNE)]
For a given history $h^t$ and player $i$'s strategy $s_i$, we denote the discounted average payoff of player $i$ in the subgame given the history $h^t$ as ${U}_i(s_i,\bm{s_{-i}}\,|\,h^t)$.
Then the strategy profile $\bm{s^\star}=(s^\star_i)_{i\in\mathcal{N}}$ is a subgame perfect Nash equilibrium if
\vspace{-1mm}
\begin{equation}
\mathcal{U}_i(s^\star_i,\bm{s^\star_{-i}}\,|\,h^t)=\max_{s'_i\in S_i} {\mathcal{U}_i(s'_i, \bm{s^\star_{-i}}\,|\,h^t)}
\,\text{ for } \forall i\in\mathcal{N}, \forall h^t\in H^t, \forall t>0, \notag
\vspace{-1mm}
\end{equation}
where $S_i$ is a space of player $i$'s strategies.
\end{definition}

\noindent In the repeated game, SPNE is a stronger version of Nash equilibrium, roughly meaning a strategy profile, which is a Nash equilibrium in \textit{every} subgame.
Thus, a SPNE is regarded as a mathematically-proved strategy vector that rational players are likely to follow when players interact over a long-term time period.  
There is also Folk Theorem~\cite{fudenberg2009folk} that states the existence of SPNE outcomes under a certain condition. 
More specifically, it represents that if there is a \textit{credible retaliation,} it is likely to achieve cooperation among rational players, where a credible retaliation indicates that is not costly for a retaliator. 

\smallskip\noindent{\bf Repeated FAW-BWH game:}
As in the previous studies~\cite{eyal2015miner, kwon2017selfish}, for simplicity we consider a game only between two pools (Pool$_1$ and Pool$_2$). 
As mentioned in Section~\ref{subsec:systemmodel}, we define a stage as the duration of time when each pool is able to trace other pools' attacking behaviors, which enables us to obtain the condition of perfect information in our analytical model.

In defining the repeated FAW-BWH game, it is crucial to define a stage game $G,$ for which we now describe how we model the set of actions $(A_i)_{i = 1, 2}$ and the payoff function $U_i(\cdot).$ 
First, each pool $i$'s action is defined in terms of which attack is performed and the amount of infiltration power used. 
We assume that a pool's attack is homogeneous, i.e., it executes either FAW attack or BWH attack. 
However, a pool is able to change its attack across stages. 
Then, in this stage game, each pool $i$'s action space is no-attack, FAW, or BWH, with a choice of some infiltration power for each attack. 
More formally, at stage $t$, each pool $i$'s action can be expressed as a vector $a_i^t = (f_i^t, b_i^t)$, where $f_i^t$ and $b_i^t$ are the infiltration powers for each FAW and BWH attack, respectively $(0\leq f_i^t, b_i^t \leq \alpha_i)$.  
It is clear that the case of $f^t_i = b^t_i =0$ corresponds to no-attack, in which case we simply denote it by $a_i^t=\bm{\overline{0}}$. 
Moreover, because we assume homogeneity in the attack, always only one of $f_i^t$ or $b_i^t$ is positive, i.e., $f_i^t \times b_i^t =0.$
Fig.~\ref{fig:model} depicts a model of our repeated FAW-BWH game between two pools.
When a stage $t$ ends, each pool $i$ is aware of its opponent pool's strategy ($a_{-i}^t$) for this stage game.
Then, pools can change their action, depending on the opponent's action at the previous stage, when a new stage game starts. 
To complete the definition of the repeated FAW-BWH game, it remains to define the payoff function $U_i(a^t)$ in \eqref{eq:payoff} at each stage game, which we will delay until Section~\ref{sec:cooperation}. 

\begin{figure}[t]
\centering
\includegraphics[width=0.9\columnwidth]{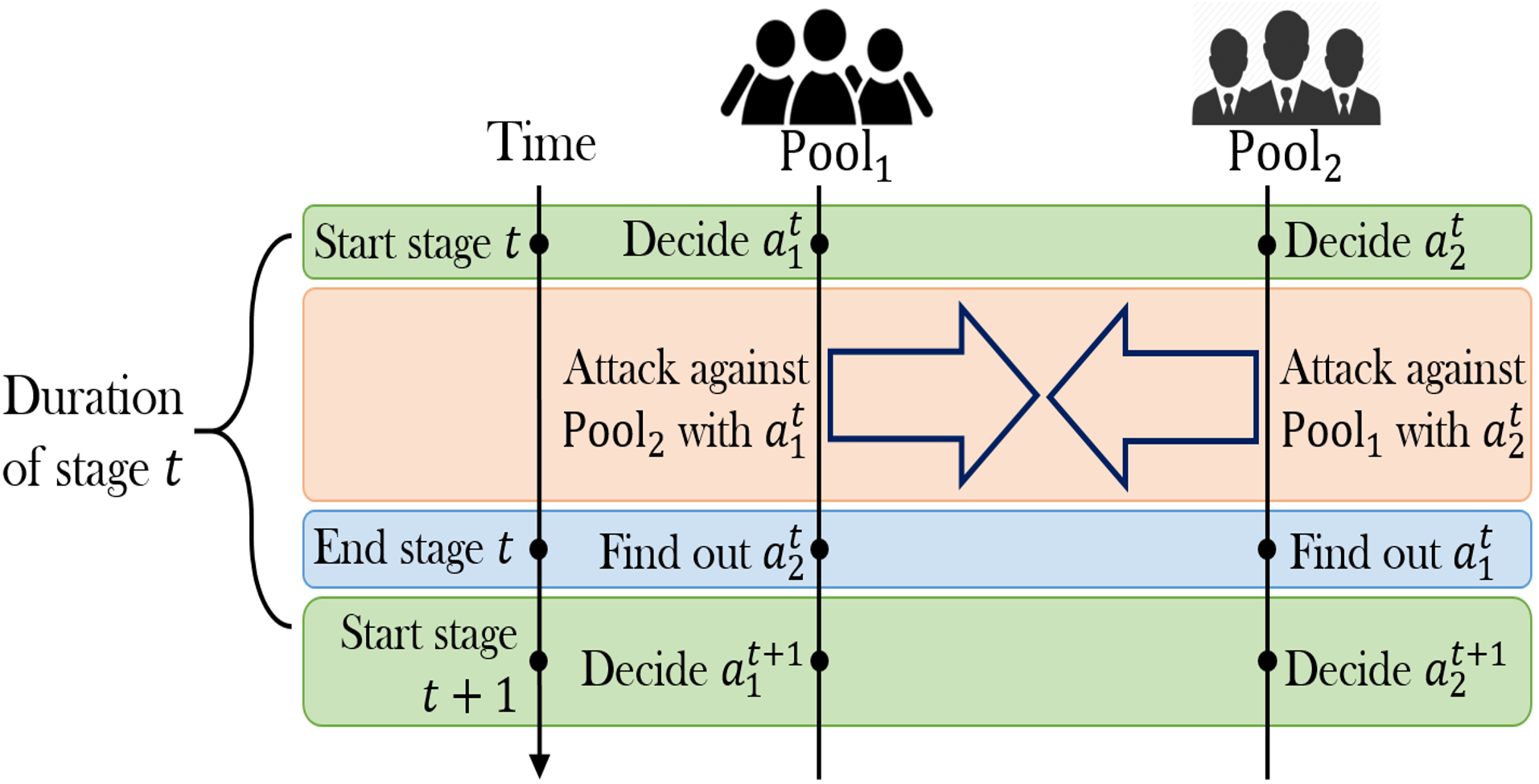}
\caption{\small At the start of stage $t$, two pools can decide their strategy $a_i^t$.
They then know the opponent's strategy $a_{-i}^t$ at the end of stage $t$.
This process is repeated at a new stage $t+1$.}
\label{fig:model}
\end{figure}
\section{Analysis of Repeated FAW-BWH Game} 
\label{sec:cooperation}

In this section, we study how two pools choose their strategies in the equilibrium of the repeated FAW-BWH game, to understand the pools' behaviors in terms of their interaction over a long time. 

\subsection{Payoff Function in the Stage Game}

Because pools play the same game at each stage, we remove the stage index $t$ in this section for simplicity. 
For an action profile $a=(a_1, a_2)$ of two pools, it seems natural to define the payoff $U_i$ as its extra reward density with respect to 1, i.e., $U_i(a_1, a_2) = \frac{R(a_1, a_2)}{\alpha_i}-1,$ where $R(a_1, a_2)$ is the average earned reward in each round. 
As described in Section~\ref{sec:model_fbgame}, each pool's action is expressed as its infiltration power for either FAW or BWH attack, i.e., $a_1 = (f_1, b_1)$ and  $a_2 = (f_2, b_2)$ where either $f_i$ or $b_i$ is 0.
Then, for convenience, we separately present four possible cases as follows: for a given profile $a=(a_1, a_2),$
\vspace{-1mm}
\begin{equation*}
\resizebox{0.85\hsize}{!}{$\begin{aligned}
  U_i\Big((f_i,b_i), (f_{-i},b_{-i})\Big) & = 
\begin{cases}
U^{FF}(f_i, f_{-i})& b_i=0\text{ and } b_{-i}=0,\cr
U^{FB}(f_i, b_{-i})& b_i=0\text{ and } f_{-i}=0,\cr
U^{BF}(b_i, f_{-i})& f_i=0\text{ and } b_{-i}=0,\cr
U^{BB}(b_i, f_{-i})& b_i=0\text{ and } f_{-i}=0,
\end{cases}
\end{aligned}$}
\vspace{-1mm}
\end{equation*}
where we henceforth provide the forms of four functions: $U^{FF},$ $U^{BB},$ $U^{BF},$ and $U^{FB}.$
According to the definition of payoff $U_i$, if two pools do not attack, their payoffs are 0.
If $U_i$ is positive, Pool$_i$ earns an extra reward.
Otherwise, Pool$_i$ suffers a loss.

\smallskip
\noindent{\bf Homogeneous attacks:}
The case in which two pools execute the same attack, FAW or BWH, has been studied in two related studies \cite{kwon2017selfish,eyal2015miner}.
For the FAW-FAW attack, the following payoff function can be obtained from Kwon's work~\cite{kwon2017selfish}.
\vspace{-1mm}
\begin{equation}
\resizebox{0.9\hsize}{!}{$
\begin{aligned}
U^{FF}(f_i,
f_{-i}) &=\frac{\alpha_i-f_i}{(1-f_i-f_{-i})(\alpha_i+f_{-i})}+\frac{f_{-i} (1-\alpha_i-\alpha_{-i})}{(1-f_{-i})(\alpha_i+f_{-i})} 
\cr 
&+\frac{f_if_{-i}}{2}\left(\frac{1}{1-f_i}+\frac{1}{1-f_{-i}}\right)\frac{1-\alpha_i-\alpha_{-i}}{(1-f_i-f_{-i})(\alpha_i+f_{-i})}
\cr
&+\frac{(U^{FF}(f_{-i}, f_i)+1)\cdot f_i}{\alpha_i+f_{-i}}-1.
\end{aligned}$}
\label{eq:faw_payoff}
\vspace{-1mm}
\end{equation}
In \eqref{eq:faw_payoff}, the first term is obtained from the honest mining of each pool, achieved with the computational power remaining after deducting the infiltration power.
Note that Pool$_i$ gets a reward of $\frac{\alpha_i-f_i}{1-f_i-f_{-i}}$ from the honest mining because each node earns a mining reward based on how many blocks it generated relative to others. 
The second term represents the extra reward density that is earned in the case where the opponent generates an intentional fork and Pool$_i$ does not generate any block.
In this case, both an external honest miner and the infiltration power of Pool$_{-i}$ generate a block, and the probabilities of these events are $\frac{1-\alpha_i-\alpha_{-i}}{1-f_{-i}}$ and $f_{-i}$, respectively.
This derives the second term. 
The third term is from intentional forks caused by both Pool$_1$ and Pool$_2.$ 
In this case, an external honest miner and infiltration powers of Pool$_1$ and Pool$_2$ find blocks, and then a fork with three branches occurs.
If the infiltration power of Pool$_i$ finds a block faster than that for Pool$_{-i}$, its probability would be $f_i\cdot\frac{f_{-i}}{1-f_i}\cdot\frac{1-\alpha_i-\alpha_{-i}}{1-f_i-f_{-i}}.$
On the other hand, if the infiltration power of Pool$_{-i}$ generates a block faster than that for Pool$_i$, its probability would be $f_{-i}\cdot\frac{f_i}{1-f_{-i}}\cdot\frac{1-\alpha_i-\alpha_{-i}}{1-f_i-f_{-i}}.$ 
Considering these facts, the third term is derived. 
Lastly, the fourth term is from its infiltration mining into the opponent and is derived from that the opponent distributes the reward of $(U^{FF}(f_{-i}, f_i)+1)\cdot f_i$ to Pool$_i.$ 
Note that Pool$_i$ infiltrates the computational power of $f_i$ into the opponent.

Next, we consider when two pools execute BWH attacks against each other. In this case, we have the following form of the payoff function from Eyal's work~\cite{eyal2015miner}, where forks are not intentionally generated so the second and third terms in \eqref{eq:faw_payoff} disappear in \eqref{eq:bwh_payoff}.
\begin{equation}
\resizebox{0.92\hsize}{!}{$
\label{eq:bwh_payoff}
U^{BB}(b_i, b_{-i})
 =\frac{\alpha_i-b_i}{(1-b_i-b_{-i})(\alpha_i+b_{-i})} %\cr
+\frac{(U^{BB}(f_{-i}, f_i)+1)\cdot b_i}{\alpha_i+b_{-i}}-1.$}
\end{equation}

\smallskip
\noindent{\bf Heterogeneous attacks:} As opposed to the payoff functions in homogeneous attacks borrowed from previous studies~\cite{kwon2017selfish,eyal2015miner}, it still remains to establish the payoff functions for when each of two pools execute FAW and BWH attacks. 
We first consider the case when Pool$_i$ and Pool$_{-i}$ execute FAW and BWH attacks, respectively. Then, the payoff $U^{FB}(f_i, b_{-i}),$ which quantifies the extra reward density, turns out to be given by:
\begin{equation}
\label{eq:faw_bwh_payoff}
\resizebox{0.92\hsize}{!}{$
U^{FB}(f_i,
  b_{-i})  =\frac{\alpha_i-f_i}{(1-f_i-b_{-i})(\alpha_i+b_{-i})}
 +\frac{(U^{BF}(b_{-i}, f_i)+1)\cdot f_i}{\alpha_i+b_{-i}}-1.$}
\end{equation}
This payoff can be easily derived. 
First, the first term represents the earned reward density of Pool$_i$ through its honest mining with the computational power remaining after deducting the infiltration power.
The second term is obtained from Pool$_i$'s infiltration mining into the opponent, Pool$_{-i}$.
Note that \eqref{eq:faw_bwh_payoff} does not have any reward density term earned from generated forks in Pool$_i$ because Pool$_{-i}$ does not generate forks in Pool$_i$.

Now, when Pool$_i$ and Pool$_{-i}$ execute BWH and FAW attacks with infiltration power $b_i$ and $f_{-i}$, respectively, we have: 
\begin{equation}
\resizebox{0.9\hsize}{!}{
$\begin{aligned}
& U^{BF}(b_i,  f_{-i})=\frac{\alpha_i-b_i}{(1-b_{i}-f_{-i})(\alpha_i+f_{-i})} 
+\frac{f_{-i}}{1-b_i} \cr
& \times\frac{1-\alpha_i-\alpha_{-i}}{(1-b_i-f_{-i})(\alpha_i+f_{-i})}
+\frac{(U^{FB}(f_{-i}, b_i)+1)\cdot b_{i}}{\alpha_i+f_{-i}}-1.
\end{aligned}$}
\label{eq:bwh_faw_payoff}
\end{equation}
Because only Pool$_{-i}$ executes the FAW attack, forks are generated by Pool$_{-i}$ in only Pool$_{i}.$ 
Thus, \eqref{eq:bwh_faw_payoff} is the addition of a similar form of \eqref{eq:faw_bwh_payoff} to the reward density (the second term) earned when forks occur in Pool$_i$.

\subsection{Equilibrium at the Stage Game}\label{sec:faw_retaliate}

We now discuss how two pools would behave at the equilibrium for each stage game, before we study how rational pools behave through long-term interactions in the repeated game. 
This step is of significant importance because (i) it clearly shows how much a near-sighted view of pools' interaction in each stage game (as in prior work~\cite{eyal2015miner, kwon2017selfish}) differs from a far-sighted one in the repeated games, and (ii) understanding the per-stage equilibrium behaviors is a key to understanding what happens if such stage games are repeated among pools. 
This per-stage equilibrium is stated in Theorem~\ref{thm:nash_stage}. 

\begin{theorem}[Nash equilibrium for stage game]
\label{thm:nash_stage}
There exists a unique Nash equilibrium (NE) $a^\star = (a_i^\star, a_{-i}^\star)$ in the stage game; it is characterized as:

\vspace*{-4mm}
\begin{align}
  \label{eq:stage_result}
  (a_i^\star, a_{-i}^\star)   = \Big( (f_i^\star,0), (f_{-i}^\star,0)
  \Big), \quad \text{where}  \quad  f_i^\star >0\text{ and } f_{-i}^\star >0.
\end{align}

\noindent
Further, the following payoff values are obtained for different cases of two pools' computational powers $\alpha_i$ and $\alpha_{-i}$:

\vspace*{-1mm} \noindent
\begin{align}
  U_i(a_i^\star, a_{-i}^\star)>0, \ U_{-i}(a_i^\star, a_{-i}^\star)<0&
\quad  \text{if} \quad \alpha_i > \alpha_{-i}, \label{eq:stage1} \\
U_i(a_i^\star, a_{-i}^\star) =  U_{-i}(a_i^\star, a_{-i}^\star) =0 &
\quad \text{if} \quad \alpha_i = \alpha_{-i}. \label{eq:stage2}
\end{align}
\end{theorem}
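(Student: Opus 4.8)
The plan is to establish Theorem~\ref{thm:nash_stage} in three stages. First I would show that no Nash equilibrium (NE) of the stage game uses the BWH action, so that the problem reduces to the pure ``FAW game'' in which each Pool$_i$ only chooses an infiltration level $f_i\in[0,\alpha_i]$ with payoff $U^{FF}$. Second, within the FAW game I would rule out the no-attack action as a best response and argue that the best-response fixed point is unique with both coordinates strictly positive, which yields \eqref{eq:stage_result}. Third, I would evaluate the equilibrium payoffs through a reward-conservation identity to obtain \eqref{eq:stage1}--\eqref{eq:stage2}.

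\textbf{Stage 1 (FAW strictly dominates BWH).} I would prove that, for Pool$_i$ and \emph{any} fixed action of Pool$_{-i}$, replacing a BWH attack of infiltration $x$ by a FAW attack of the same infiltration $x$ strictly increases $U_i$. Operationally this is clear: the FAW action does everything the BWH action does and, in addition, in every round in which an external honest miner would win, it publishes the withheld FPoW through the victim, which under $c=1$ lets the victim win that round and lets Pool$_i$ collect a positive share of it --- this can only raise Pool$_i$'s reward. To make it formal I would compare $U^{FB}(x,b_{-i})$ against $U^{BB}(x,b_{-i})$ (opponent playing BWH) and $U^{FF}(x,f_{-i})$ against $U^{BF}(x,f_{-i})$ (opponent playing FAW): subtracting the corresponding pair of payoff equations and eliminating the recursive reward-sharing terms leaves a $2\times2$ linear system whose solution is a nonnegative multiple of the fork bonus, which is proportional to $(1-\alpha_i-\alpha_{-i})\,x$ and hence strictly positive whenever there is at least one external/solo miner (if $1-\alpha_i-\alpha_{-i}=0$ the two attacks coincide and one simply keeps the FAW representative). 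Consequently BWH is never played, no heterogeneous or BWH--BWH profile is an NE, and the NE of the stage game coincide with the NE of the FAW game. I expect this stage to be the crux, precisely because switching to FAW also raises the victim's payoff, which re-enters the attacker's reward via the reward-sharing term; the $2\times2$ elimination is what shows the net effect is still strictly positive.

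\textbf{Stage 2 (the FAW game).} Solving \eqref{eq:faw_payoff} together with its image under $i\leftrightarrow-i$ (i.e.\ $f_i\leftrightarrow f_{-i}$, $\alpha_i\leftrightarrow\alpha_{-i}$) as a $2\times2$ linear system in $\big(U^{FF}(f_i,f_{-i}),\,U^{FF}(f_{-i},f_i)\big)$ gives an explicit closed form for $U^{FF}$. From it I would verify that $\partial U^{FF}(f_i,f_{-i})/\partial f_i\big|_{f_i=0}>0$ for every $f_{-i}\in[0,\alpha_{-i}]$: the loss of honest-mining power is outweighed by the reward collected from the victim plus the fork bonus (this is the ``FAW is profitable at the margin'' fact of~\cite{kwon2017selfish}). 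Hence $f_i=0$ is never a best response, which excludes the no-attack profile and every mixed FAW/no-attack profile and forces $f_i^\star,f_{-i}^\star>0$. Writing the first-order conditions $\partial U^{FF}(f_i,f_{-i})/\partial f_i=0$ as two best-response curves and substituting one into the other reduces the equilibrium to a single polynomial equation, which is shown to have a unique root in the feasible region; this gives existence and uniqueness of the NE and completes the FAW-game equilibrium analysis of~\cite{kwon2017selfish}. Together these establish \eqref{eq:stage_result}; the uniqueness of the best-response fixed point is the secondary technical point here.

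\textbf{Stage 3 (equilibrium payoffs).} The main tool is the identity
\[
\alpha_i\,U^{FF}(f_i,f_{-i})+\alpha_{-i}\,U^{FF}(f_{-i},f_i)=0
\]
for all feasible $f_i,f_{-i}$, which I would prove by clearing the $\alpha_i+f_{-i}$ and $\alpha_{-i}+f_i$ denominators in \eqref{eq:faw_payoff} and its $i\leftrightarrow-i$ image and adding the two: the recursive terms cancel and the honest-mining together with the fork-bonus terms simplify to exactly $\alpha_i+\alpha_{-i}$, which after subtracting the constants gives $0$. (Intuitively, under $c=1$ the two pools' total reward equals their total power, so the external miners are, in net, unaffected.) Applying the identity at the NE: when $\alpha_i=\alpha_{-i}$, uniqueness forces the symmetric profile $f_i^\star=f_{-i}^\star$, hence $U_i(a^\star)=U_{-i}(a^\star)$, and combined with the identity this yields $U_i(a^\star)=U_{-i}(a^\star)=0$, i.e.\ \eqref{eq:stage2}; when $\alpha_i>\alpha_{-i}$ the identity shows the two equilibrium payoffs are nonzero with opposite signs, and that it is the larger pool whose payoff is positive follows from monotonicity of the equilibrium payoff in a pool's own power together with the boundary value $0$ at $\alpha_i=\alpha_{-i}$ --- which is again the content of the FAW-game equilibrium of~\cite{kwon2017selfish}. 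This gives \eqref{eq:stage1}, and the sign pattern of the BWH--BWH~\cite{eyal2015miner} and heterogeneous cases (needed only to confirm they are dominated in Stage~1) is obtained by the analogous computation.
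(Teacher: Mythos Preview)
Your proposal is correct and, at the level of the key reduction, matches the paper exactly: the paper's Appendix~A.1 also eliminates any profile with a positive BWH component by the same ``switch $b_i$ to $f_i=b_i$'' dominance argument, then invokes a lemma stating that a profile is a Nash equilibrium of the FAW--BWH stage game if and only if its FAW-only restriction is a Nash equilibrium of the pure FAW game, and finally cites \cite{kwon2017selfish} for existence, uniqueness, and the payoff signs of the FAW-game equilibrium.

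Where you genuinely add content is in Stages~2--3, which the paper simply outsources to \cite{kwon2017selfish}. Your reward-conservation identity $\alpha_i\,U^{FF}(f_i,f_{-i})+\alpha_{-i}\,U^{FF}(f_{-i},f_i)=0$ is correct: clearing the $(\alpha_i+f_{-i})$ and $(\alpha_{-i}+f_i)$ denominators in \eqref{eq:faw_payoff} and its image and adding, the recursive terms cancel and the remaining direct-reward terms sum algebraically to $\alpha_i+\alpha_{-i}$ (the cross terms collapse via $(1-f_i)(1-f_{-i})=1-f_i-f_{-i}+f_if_{-i}$). This gives \eqref{eq:stage2} immediately from symmetry plus uniqueness, and reduces \eqref{eq:stage1} to showing one of the two payoffs is nonzero with the right sign, for which you still fall back on a monotonicity fact from \cite{kwon2017selfish}. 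So the identity is a cleaner bookkeeping device than a full replacement for the citation; the paper's approach is shorter (one citation), while yours makes the zero-sum structure of the two-pool FAW interaction under $c=1$ explicit and transparent. One small caution: your intuitive gloss ``the external miners are, in net, unaffected'' is correct but counterintuitive, since FAW with $c=1$ does orphan external blocks --- the identity works because the infiltration powers' withholding simultaneously depresses the pools' own effective mining, and under $c=1$ these effects exactly offset; you may want to state the algebraic verification rather than rely on the intuition.
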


\noindent
(see Appendix~\ref{sec:proof5.1} for our proof of the theorem.) 
Theorem~\ref{thm:nash_stage} states the existence and the uniqueness of the Nash equilibrium, which is technically meaningful in the sense that per-stage equilibrium is predictably interpretable from a mathematical perspective. 
The major messages of Theorem~\ref{thm:nash_stage} are: (i) when both pools are allowed to execute FAW and BWH attacks, at the Nash equilibrium, the two pools execute only FAW attacks (see
\eqref{eq:stage_result}), and (ii) the larger pool always earns an extra reward, whereas the smaller pool always suffers a loss (see \eqref{eq:stage1}). 
However, when they possess the same computational power, no additional reward is provided to both pools (see \eqref{eq:stage2}). 
This is in stark contrast to the previous game where only BWH is allowed~\cite{eyal2015miner}. 
Note that Theorem~\ref{thm:nash_stage} provides the \textit{first} analysis of a scenario in which both BWH and FAW attacks are possible.
Also, the actions at the Nash equilibrium and their resulting payoffs differ markedly from those in classical games such as the prisoner's dilemma.

\subsection{ARS (Adaptive Retaliation Strategies)}
\label{sec:algo}
We now propose strategies that induce cooperation among two pools, i.e., no-attack, which is provably verified in the framework of repeated games. 
In the classical repeated game theory, it is well-known that ``threat of future punishment'' induces cooperation. 
We inherit such a rationale in our study; however, the following key differences are noted: (i) As mentioned in Section~\ref{sec:faw_retaliate}, 
the prisoner's dilemma is played repeatedly in many studies, whereas our stage game significantly differs from the prisoner's dilemma, and (ii) our stage game is also defined for a continuous action space, and thus, in punishing other pools deviating from cooperation, it is critically important to adaptively determine the amount of infiltration power for retaliation. 
As a result, considering the above facts, we should find a \textit{credible retaliation}, which is necessary for inducing cooperation according to Folk Theorem.

\subsubsection{Strategy description}
In this paper, we denote by $(a_i^t, a_{-i}^t)$ the resulting actions of two pools at stage $t$. A given strategy of both pools would produce the sequence of actions $(a_i^t, a_{-i}^t)_{t=0}^\infty.$
We now describe special strategies, named {\em ARS (Adaptive Retaliation Strategies),} which call a subroutine \textbf{Retaliate} of Algorithm~\ref{al:rt}. 
Here \textbf{Retaliate} has infinitely many versions, depending on a parameter $K$ that we will describe in \textbf{Retaliate subroutine} paragraph.
Therefore, we denote by ARS$_K$ a strategy belonging to ARS, and ARS$_K$ is represented in Algorithm~\ref{al:ars}.
When playing ARS$_K$, an internal variable $\stai$ representing the {\em standing} of Pool$_i$ is maintained by each pool; $\stai$ represents whether Pool$_i$ has followed ARS$_K$ well or not at the previous stage. 
We use the notation $\aitars$ to refer to the action when ARS$_K$ is played, in order to differentiate an action $\ait$ from a different strategy. 
Thus, $\ait = \aitars$ when Pool$_i$ plays ARS$_K$.

\begin{figure}[t!]
\centering
\includegraphics[width=\columnwidth]{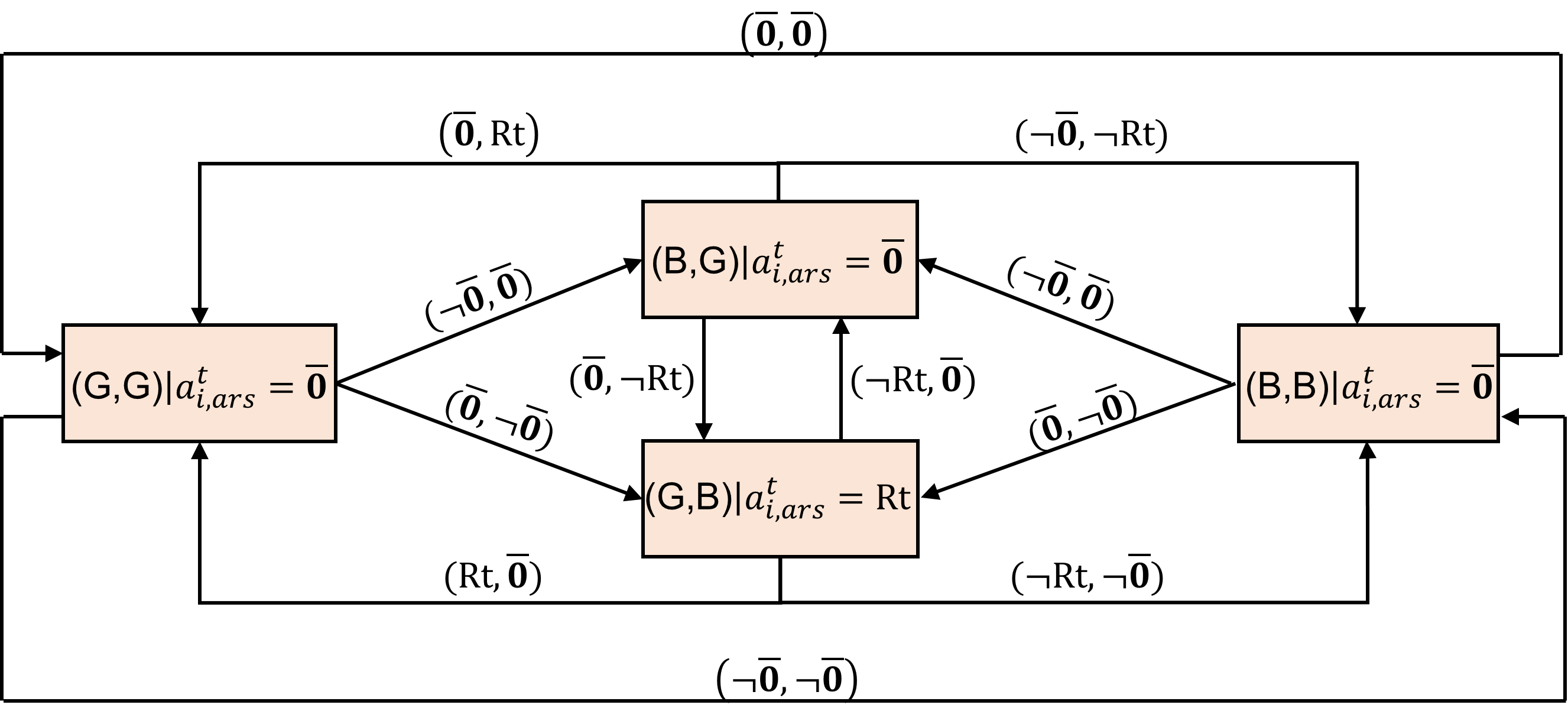}
\caption{\small ARS$_K$ has four states, depending on the two pools' standings. 
In each state box, ({\tt G} or {\tt B},{\tt G} or {\tt B}) represents $(\stai, \stami)$, and Pool$_i$'s action $a_{i,ars}^t$ is presented for the given pools' standings according to ARS$_K$.
The tuple at each edge represents $(a_i^t,a_{-i}^{t})$. 
The action tuple at each edge results in a stage change.}
\label{fig:ars}
\end{figure}

\smallskip\noindent{\bf ARS$_K$: }
In ARS$_K$, Pool$_i$ starts to cooperate with no-attack when $t=0$, and initializes its standing variable $\stai$ to $\good.$ 
At each stage $t$, Pool$_i$ first sets its standing $\stai$, depending on whether its stage $t-1$ action $\aitmi$ matches $\aitmiars$ from ARS$_K$ ({\bf \em S1}). Thus, if Pool$_i$ deviates from what ARS$_K$ does at the stage $t-1$, its standing at the stage $t$ is set to $\bad.$
Then, different standing values of both pools lead to the following combinations: $(\stai, \stami) = ({\tt G,G}),$ $({\tt G,B}), ({\tt B, G}), ({\tt B,B}),$ where ${\tt G} =$ `\good' and ${\tt B} =$ `\bad'.
To help readers better understand ARS$_K$, we present a state diagram of ARS$_K$ in Fig.~\ref{fig:ars}, where four states exist, depending on two pools' standings; 
in each state, we also present Pool$_{i}$'s action $\aitars$ at stage $t$ according to ARS$_K$. 
In the figure, $\text{Rt}$ is the output of {\bf Retaliate} with $K$ (denoted by {\bf Retaliate$_K$} in Algorithm~\ref{al:ars}), and $\neg (*)$ denotes an action value that differs from $(*)$. The action tuple at each edge (which may deviate from ARS$_K$) is what results in a state change, and we did not present the action tuples that do not change a state (e.g., in {\tt (G,G)}, the action $(\bm{\overline{0}}, \bm{\overline{0}})$ does not incur the state change). 

To summarize ARS$_K$, Pool$_{i}$ starts with cooperation, and then retaliates when the opponent deviates from cooperation. However, if, as a response to Pool$_i$'s retaliation, the opponent goes back to cooperation, Pool$_i$ stops retaliating and resumes cooperation with its opponent. 
If the opponent is not back to cooperation (thus not following ARS$_K$) and keeps executing attacks, Pool$_i$, which follows ARS$_K$, also keeps retaliating.
When two pools deviate from ARS$_K$ simultaneously, both of them turn out to cooperate at the next stage.
Retaliation phase is presented in Fig.~\ref{fig:ars}, where Pool$_i$ retaliates against its opponent ({\bf \em S2}) only when the standing is $({\tt G,B})$ (i.e., when Pool$_i$ follows ARS$_K$ but the opponent deviates from ARS$_K$). 
Considering this fact, at least one of $\aitmi$ and $\aimitmiars$, which are two inputs of {\bf Retaliate}, should be $\bm{\overline{0}}$ (see the actions at edges toward $({\tt G,B})$). 
This is because (i) if $a_i^{t-1} \neq \bm{\overline{0}}$, it indicates that the opponent's standing at stage $t-1$ was $\bad,$ and thus Pool$_{-i}$ should not attack according to ARS$_K$, i.e., $a_{-i,ars}^{t-1}$ is $\bm{\overline{0}}$, or (ii) if $a_{-i,ars}^{t-1} \neq \bm{\overline{0}}$, $a_i^{t-1}$ should be $\bm{\overline{0}}$.
After Pool$_i$'s retaliation, the opponent goes back to cooperation as a contrite behavior; the contrite phase is represented as a change from {\tt (G,B)} to {\tt (G,G)}, where two pools cooperate (Fig.~\ref{fig:ars}).
In the {\tt (B,B)} state where two pools deviate from ARS$_K$, both of them cooperate at the next stage, making the transition to {\tt (G,G)}. 

Note that ARS assumes that Pool$_i$ has values of $a^{t-1}_{-i}$, $\bar{a}^{t-1}_{-i}$, $\stami$, and $\alpha_{-i}$ of its opponent; we will discuss how that information is available to Pool$_i$ in Section~\ref{sec:detection}.
Indeed, there exists a strategy, \textit{contrite tit-for-tat} (CTFT)~\cite{boyd1989mistakes}, which uses standings similar to (not the same as) that for ARS and induces cooperation in the iterated prisoner's dilemma.
However, CTFT is studied as a strategy for the iterated prisoner's dilemma with a discrete action space including only two actions, where ARS significantly differs from CTFT.

\begin{figure}[!t]
 \removelatexerror
\scalebox{0.9}{
\begin{minipage}{1.15\linewidth}
\begin{algorithm}[H] 
  \caption{ARS (Adaptive Retaliation Strategies) for two pools.}
 \par\noindent\hrulefill \\
 {\bf ARS$_K$ for each pool $i$}
  \vspace{-0.15cm}
 \par\noindent\hrulefill
  \begin{compactenum}[$~$]
  \item {\bf Start when $t=0$:} $~$\\
  \begin{compactenum}[$~$]
  \item Start the stage game with no-attack, (i.e., $a_i^0 = \bm{\overline{0}}$), and set a variable $\stai=\good.$
  \end{compactenum}
 \item {\bf At each stage $t \geq 1$}: $~$\\ 
   \begin{compactenum}[\hspace{4mm}\bf \em S1.]
\item {\em Set the standing of this stage.} \\
     If ($\aitmi == \aitmiars$),
       $\stai=\good,$ else $\stai=\bad.$
    \item {\em Estimate the infiltration power.} \\    
    If ($\stai == \good$) and ($\stami == \bad$) \\
      \hspace{0.6cm}$\ait = \text{\bf Retaliate$_K$}\left(\alpha_i, \aitmi,\alpha_{-i},\aimitmi,\aimitmiars \right)$ \\
      else \hspace{0.08cm}$\ait = \bm{\overline{0}}.$
    \item Output $\ait.$
\end{compactenum}
\end{compactenum}
  \par\noindent\hrulefill	
  \vspace{0.1in}
  \label{al:ars}
  \end{algorithm}
\end{minipage}}
\end{figure}

\begin{figure}[!t]
 \removelatexerror
\scalebox{0.85}{
\begin{minipage}{1.15\linewidth}
\begin{algorithm}[H] 
  \caption{Retaliate Subroutine where  $M_i^F(\alpha_i,\alpha_{-i})$ and $M_i^B(\alpha_i,\alpha_{-i})$ are given in \eqref{eq:max_faw} and \eqref{eq:max_bwh}, respectively.}
  \par\noindent\hrulefill \\
  \KwIn{{\em Local.} computation power $\alpha_i$, previous action $\aitmi$}
  \KwIn{{\em Opponent.}  computation power $\alpha_{-i},$ previous action $\aimitmi$, 
  previous ARS$_K$ action $\aimitmiars$}
\smallskip
  \KwOut{Pool$_i$'s action $a_i^t$}
  \vspace{-0.15cm}
  \par\noindent\hrulefill
  \begin{compactenum}[\bf \em S.1]
  \item {\bf \em Retaliation with FAW} 
  \begin{compactenum}[\bf \em {S.1}.1]
  \item Construct the infiltration set for FAW $IP_{\text{faw}}  = IP_{\text{faw}} (a_i^{t-1},
a_{-i}^{t-1}, 
\aimitmiars)$.  If $(IP_{\text{faw}} ==\emptyset)$, goto
{\bf \em S.2}.
\item Find the following two sets $F_1$ and $F_2$ as follows:
  \begin{align}
    \label{eq:f1f2}
    F_1 &= \min\Big\{f_i \in IP_{\text{faw}}  \mid U_i(a^{t-1}_i,a^{t-1}_{-
          i})-\cr 
      &\hspace{1.2cm} U_i(a^{t-1}_i,\aimitmiars) \geq U_{- i}((f_i,0), \bm{\overline{0}})\Big\}, \cr
          F_2 & = \argmin_{f_i\in IP_{\text{faw}}} |f_i-M_{i}^F(\alpha_i,\alpha_{-i})|,
  \end{align}
        \item Compute the infiltration power $f_i$ for retaliating with FAW as
$f_i  = \min\{F_1\cup F_2\},$  and  set $\ait = (f_i,0).$ Goto
{\bf \em S.3}. 
  \end{compactenum}

\item {\bf \em Retaliation with BWH} 
  \begin{compactenum}[\bf \em {S.2}.1]
  \item Construct the infiltration set for BWH $IP_{\text{bwh}}  = IP_{\text{bwh}} (a_i^{t-1},
a_{-i}^{t-1}, \aimitmiars).$
\item Find the following two sets $F_1$ and $F_2$ as follows:
  \begin{align}
    \label{eq:b1b2}
    B_1 &= \min\Big\{b_i \in IP_{\text{bwh}}  \mid U_i(a^{t-1}_i,a^{t-1}_{-
          i})-\cr 
      &\hspace{1.2cm} U_i(a^{t-1}_i,\aimitmiars) \geq U_{- i}((0,b_i), \bm{\overline{0}})\Big\}, \cr
          B_2 & = \argmin_{b_i\in IP_{\text{bwh}}} |b_i-M_{i}^B(\alpha_{i},\alpha_{-i})|,
  \end{align}
        \item Compute the infiltration power $b_i$ for retaliating with BWH as
$b_i  = \min\{B_1\cup B_2\},$  and  set $\ait = (0,b_i).$ Goto
{\bf \em S.3}. 
  \end{compactenum}

\item {\bf \em Terminate.} Output $\ait.$
   \end{compactenum}    
 
  \par\noindent\hrulefill	
  \vspace{0.1in}
  \label{al:rt}
\end{algorithm}
\end{minipage}}
\end{figure}

\smallskip
\noindent{\bf Retaliate subroutine:} 
Prior to explaining {\bf Retaliate}, we first introduce the notion of an {\em infiltration power candidate set} (or simply, {\em infiltration set}) as follows: for given pools' local and opponent actions $\aitmi$, $\aimitmi,$ and $\aimitmiars,$ we define the infiltration set with respect to either FAW or BWH as the set of Pool$_i$'s infiltration powers that makes Pool$_{-i}$'s attack unprofitable as a retaliating response to Pool$_{-i}$'s deviation from cooperation. 
Formally, Pool$_i$'s infiltration set $IP_{\text{faw}}$ for the FAW attack is given as: 
\begin{equation}
\resizebox{0.9\hsize}{!}{$
\begin{aligned}
  \label{eq:FAW-can}
        IP_{\text{faw}}(\aitmi, \aimitmi, \aimitmiars) := \Big\{f_i|\,U_{- i} (a^{t-1}_i,a^{t-1}_{- i}) + K\cdot U_{- i} ((f_i,0),\bm{\overline{0}}) \cr
<  U_{- i} (a^{t-1}_i,\aimitmiars), 0 \leq f_i \leq \alpha_i \Big\},
\end{aligned}$}
\end{equation}
where $K$ is an arbitrary number in $[0,1)$. As $K$ gets close to 1, the retaliator tries to use FAW attacks as much as possible rather than BWH attacks.
Similarly, we define $IP_{\text{bwh}}$ for the BWH attack as:
\begin{equation}
\resizebox{0.9\hsize}{!}{$
\begin{aligned}
  \label{eq:BWH-can}
        IP_{\text{bwh}}(\aitmi, \aimitmi, \aimitmiars) := \Big\{b_i|\,U_{- i} (a^{t-1}_i,a^{t-1}_{- i}) +U_{- i} ((0,b_i),\bm{\overline{0}}) \cr
< U_{- i} (a^{t-1}_i,\aimitmiars), 0 \leq b_i \leq \alpha_i \Big\}.
\end{aligned}$}
\end{equation}

\noindent 
The main goal of {\bf Retaliate} is to determine which attack to perform and how much infiltration power is needed to retaliate against the deviating opponent while maximizing the retaliator's (long-term) payoff. 
Thus, the crux of {\bf Retaliate} is to strike a good balance between retaliation and selfishness.
To this end, we first prioritize FAW over BWH, simply because the FAW attack is known to be more profitable than the BWH attack~\cite{kwon2017selfish} (see {\bf \em S.1} and {\bf \em S.2}, where {\bf \em S.1} is first attempt). 
We henceforth focus on the steps for retaliation with FAW ({\bf \em S.1}), which is quite similar to that with BWH, where we first construct the FAW-infiltration set $IP_{\text{faw}}$ as in \eqref{eq:FAW-can}. 
In fact, it is possible for $IP_{\text{faw}}$ to be empty,
and this occurs when the FAW attack has no effect of retaliation, in which case the retaliation with BWH is then tried ({\bf \em S.2}). 
Note that the BWH-infiltration set ({\bf \em  S.2.1}) is provably non-empty. 
Intuitively, this is because BWH is known to have more strength in damaging the opponent more severely~\cite{kwon2017selfish}. 
In Appendix~\ref{sec:nonempty}, we prove the non-emptiness of $IP_{\text{bwh}}$.

Next, in balancing between retaliation and selfishness, we construct 
two filtered sets of infiltration powers, $F_1$ and $F_2$ ({\bf \em  S.1.2}), which consider retaliation and selfishness, respectively. 
In $F_1,$ Pool$_i$ following ARS$_K$ computes the set of infiltration powers in proportion to the degree of Pool$_{-i}$'s attack, i.e., generating the same amount of loss to Pool$_{-i}$ as that to Pool$_{i}$ from Pool$_{-i}$'s attack, which we call "equal retaliation". 
In $F_2,$ the set of infiltration powers is constructed so as to maximize Pool$_i$'s payoff for the FAW attack, 
where $f_i \in F_2$ is chosen to be closest to the infiltration power $M_{i}^F(\alpha_1,\alpha_2)$ maximizing Pool$_i$'s payoff $U_i$, expressed as:
\begin{equation}
\resizebox{0.89\hsize}{!}{$
M_{i}^F(\alpha_1,\alpha_2) = 
\frac{-1+\sqrt{1-\alpha_i(1+\alpha_{- i})-(1-\alpha_i-\alpha_{- i})
(1+\alpha_{- i}-\alpha_i\alpha_{- i})}}{\alpha_i(1-\alpha_i-\alpha_{- i})},\label{eq:max_faw}$}
\end{equation}
which is obtained from \cite{kwon2017selfish}.
Finally, in $\textbf{\em S.1.3}$, Pool$_i$ decides to retaliate by deciding between equal retaliation ($F_1$) and selfishness ($F_2$).
{\bf Retaliate} chooses the minimum infiltration power for FAW in $F_1$ and $F_2$, which is the minimum amount of power to achieve retaliation while considering its own payoff. 
Therefore, if Pool$_i$ must use a significant portion of its computational power to infiltrate for equal retaliation, it instead maximizes its payoff rather than pursuing equal retaliation.
Sets $B_1$ and $B_2$, which are similar to $F_1$ and $F_2$ for the FAW attack, are constructed for the BWH attack, where $M_{i}^B(\alpha_1,\alpha_2)$ is derived from \cite{eyal2015miner}, given as:
\begin{equation}
\vspace{-2mm}
\resizebox{0.7\hsize}{!}{$
M_{i}^B(\alpha_1,\alpha_2) =
\frac{-\alpha_{- i}(1-\alpha_i)+\sqrt{-\alpha_{- i}^2(-1+\alpha_i+\alpha_i\alpha_{- i})}}{1-\alpha_i-\alpha_{- i}}$}
\label{eq:max_bwh}
\end{equation}
\vspace{-1mm}

\noindent
Note that {\bf Retaliate} outputs $\bm{\overline{0}}$ if $0 \in F_1$, in which case Pool$_i$ does not need to retaliate.
This occurs when the opponent did not attack at stage $t-1,$ even though the opponent would retaliate against Pool$_i$ at stage $t-1$ according to ARS$_K$.
In this case, because the opponent did not follow ARS$_K$, the opponent's standing would be $\bad$, where Pool$_i$ would call {\bf Retaliate}.
However, {\bf Retaliate} would usually output $\bm{\overline{0}}$ in this case, and Pool$_i$ would not attack for retaliation because $IP_{\text{faw}}$ and $F_1$ would include 0 in most cases.

\subsubsection{Equilibrium Analysis}
Next, we prove that ARS is a subgame perfect Nash equilibrium for a sufficiently large $\delta$. 

\begin{theorem}
\label{thm:main}
There exists a function $F_K(\alpha_1,\alpha_2)$ such that, for all discount factor $\delta\geq F_K(\alpha_1,\alpha_2),$ the two-pool strategy vector (ARS$_K$, ARS$_K$) is a subgame perfect Nash equilibrium. 
Function $F_K(\alpha_1,\alpha_2)$ is always less than 1, and $F_K(\alpha_1,\alpha_2)$ is an increasing function of $K$ and $|\alpha_1-\alpha_2|$ for given $\alpha_1.$ 
Moreover, (ARS$_K$, ARS$_{K^\prime}$) is a Nash equilibrium for all $\delta\geq F_{\max(K,K^\prime)}(\alpha_1,\alpha_2).$
\end{theorem}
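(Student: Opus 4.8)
The plan is to invoke the one-shot deviation principle. The action sets are compact and every stage payoff $U^{FF},U^{BB},U^{FB},U^{BF}$ in \eqref{eq:faw_payoff}--\eqref{eq:bwh_faw_payoff} is continuous and bounded, so with $\delta\in(0,1)$ the discounted game is continuous at infinity; hence (ARS$_K$, ARS$_K$) is an SPNE iff neither pool can gain from a one-shot deviation after any history $h^t$. The first reduction is that the play $(\aitars,\aimitmiars)_t$ that (ARS$_K$, ARS$_K$) generates from $h^t$ onward depends on $h^t$ only through the current standings $(\stai,\stami)$ and the stage-$(t-1)$ actions fed into \textbf{Retaliate}; so it suffices to examine each of the four states $({\tt G,G}),({\tt G,B}),({\tt B,G}),({\tt B,B})$ of Fig.~\ref{fig:ars} and, in each, a one-shot deviation by Pool$_i$ (and, since $\alpha_1\neq\alpha_2$ in general, symmetrically by Pool$_{-i}$).

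Second, in each state I would follow the standing transitions after a one-shot deviation at stage $t$ and verify the self-correcting structure asserted around Fig.~\ref{fig:ars}: the deviation sets the deviator's standing to $\bad$, the ARS-conforming opponent retaliates for exactly one stage, and by stage $t+2$ both standings are back to $\good$ and play coincides with the non-deviation path forever after. Consequently the deviation's effect on $\mathcal{U}_i$ collapses to a one-stage gain $g$ at stage $t$ (the stage-$t$ payoff of the deviating action minus that of $\aitars$, holding the opponent's stage-$t$ action fixed; if $g\leq 0$ the deviation is harmless and may be ignored) plus a discounted one-stage loss $\delta\,\ell$ with $\ell=U_i(\bm{\overline{0}},\text{Rt})<0$, where $\text{Rt}$ is the opponent's retaliation at $t+1$. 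Thus the deviation fails to pay exactly when $\delta\geq g/|\ell|$, and it remains to bound $g/|\ell|$ away from $1$ uniformly over states and deviations.

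Third --- the crux --- I would extract this bound from the design of \textbf{Retaliate}. The retaliation power lies in $F_1\cup F_2\subseteq IP_{\text{faw}}$ when $IP_{\text{faw}}\neq\emptyset$, and membership in $IP_{\text{faw}}$ in \eqref{eq:FAW-can} rearranges precisely to $K\,|\ell|>g$, giving $g/|\ell|<K<1$; when $IP_{\text{faw}}=\emptyset$ the retaliation instead uses a power in $B_1\cup B_2\subseteq IP_{\text{bwh}}$, which is non-empty by Appendix~\ref{sec:nonempty}, and \eqref{eq:BWH-can} together with the equal-retaliation constraint defining $B_1$ (and the fact that an attack strictly destroys value, so the victim's loss exceeds the attacker's gain by a factor $1+\varepsilon(\alpha_1,\alpha_2)>1$) forces $|\ell|\geq(1+\varepsilon(\alpha_1,\alpha_2))\,g$. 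Taking the supremum of $g/|\ell|$ over the four states and the continuum of deviating actions --- finite and strictly below $1$ by continuity of the payoff functions, compactness of the action sets, and the strict inequalities just noted --- \emph{defines} $F_K(\alpha_1,\alpha_2)<1$, and for $\delta\geq F_K(\alpha_1,\alpha_2)$ every one-shot deviation is unprofitable, which is the SPNE claim. I expect the \textbf{main obstacle} to be this last bookkeeping: making the supremum uniform over a continuum of actions and all four states, and in particular pinning down a quantitative margin $\varepsilon>0$ in the BWH-fallback branch (a strengthening of the mere non-emptiness of $IP_{\text{bwh}}$).

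Fourth, the comparative statics and the mixed-parameter statement. $F_K$ is (weakly) increasing in $K$ because enlarging $K$ only relaxes \eqref{eq:FAW-can} (the term $K\cdot U_{-i}(\cdot)$ is negative, so a larger $K$ permits a weaker, less damaging retaliation), which can only raise the threshold $\delta$ at which deviation stops paying; $F_K$ is increasing in $|\alpha_1-\alpha_2|$ for fixed $\alpha_1$ because, by Theorem~\ref{thm:nash_stage} and monotonicity of \eqref{eq:faw_payoff}--\eqref{eq:bwh_faw_payoff} and of $M_i^F,M_i^B$ in the $\alpha$'s, the binding deviation --- the larger pool attacking the smaller one --- yields a gain $g$ that grows with the gap while the smaller pool's retaliation budget, capped by its own (shrinking) computational power, bounds $|\ell|$, so $g/|\ell|$ grows. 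Finally, (ARS$_K$, ARS$_{K'}$) need only be shown to be a Nash equilibrium, so it suffices to rule out profitable deviations from the on-path play (perpetual no-attack): a deviation by the pool running ARS$_K$ is met by the opponent's ARS$_{K'}$ retaliation and is deterred once $\delta\geq F_{K'}(\alpha_1,\alpha_2)$ (decompose any deviation plan into single-stage deviations, each of which the above analysis deters), and symmetrically the pool running ARS$_{K'}$ is deterred once $\delta\geq F_K(\alpha_1,\alpha_2)$; combining and using monotonicity of $F$ in its subscript gives the bound $F_{\max(K,K')}(\alpha_1,\alpha_2)$ (and this is only Nash, not SPNE, since off the cooperative path the two mismatched strategies need not be mutual best responses).
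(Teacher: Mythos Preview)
Your proposal is correct and follows the same architecture as the paper's proof: one-shot deviation principle, reduction to the four standing states, collapse of each deviation to a one-stage gain $g$ versus a discounted one-stage punishment $\delta|\ell|$, the bound $g/|\ell|<K$ read off directly from membership of the retaliation power in $IP_{\text{faw}}$, and a supremum over states and deviations to define $F_K<1$.

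Two execution points differ and are worth noting. First, for the BWH fallback you reach for a quantitative ``value-destruction'' margin $1+\varepsilon$ and (rightly) flag it as the hard step; the paper sidesteps this entirely by a pure compactness argument: the ratio $g/|\ell|$ ranges over a piecewise-continuous image of a compact action set and, by the strict defining inequality of $IP_{\text{bwh}}$, is strictly below $1$ at every point, hence its maximum is strictly below $1$. No explicit $\varepsilon$ is ever produced. Second, for the Nash claim about $(\text{ARS}_K,\text{ARS}_{K'})$, rather than decomposing an arbitrary deviation into single stages (delicate, since the one-shot principle is an SPNE tool, not an NE tool), the paper uses a one-line reduction: both $(\text{ARS}_K,\text{ARS}_{K'})$ and $(\text{ARS}_{K'},\text{ARS}_{K'})$ induce perpetual cooperation with payoff $0$; the latter is already an SPNE for $\delta\geq F_{K'}$, so no strategy $s_1$ beats payoff $0$ against $\text{ARS}_{K'}$, and in particular $\text{ARS}_K$ (which also achieves $0$) is a best response. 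The symmetric argument plus monotonicity of $F$ in its subscript gives $F_{\max(K,K')}$. Finally, your heuristic for monotonicity in $|\alpha_1-\alpha_2|$ (retaliation capped by the smaller pool's power) is intuitively in the right direction but not the paper's argument; the paper instead tracks directly how $U_2(\bm{\overline{0}},a_2^0)$ and $U_2(a_{1,ars}^1,\bm{\overline{0}})$ move with $\alpha_2$ for fixed $\alpha_1$, obtaining that the Pool$_2$-threshold is increasing and the Pool$_1$-threshold decreasing in $\alpha_2$, with equality at $\alpha_1=\alpha_2$.
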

\noindent A proof of Theorem~\ref{thm:main} appears in Appendix~\ref{sec:proofmain}.
In the proof of Theorem~\ref{thm:main}, we show that it is not more profitable for each player to deviate ARS at the start of any subgame when compared to the case where it follows ARS. This implies that ARS is a subgame perfect Nash equilibrium according to \textit{one-time deviation property}. 

If two pools use one of ARS (their strategies 
need to be not necessarily the same), the strategy vector is a Nash equilibrium.
Especially, if two pools use the same strategy, the strategy vector is a subgame perfect Nash equilibrium.
As described in Section~\ref{sec:model_fbgame}, a subgame perfect Nash equilibrium refines a Nash equilibrium by eliminating \textit{non-credible threats}, which is a strategy vector that rational pools are actually unlikely to follow.
In addition, a large value of $\delta$ implies a condition in which pools consider future payoffs significantly, or the probability that pools are patient enough to stay inside the system for a long time. 
Indeed, most pools, including Slush, Eligius and F2Pool, are operated for a long time in practice.
A large value of $\delta$ is also better satisfied when the duration of one stage is short compared to the pools' entire operation period.
In Section~\ref{sec:detection}, we explain that the duration of a stage period can be short, which supports the practical value of our analytical result. 

Indeed, there are many other subgame perfect Nash equilibria (from Folk Theorem in repeated games~\cite{fudenberg2009folk}) in the repeated FAW-BWH game, which trivially include the one that two pools always execute FAW attacks against each other.
From a manager's perspective, the manager would want to increase its pool size while earning extra rewards, until the pool increases to a size that does not seriously threaten the system. 
This is a good reason for the manager to execute the FAW attack.  
Meanwhile, it is unknown whether the subgame perfect Nash equilibria have cooperation between pools because the existence of \textit{credible retaliation} in the repeated FAW-BWH game has not been studied to the best of our knowledge.
Our results imply that cooperation can be stable when ARS is used even though the FAW-BWH game has a certain winner; i.e., a larger pool.
Moreover, ARS includes infinitely many strategies with $K$ in \eqref{eq:FAW-can}.
As such, there are infinitely many ways to achieve cooperation between pools.
In addition, ARS restores cooperation even if FAW and BWH attacks impulsively occur, which is another advantage of ARS. 
For example, if Pool$_i$ impulsively attacks, the opponent would retaliate. 
After that, Pool$_i$ does not attack, being contrite, and two pools achieve the no-attack status. 

\section{Numerical Analysis}
\label{sec:numerical}
In this section, we use a numerical analysis to demonstrate how much infiltration power each pool would use for retaliation in ARS$_K$ in response to the opponent's action. We simulate the repeated FAW-BWH game with varying Pool$_1$ and Pool$_2$'s sizes.
We consider a case in which Pool$_1$ deviates from ARS$_K$ to attack Pool$_2$ while maximizing its payoff $U_1$ during one stage.
As a result, Pool$_2$ would retaliate against Pool$_1$ according to ARS$_K$. 
In this section, we consider a strategy ARS$_{1^-}$, where $K$ is close to 1. 

\begin{figure*}[ht]
\centering{
\subfloat[Pool$_2$'s infiltration ratio for retaliation using the FAW attack.]{
\includegraphics[width=0.25\textwidth]{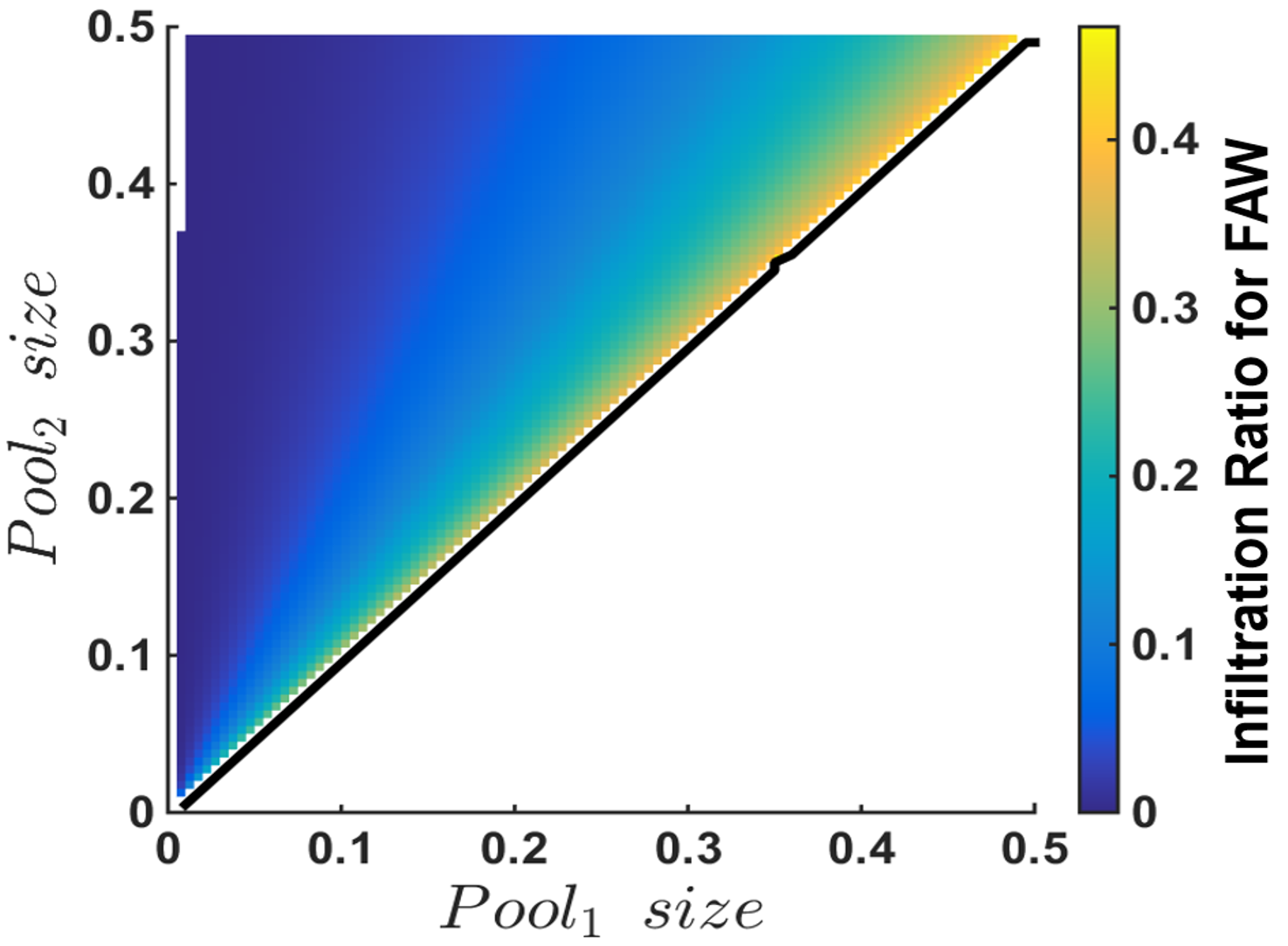}
\label{fig:faw_faw_rt} 
}
\subfloat[Pool$_2$'s infiltration ratio for retaliation using the BWH attack.]{
\includegraphics[width=0.25\textwidth]{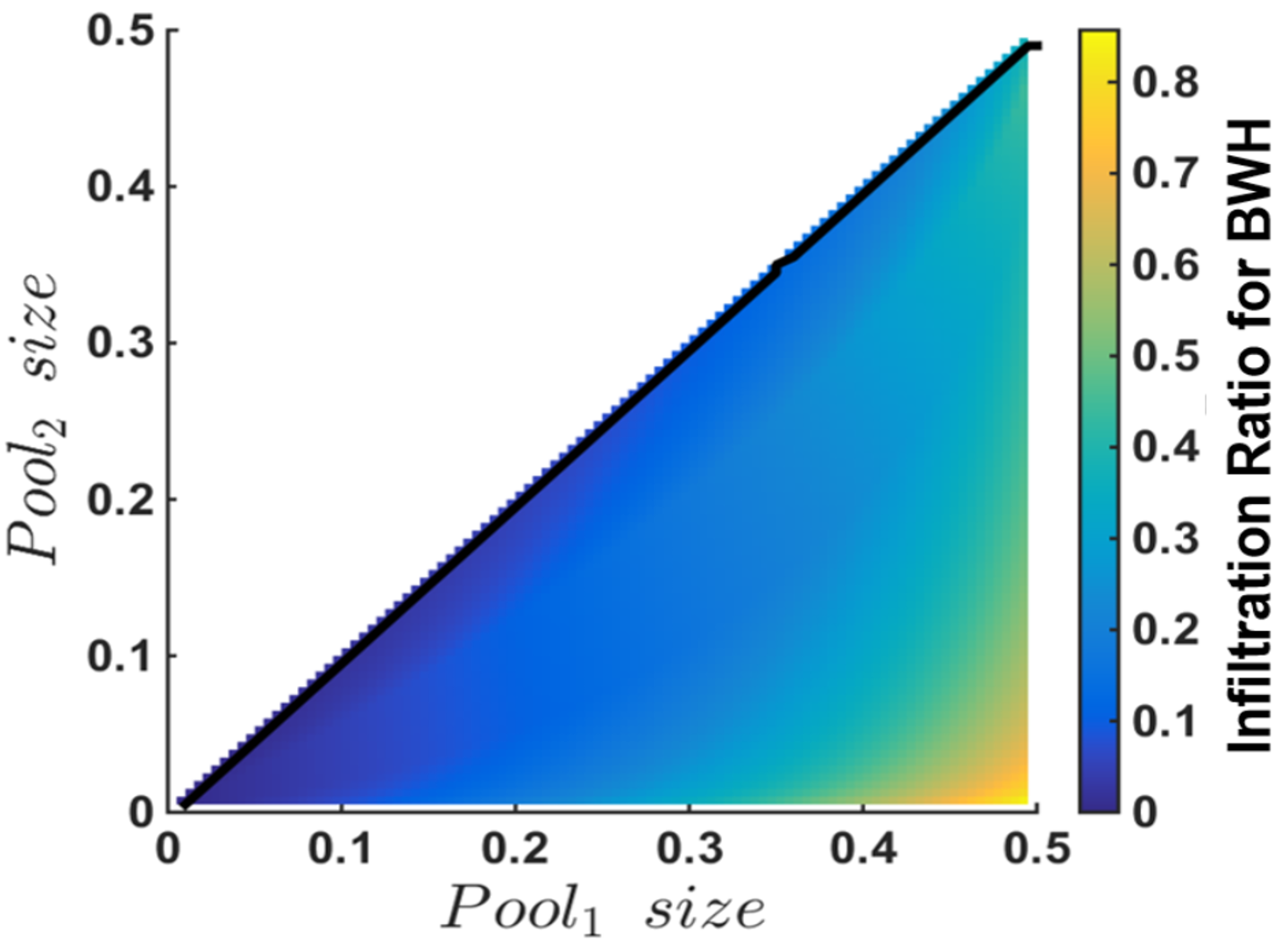}
\label{fig:faw_bwh_rt}
}
\subfloat[Average relative extra reward (\%) of Pool$_1$ for two stages.]{
\includegraphics[width=0.25\textwidth]{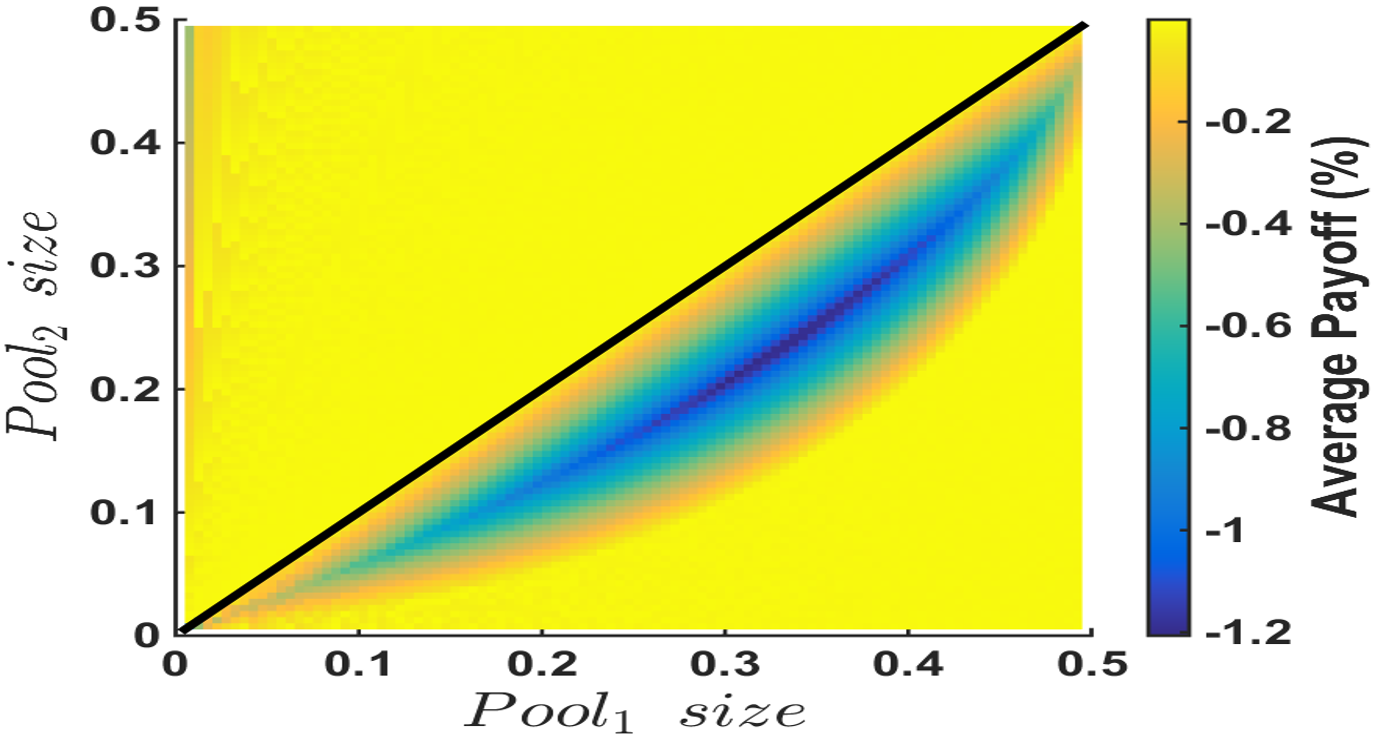}
\label{fig:faw_reward_a}
}
\subfloat[Average relative extra reward (\%) of Pool$_2$ for two stages.]{
\includegraphics[width=0.25\textwidth]{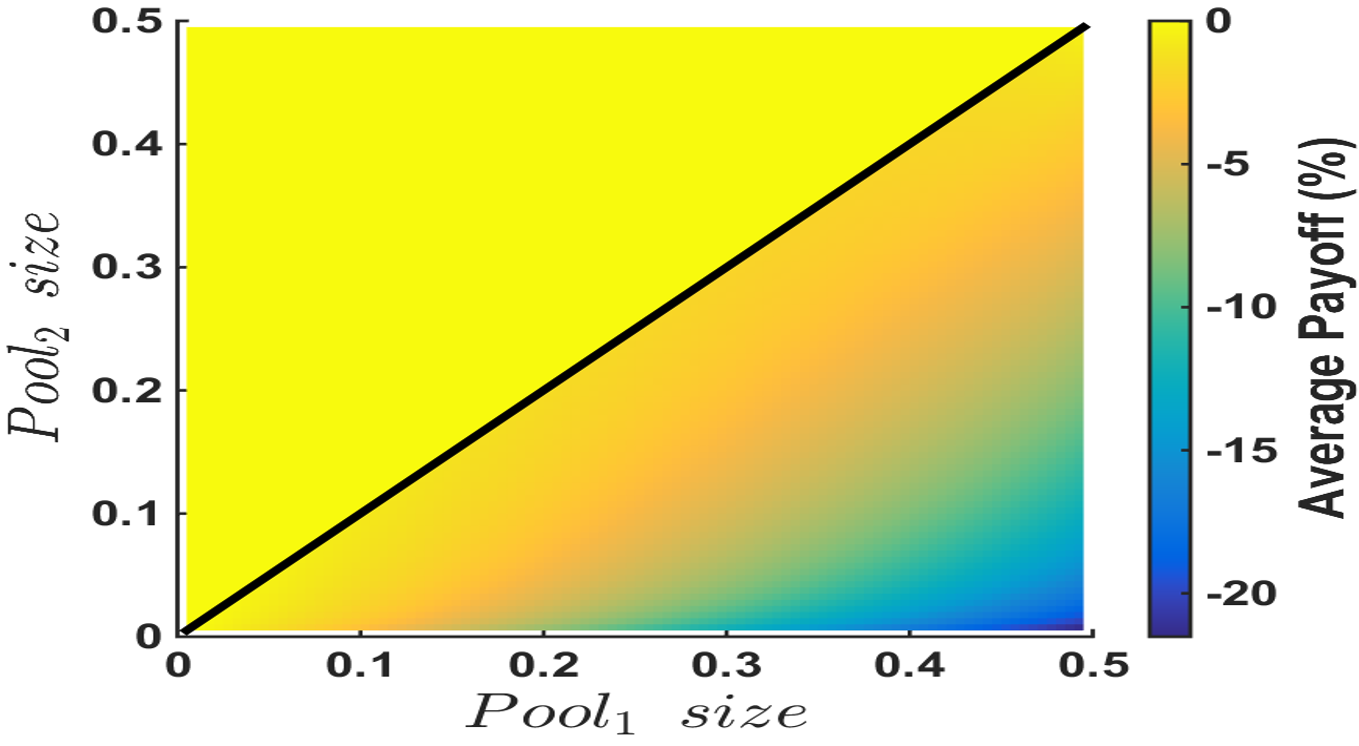}
\label{fig:faw_reward_p}
}} \vspace{-2mm}
\caption{Pool$_1$ optimally executes the FAW attack.}
\label{fig:faw} \vspace{-5mm}
\end{figure*}

\begin{figure*}[ht]
\centering{
\subfloat[Pool$_2$'s infiltration ratio for retaliation using the FAW attack.]{
\includegraphics[width=0.25\textwidth]{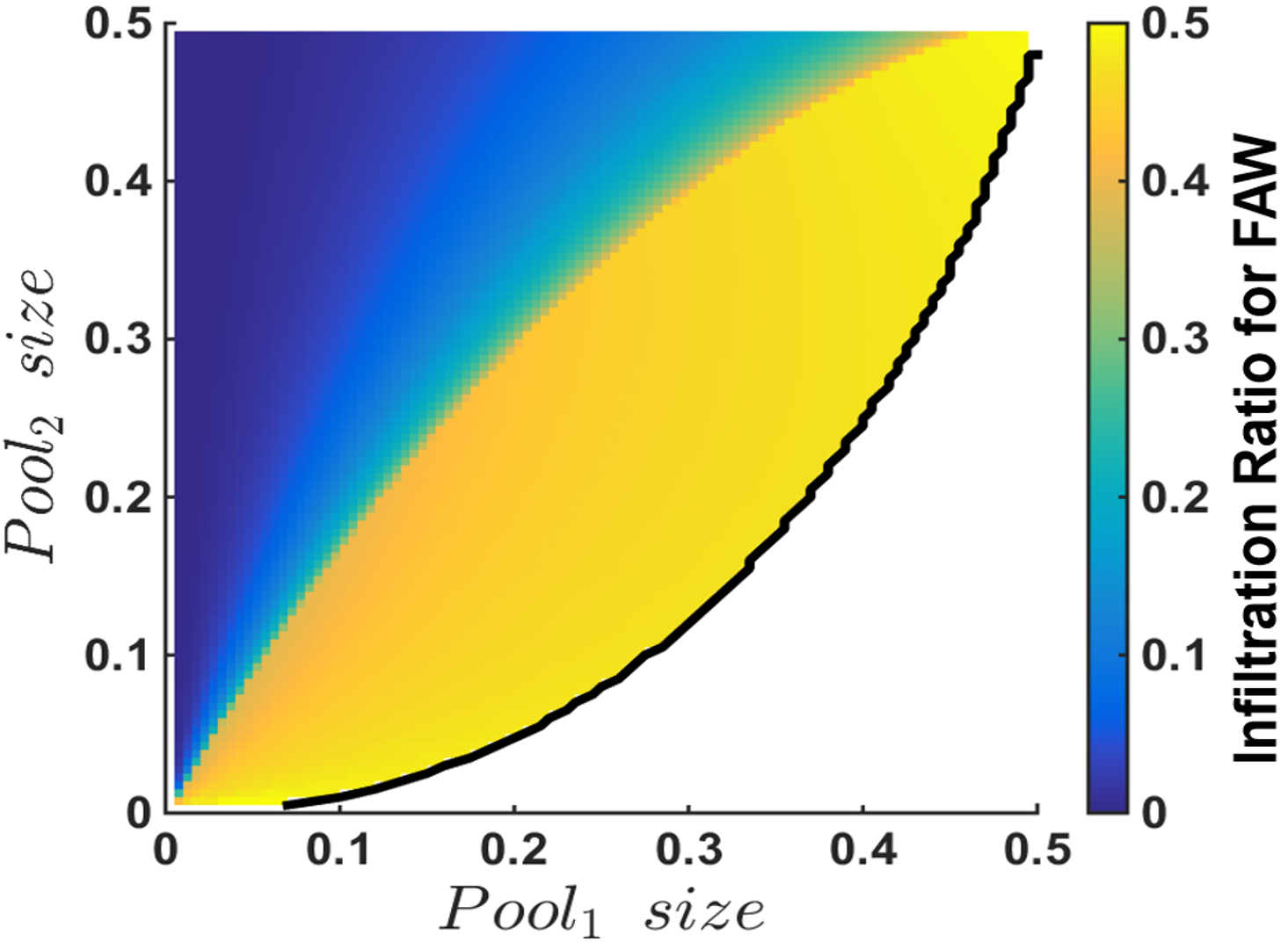}
\label{fig:bwh_faw_rt}
}
\subfloat[Pool$_2$'s infiltration ratio for retaliation using the BWH attack.]{
\includegraphics[width=0.25\textwidth]{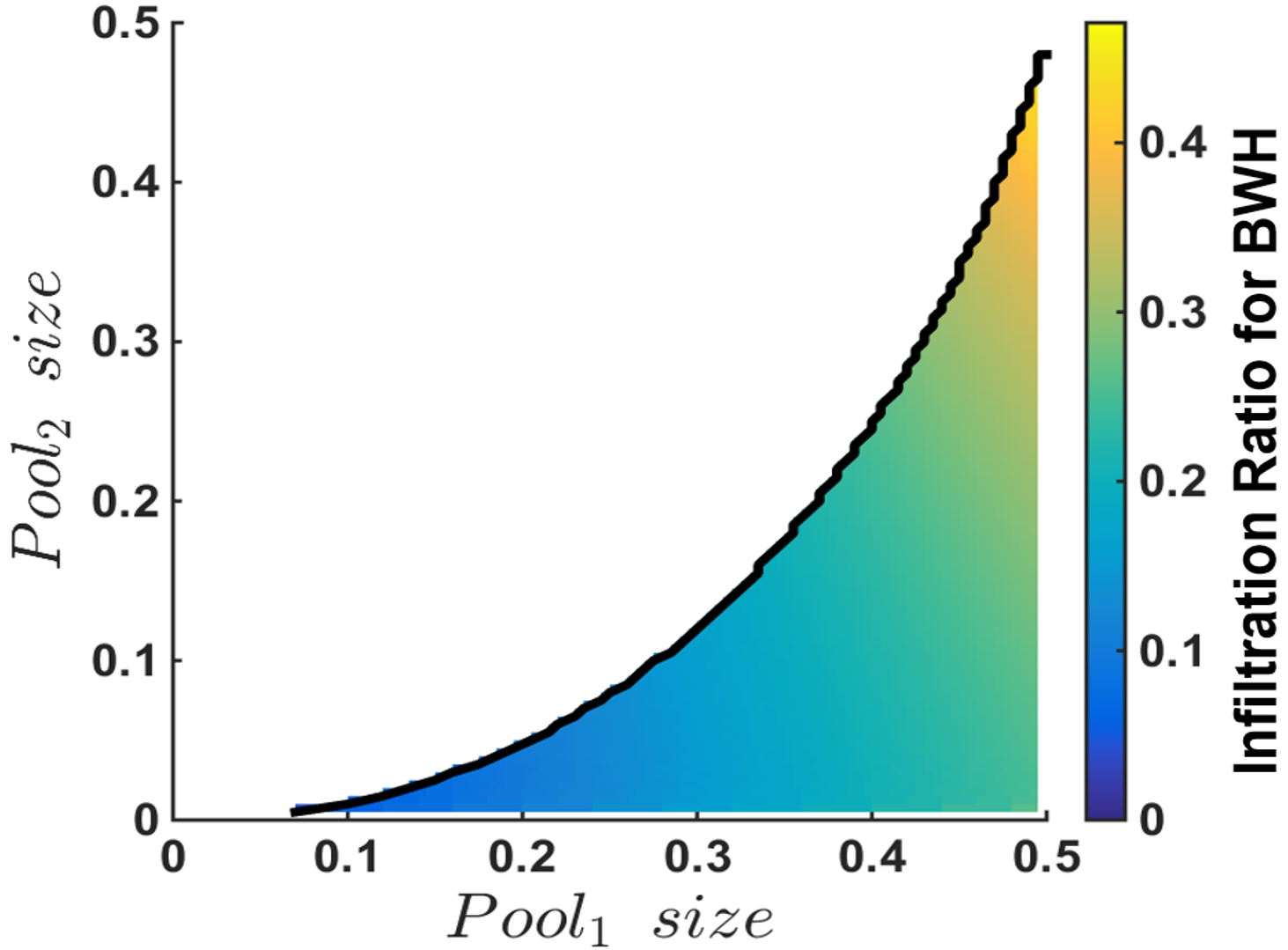}
\label{fig:bwh_bwh_rt}
}
\subfloat[Average relative extra reward (\%) of Pool$_1$ for two stages.]{
\includegraphics[width=0.25\textwidth]{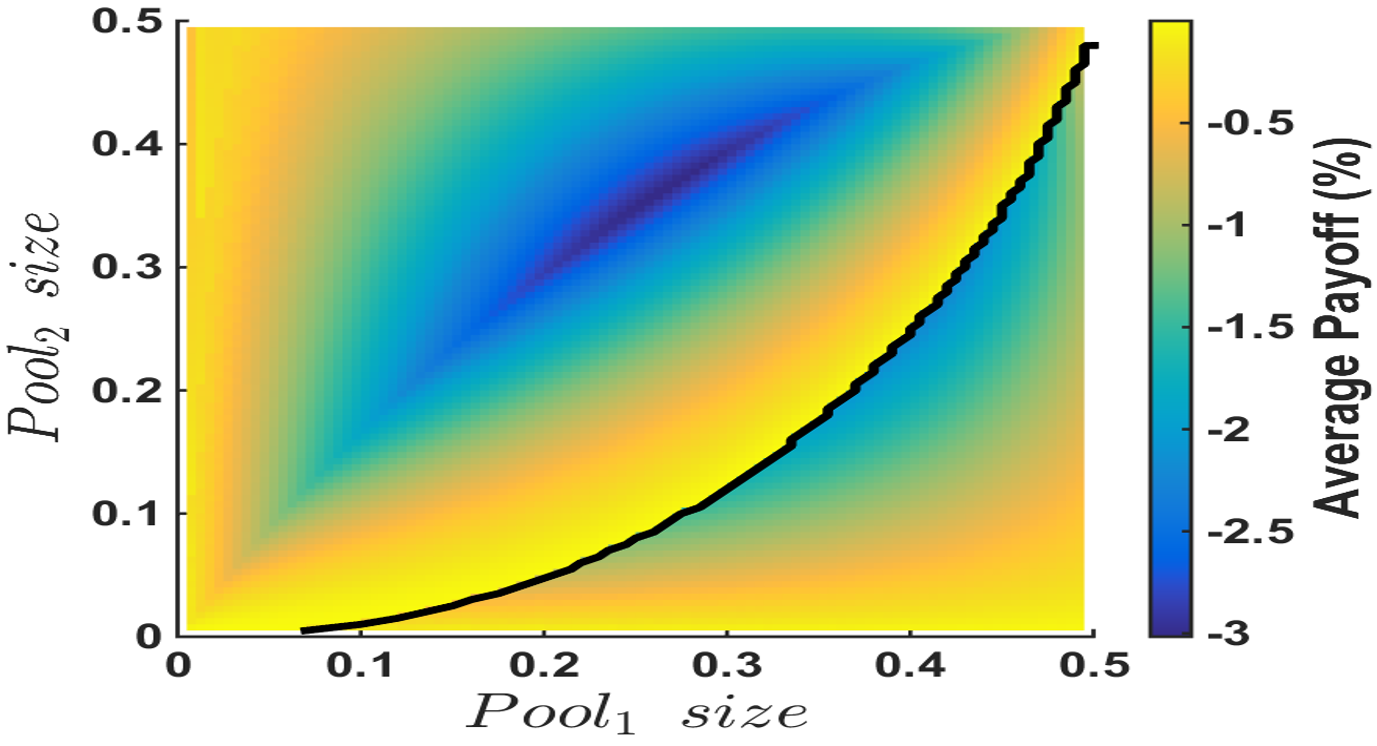}
\label{fig:bwh_reward_a}
}
\subfloat[Average relative extra reward (\%) of Pool$_2$ for two stages.]{
\includegraphics[width=0.25\textwidth]{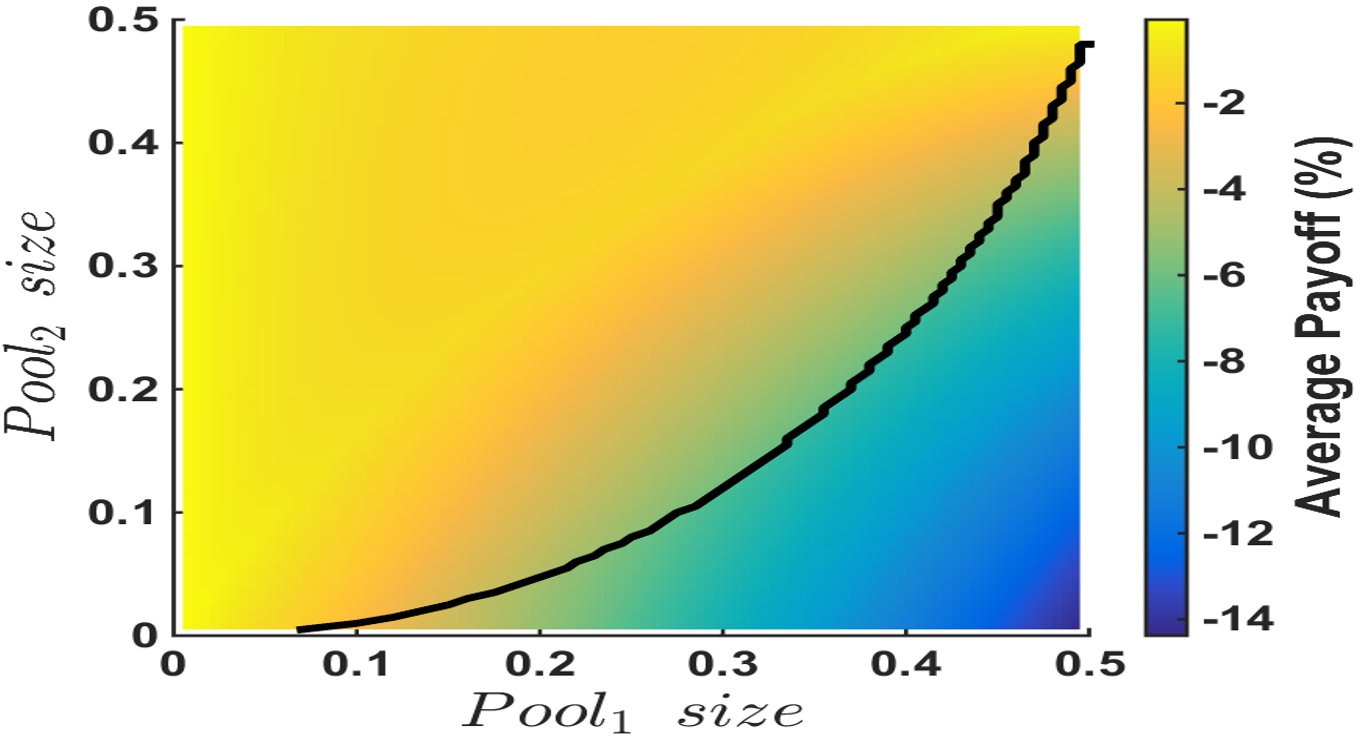}
\label{fig:bwh_reward_p}}}\vspace{-2mm}
\caption{Pool$_1$ optimally executes the BWH attack.} 
\label{fig:bwh} \vspace{-5mm}
\end{figure*}

\begin{figure*}[ht]
\centering{
\subfloat[Pool$_2$'s infiltration ratio for retaliation with the FAW attack against Pool$_1$'s FAW attack.]{
\includegraphics[width=0.25\textwidth]{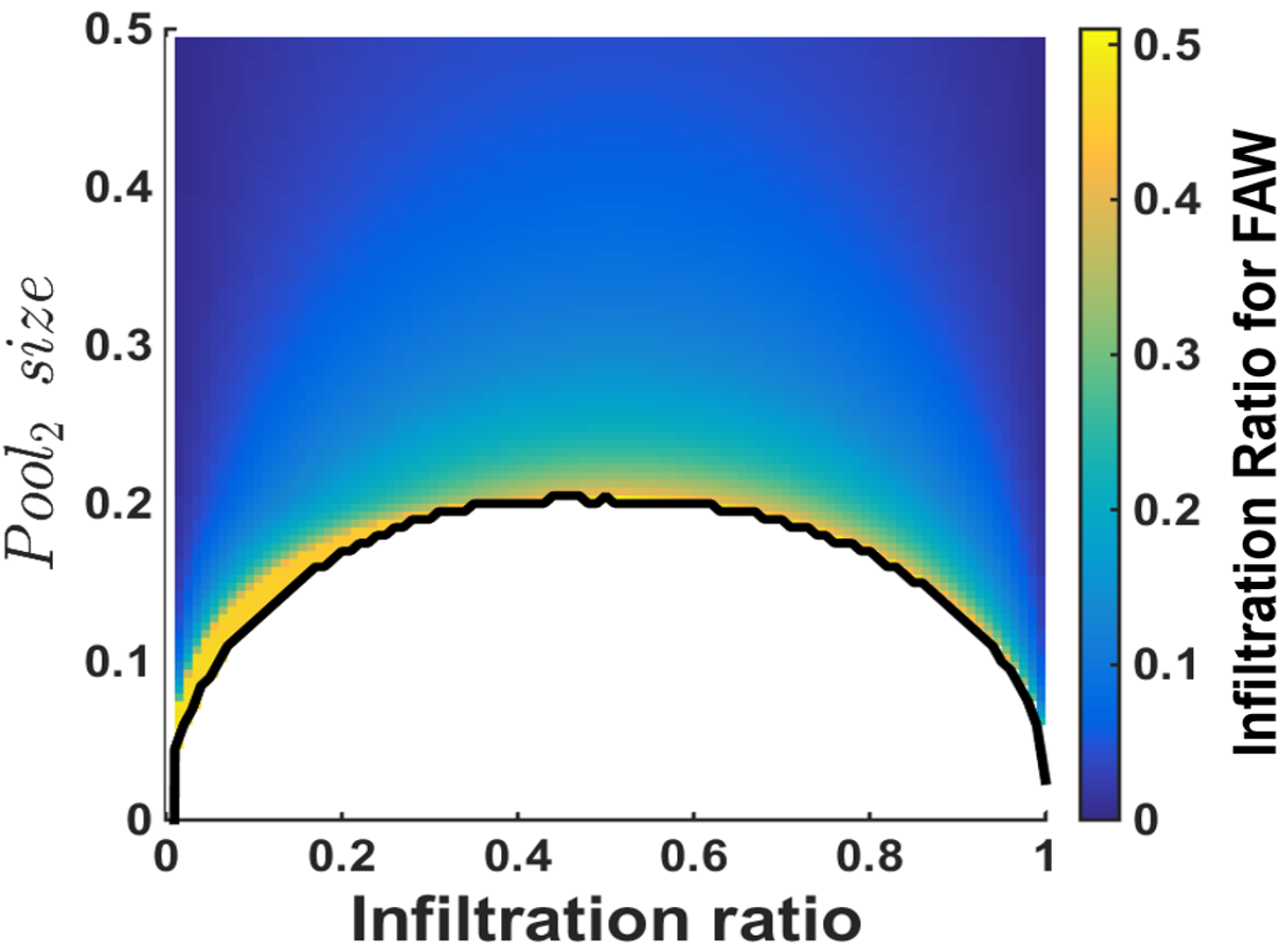}
\label{fig:faw_faw_rt2}
}
\subfloat[Pool$_2$'s infiltration ratio for retaliation with the BWH attack against Pool$_1$'s FAW attack.]{
\includegraphics[width=0.25\textwidth]{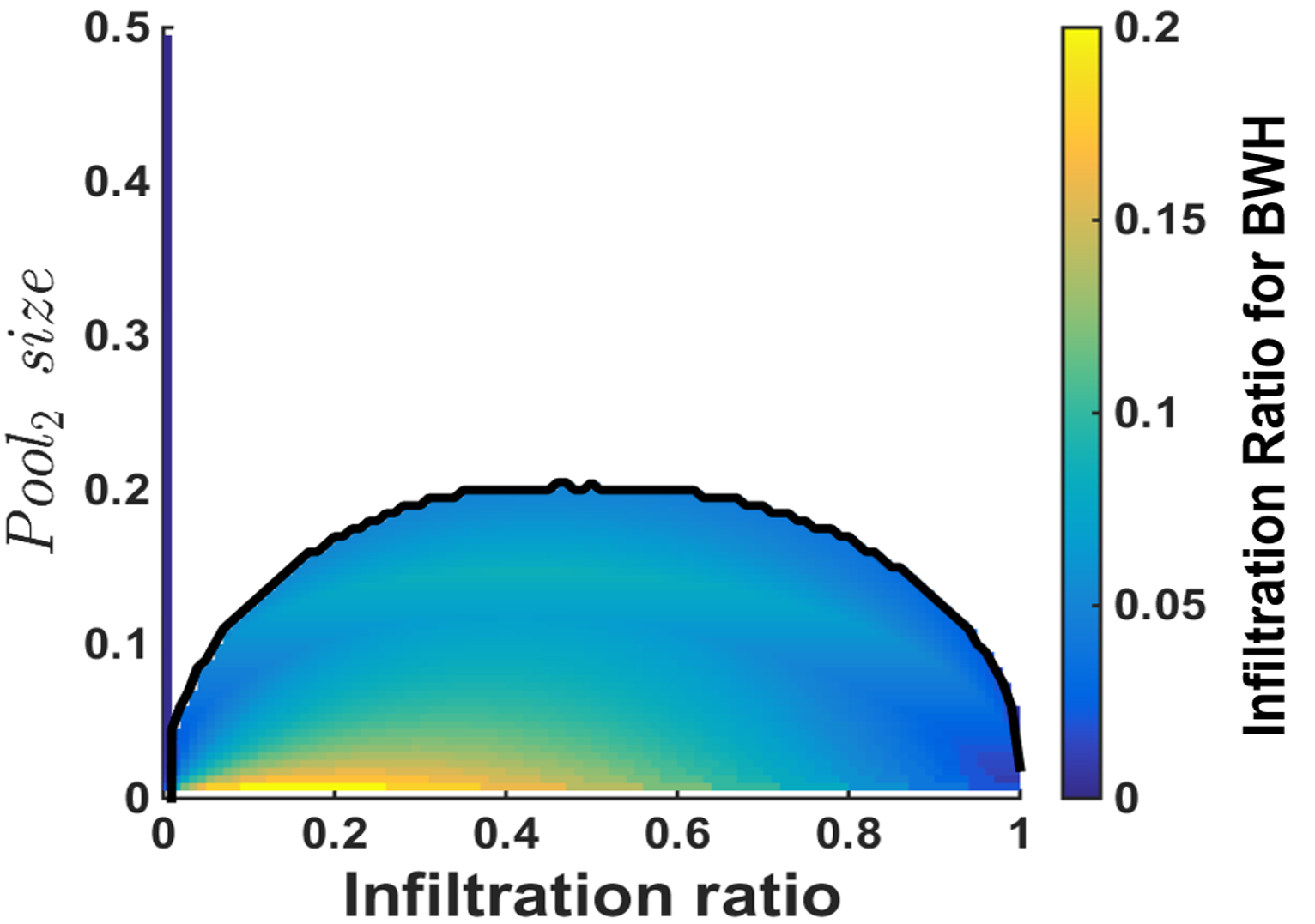}
\label{fig:faw_bwh_rt2}
}
\subfloat[Average relative extra reward (\%) of Pool$_1$ for the two stages 
when Pool$_1$ executes the FAW attack.]{
\includegraphics[width=0.25\textwidth]{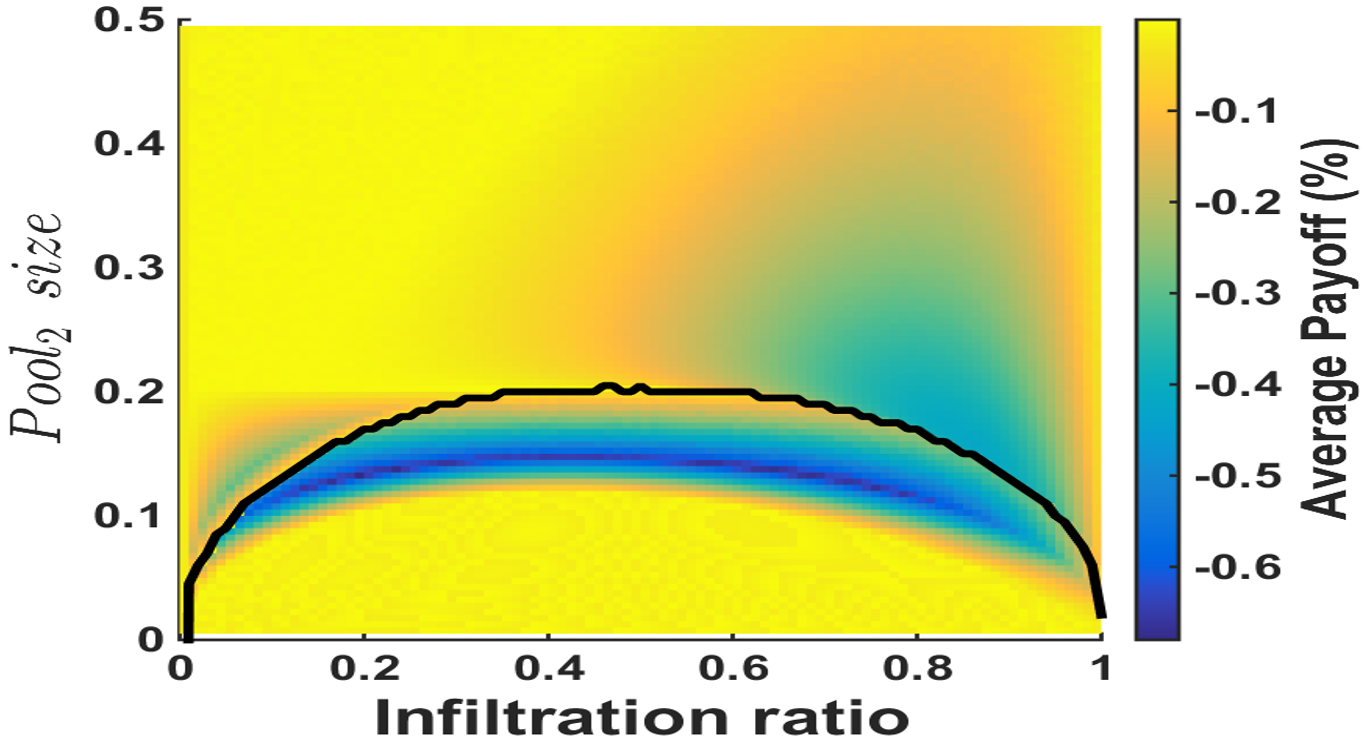}
\label{fig:faw_reward_a2}
}
\subfloat[Average relative extra reward (\%) of Pool$_2$ for the two stages 
when Pool$_1$ executes the FAW attack.]{
\includegraphics[width=0.25\textwidth]{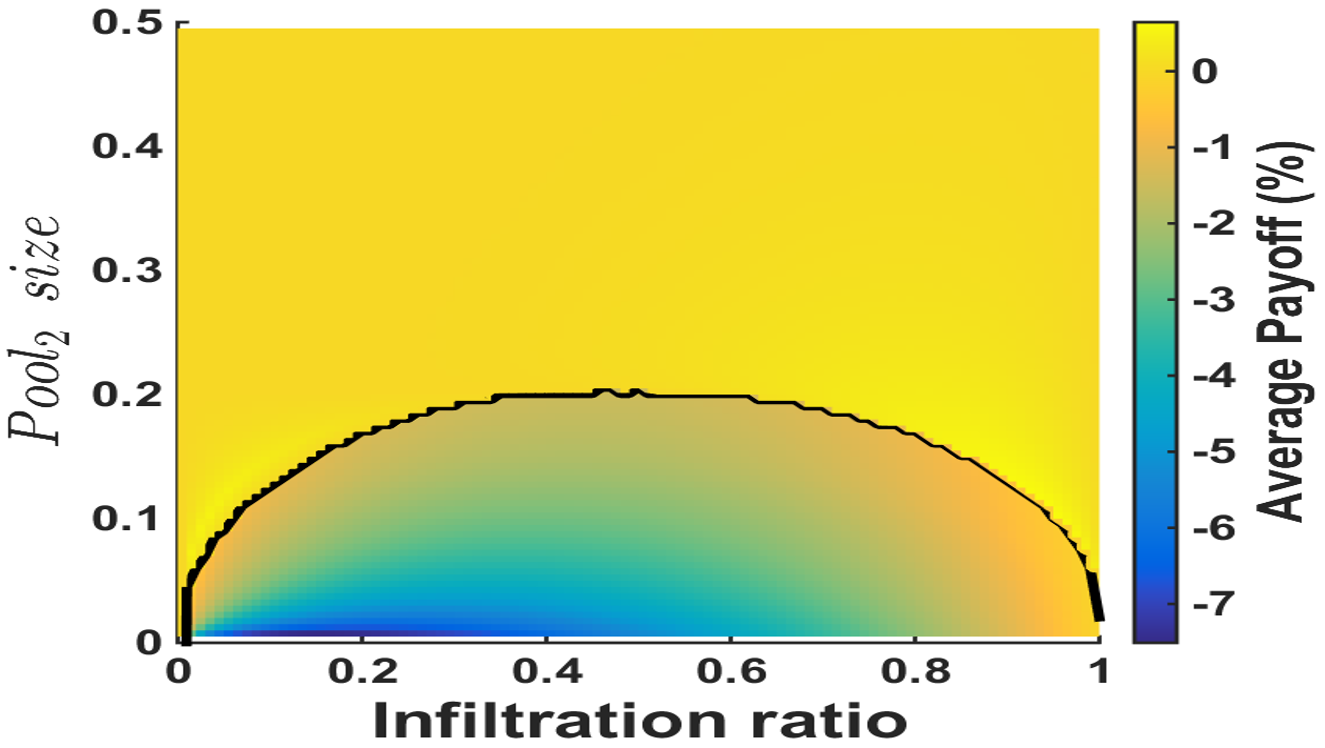}
\label{fig:faw_reward_p2}}}\vspace{-2mm}
\caption{Pool$_1$ with computational power of 20\% executes the FAW attack.}
\label{fig:faw2} \vspace{-5mm}
\end{figure*}

\begin{figure*}
\centering{
\subfloat[Pool$_2$'s infiltration ratio for retaliation using the FAW attack against Pool$_1$'s BWH attack.]{
\includegraphics[width=0.25\textwidth]{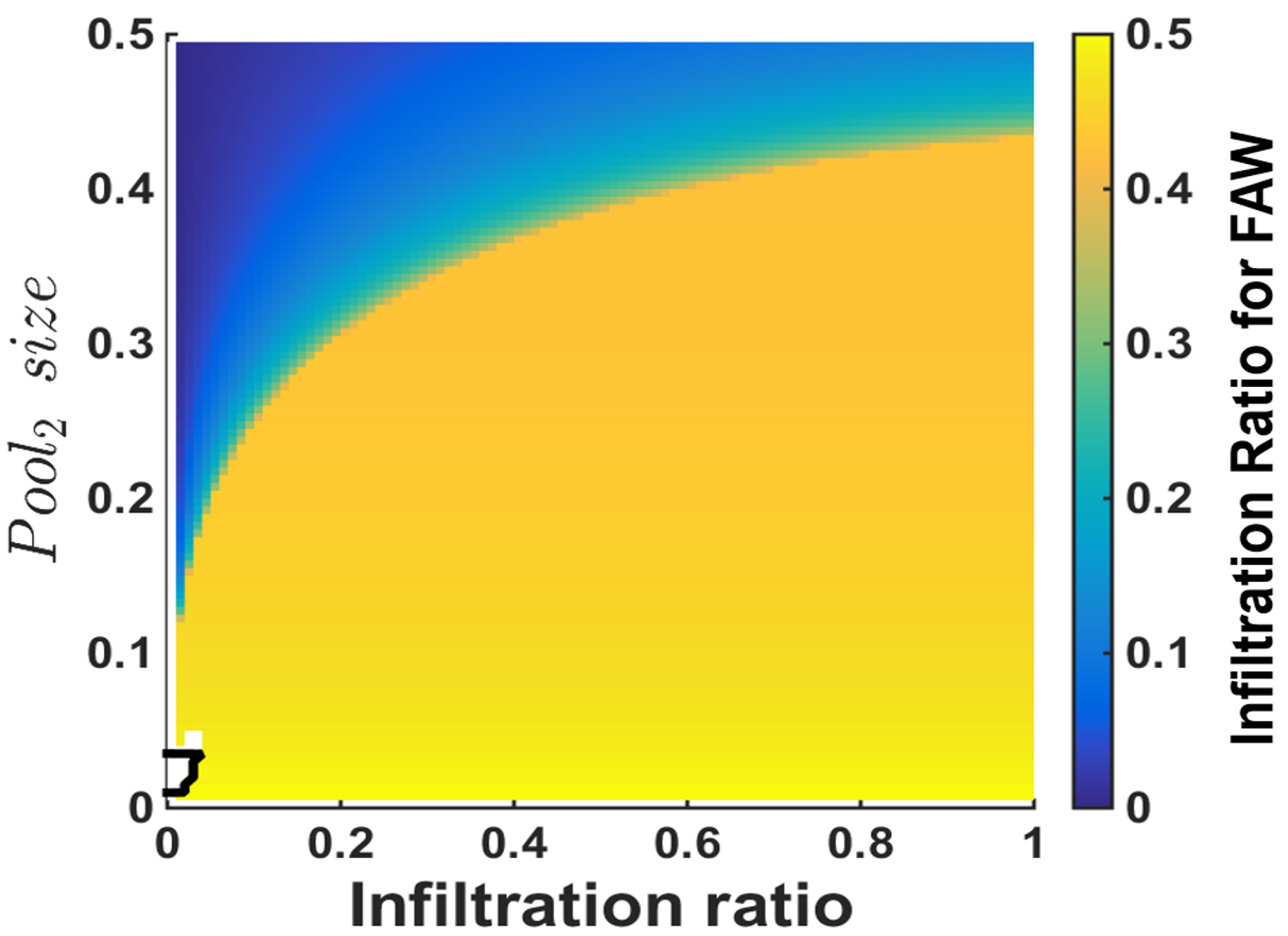}
\label{fig:bwh_faw_rt2}
}
\subfloat[Average relative extra reward (\%) of Pool$_1$ for the two stages 
when Pool$_1$ executes the BWH attack.]{
\includegraphics[width=0.25\textwidth]{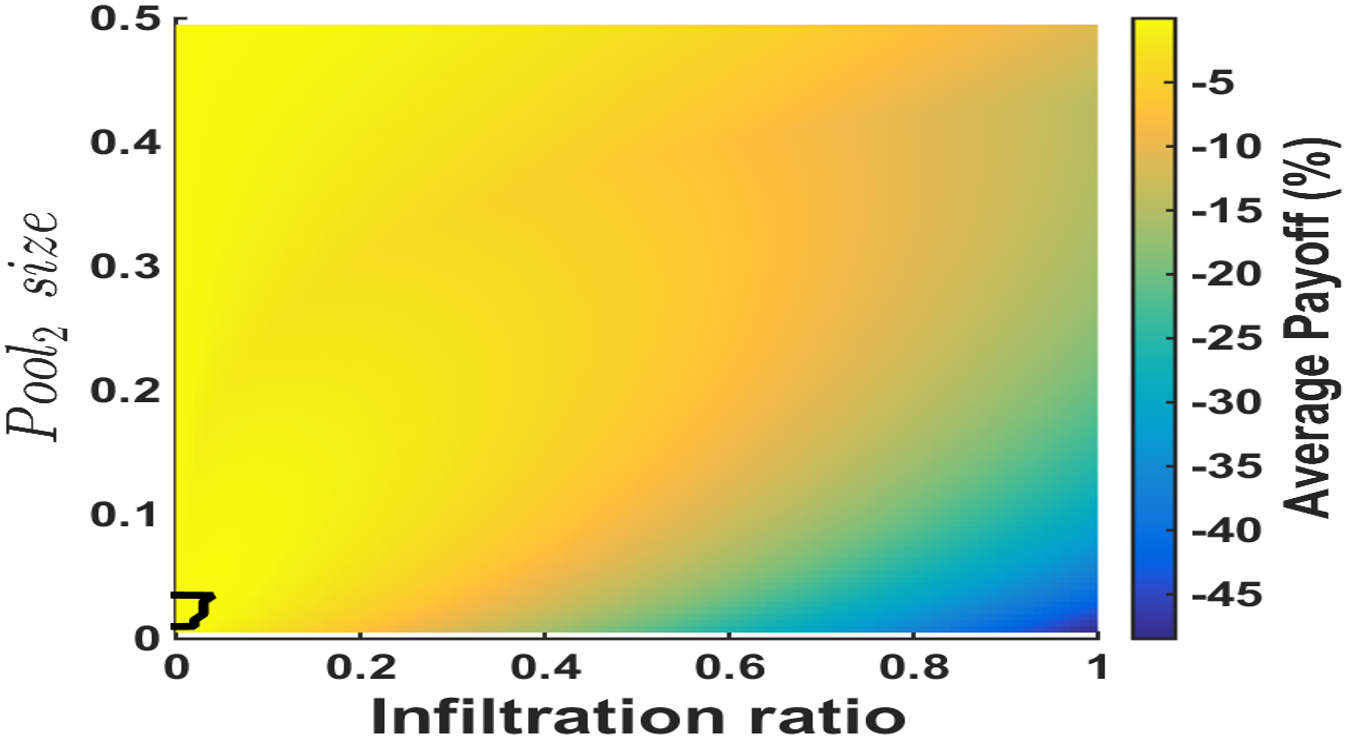}
\label{fig:bwh_reward_a2}
}
\subfloat[Average relative extra reward (\%) of Pool$_2$ for the two stages 
when Pool$_1$ executes the BWH attack.]{
\includegraphics[width=0.25\textwidth]{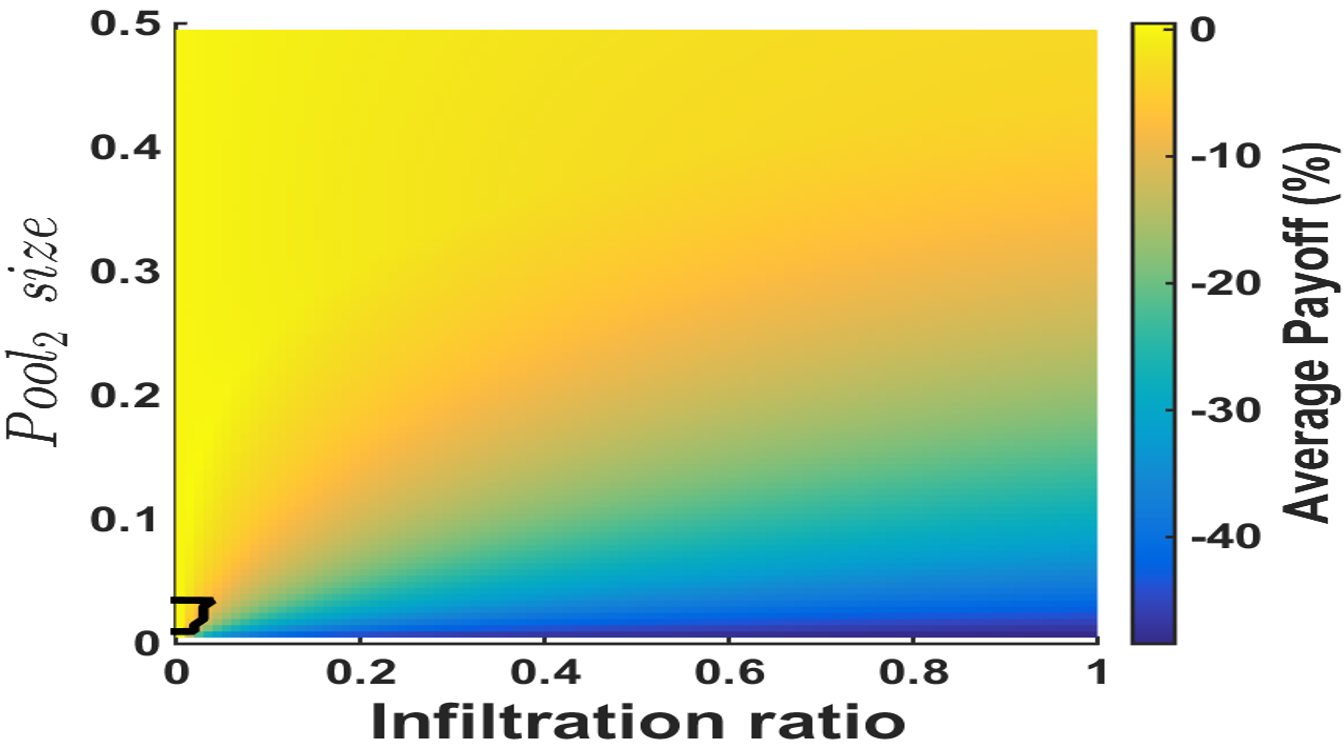}
\label{fig:bwh_reward_p2}}}
\vspace{-2mm}
\caption{Pool$_1$ with computational power of 20\% executes the BWH attack.}
\label{fig:bwh2}\vspace{-5mm}
\end{figure*}

Fig.~\ref{fig:faw} represents when Pool$_1$ optimally executes the FAW attack to maximize its payoff. 
Then, Pool$_2$ following ARS$_{1^-}$ retaliates %with FAW or BWH attacks 
at the next stage.
The $x$ and $y$-axes are Pool$_1$ and Pool$_2$'s sizes, respectively.
Moreover, we define infiltration ratios $r_i=(r_i^F, r_i^B)$, where $r_i^F$ and $r_i^B$ are proportions of infiltration power $f_i$ and $b_i$ for Pool$_i$'s computational power, respectively (i.e., $r_i^F=\frac{f_i}{\alpha_i}$, $r_i^B=\frac{b_i}{\alpha_i}$). 
Fig.~\ref{fig:faw_faw_rt} represents Pool$_2$'s infiltration ratio $r_2^F$ for retaliation using the FAW attack.
In the white region of Fig.~\ref{fig:faw_faw_rt}, Pool$_2$ cannot retaliate against Pool$_1$ with the FAW attack. Thus, Pool$_2$ should retaliate using the BWH attack.
Fig.~\ref{fig:faw_bwh_rt} represents Pool$_2$'s infiltration ratio $r_2^B$ for retaliation using the BWH attack. Here, we can see that all cases are covered with the colored regions in Fig.~\ref{fig:faw_faw_rt} and \ref{fig:faw_bwh_rt}.
Considering two stages where Pool$_1$ first executes the FAW attack and then Pool$_2$ retaliates,
Fig.~\ref{fig:faw_reward_a} and \ref{fig:faw_reward_p} represent average payoffs of Pool$_1$ and Pool$_2$ for two stages, respectively. 
That is, these figures show $\frac{U_i(a^0)+U_i(a^1)}{2}$ when we denote each of two stages by stage 0 and 1.
As shown in Fig.~\ref{fig:faw_reward_a}, Pool$_1$'s average payoff is always negative, meaning that 
ARS$_{1^-}$ makes FAW attacks unprofitable.
Moreover, Fig.~\ref{fig:faw_reward_p} shows that Pool$_2$ can completely recover a loss from Pool$_1$'s attack in the case where Pool$_2$ retaliates with the FAW attack.

Fig.~\ref{fig:bwh} represents when Pool$_1$ optimally executes the BWH attack to maximize its short-term payoff.
Similar to Fig.~\ref{fig:faw}, Fig.~\ref{fig:bwh_faw_rt} and \ref{fig:bwh_bwh_rt} represent Pool$_2$'s infiltration ratio $r_2^F$ and $r_2^B$, respectively. 
Fig.~\ref{fig:bwh_reward_a} and \ref{fig:bwh_reward_p} respectively represent the average payoffs of Pool$_1$ and Pool$_2$ for two stages. 
Fig.~\ref{fig:bwh_reward_a} shows that Pool$_1$ always suffers a loss from the retaliation of Pool$_2$ when Pool$_1$ executes the BWH attack.
Therefore, it shows that ARS makes BWH attacks unprofitable.

As a representative scenario, we simulate the repeated FAW-BWH game in terms of various Pool$_1$'s infiltration ratio used for attacks, assuming that Pool$_1$'s size is 0.2 (20\%).
Fig.~\ref{fig:faw2} and \ref{fig:bwh2} represent Pool$_1$'s execution of FAW and BWH attacks, respectively. 
The $x$ and $y$-axes are Pool$_1$'s infiltration ratio used for attack and Pool$_2$'s sizes, respectively.
Fig.~\ref{fig:faw_faw_rt2} and \ref{fig:faw_bwh_rt2} show 
the infiltration ratio $r_2^F$ and $r_2^B$ for retaliation against Pool$_1$, respectively. 
Because the extent of retaliation by ARS depends on the loss caused by the opponent's attack, Pool$_2$'s infiltration 
ratio for retaliation depends on Pool$_1$'s attack infiltration ratio.
Fig.~\ref{fig:faw_reward_a2} and \ref{fig:faw_reward_p2} represent the average payoffs of Pool$_1$ and Pool$_2$, respectively, for 
two stages in which Pool$_1$ executes the FAW attack and then Pool$_2$ retaliates.
Pool$_1$ always suffers a loss by deviating from ARS because all colors in Fig.~\ref{fig:faw_reward_a2} indicate negative values.
Meanwhile, there are some cases in which Pool$_2$ can earn extra profit in the process of retaliation, as shown in Fig.~\ref{fig:faw_reward_p2}.
Similar to Fig.~\ref{fig:faw2}, Fig.~\ref{fig:bwh2} shows Pool$_2$'s infiltration ratio $r_2^F$ for retaliation, and the attacker's and victim's average payoffs for two stages when Pool$_1$ executes the BWH attack. 
In most cases, Pool$_2$ chooses the FAW attack rather than the BWH attack for retaliation. Even though there exist some cases to execute the BWH attack in response to ARS$_{1^-}$, we omit Pool$_2$'s infiltration ratio $r_2^B$ for retaliation with the BWH attack because the region of such cases is very small (see small areas bounded by black bold lines at left-bottom corners in Fig.~\ref{fig:bwh2}). 
As a result, BWH attacks become unprofitable by ARS.

\begin{table}[ht]
\centering
\vspace{-1mm}
\caption{\small Considering the current power distribution~\cite{pools} and assuming that BTC.com is an attacker, this table lists infiltration ratios $r_i=(r_i^F,r_i^B)$ that four pools use for retaliation according to ARS$_{1^-}$.}
\vspace{-0.3cm}
\resizebox{0.9\columnwidth}{!}{  
\begin{tabular}{|c|c|c|c|c|}
\hline
\parbox[c][0.8cm][c]{1.1cm}{\centering \textbf{Name}} 
&\parbox[c][0.8cm][c]{1.8cm}{\centering \textbf{$(r_i^F,r_i^B)(\%)$ against FAW}} &\parbox[c][0.8cm][c]{1.0cm}{\centering \textbf{Total Payoff }}  & \parbox[c][0.8cm][c]{1.9cm}{\centering \textbf{$(r_i^F,r_i^B)(\%)$ against BWH}} &\parbox[c][0.8cm][c]{1.0cm}{\centering \textbf{Total Payoff}}
\\ \hline \hline
\parbox[c][0.35cm][c]{1.1cm}{\centering AntPool}
& \parbox[c][0.35cm][c]{1.8cm}{\centering $(0, 14.33\%)$} &
\parbox[c][0.35cm][c]{1.0cm}{\centering -1.89\%}
& \parbox[c][0.35cm][c]{1.9cm}{\centering $( 46.2\%,0)$} &\parbox[c][0.35cm][c]{1.0cm}{\centering -0.78\%}
\\ \hline
\parbox[c][0.35cm][c]{1.1cm}{\centering ViaBTC} 
& \parbox[c][0.35cm][c]{1.8cm}{\centering $(0,13.7\%)$} 
& \parbox[c][0.35cm][c]{1.0cm}{\centering -0.54\%}
& \parbox[c][0.35cm][c]{1.9cm}{\centering $(47.2\%,0)$} &\parbox[c][0.35cm][c]{1.0cm}{\centering -0.15\%}
\\ \hline
\parbox[c][0.35cm][c]{1.1cm}{\centering DPool}
& \parbox[c][0.35cm][c]{1.8cm}{\centering $(0, 17.71\%)$}
& \parbox[c][0.35cm][c]{1.0cm}{\centering -0.004\%}
& \parbox[c][0.35cm][c]{1.9cm}{\centering $(0,13.14\%)$}&\parbox[c][0.35cm][c]{1.0cm}{\centering -1.1\%}
\\ \hline
\parbox[c][0.35cm][c]{1.1cm}{\centering Bixin}
& \parbox[c][0.35cm][c]{1.8cm}{\centering $(0,21\%)$}
&\parbox[c][0.35cm][c]{1.1cm}{\centering -0.025\%}
& \parbox[c][0.35cm][c]{1.9cm}{\centering $(0,13\%)$}
&\parbox[c][0.35cm][c]{1.0cm}{\centering -0.63\%}
\\ \hline
\multicolumn{5}{c}{} \\
\end{tabular}} 
\label{tab:pools}
\vspace*{-1mm}
\end{table}

Also, we consider the current power distribution obtained from Blockchain.info~\cite{pools}.
We assume that BTC.com, which is the largest pool as of Jan 2019 and has a computational power of about 25\%, optimally executes FAW and BWH attacks against each of four pools (AntPool, ViaBTC, DPool, and Bixin), which have respective computational powers of 15\%, 10\%, 3.5\%, and 2\%. 
In this case, four pools would retaliate against BTC.com according to ARS.
Table~\ref{tab:pools} represents the infiltration ratio $r_i=(r_i^F,r_i^B)$, 
which Pool$_i$ uses for retaliation with FAW and BWH against BTC.com, respectively.
The second and fourth columns show how each pool should retaliate against BTC.com's FAW and BWH attacks, respectively.
The third and fifth columns represent the attacker's total payoff for each victim pool when the attacker executes FAW and BWH attacks, respectively. 
As shown in Table~\ref{tab:pools}, by retaliating according to ARS$_{1^-}$, the four pools make the attacks of BTC.com unprofitable.
\section{Identifying the opponent's attack}
\label{sec:detection}

To follow ARS, Pool$_i$ needs to know seven parameters in Table~\ref{tab:par}: $\alpha_i$, ${\tt stnd_i}$, $a_i^{t-1}$, $\alpha_{-i}$, ${\tt stnd_{-i}}$, $a_{-i}^{t-1}$, and $a_{-i,ars}^{t-1}$.
In this section, we describe how the pool can obtain these seven parameters, which make it possible for pools to adopt ARS.
Among these parameters, Pool$_i$ already knows $\alpha_i$, ${\tt stnd_i}$, and $a_i^{t-1}$, which are referred to as \textit{internal variables} in this paper. 
Also, Pool$_i$ can easily obtain $\alpha_{-i}$ because the computational power of pools can be approximately calculated from the mined block information~\cite{pools}. 
Among the remaining parameters, ${\tt stnd_{-i}}$, $a_{-i}^{t-1}$, and $a_{-i,ars}^{t-1}$, Pool$_i$ needs to know $a_{-i}^{t-1}$ and $a_{-i,ars}^{t-1}$ because ${\tt stnd_{-i}}$ is determined by $a_{-i}^{t-1}$ and $a_{-i,ars}^{t-1}$.
Moreover, the value $a_{-i,ars}^{t-1}$ is $\bm{\overline{0}}$ if $t-1$ is 0.
If $t-1$ is positive, the value can be obtained from pools' actions at stage $t-2$.
In other words, Pool$_i$ can determine $a_{-i,ars}^{t-1}$ by obtaining the opponent's action $a_{-i}^{t-2}$ at stage $t-2$.
As a result, Pool$_i$ only needs to know the opponent's previous action in order to determine ${\tt stnd_{-i}}$, $a_{-i}^{t-1}$, and $a_{-i,ars}^{t-1}$.

To guess the opponents' actions, Pool$_i$ can plant moles in other pools. Through the moles, Pool$_i$ can observe other pools' average reward densities and stochastically determine other pools' actions from their observed average reward densities. 
However, it may take a long time to find out other pools' actions with their average reward densities. 
Note that, if the time duration of a stage increases, the discount factor $\delta$ would be decreased because pools might focus on the increase of short-term advantages rather than long-term value. 
This implies that it is important to shorten the time duration of a stage. In the following section, we describe how to achieve this.

\begin{table}[ht] 
\centering
\caption{List of parameters.}
\vspace{-4mm}
\label{tab:par}
\begin{small}
\renewcommand{\tabcolsep}{1pt}
\begin{tabular}{|>{\centering\arraybackslash}m{2cm}|m{6cm}|}
\hline
\textbf{Notation} & \multicolumn{1}{c|}{\textbf{Definition}} \\ \hline\hline
$\alpha_i$ & Computational power of Pool$_i$ \\ \hline 
${\tt stnd_i}$ &  Standing of Pool$_i$ \\ \hline
$a_i^{t-1}$ & The action of Pool$_i$ at time $t-1$ \\ \hline
$\alpha_{-i}$ & Computational power of the opponent \\ \hline
${\tt stnd_{-i}}$ & Standing of the opponent \\ \hline
$a_{-i}^{t-1}$ & The action of the opponent at time $t-1$ \\ \hline
$a_{-i,ars}^{t-1}$ & Output of ARS of the opponent at time $t-1$ \\ \hline
\end{tabular}\end{small}
\end{table}

\smallskip
\noindent{\bf Detection of Attacks: }
First, Pool$_i$ must determine whether it is being attacked.
Indeed, Pool$_i$ can easily detect FAW attacks because the fork rate increases~\cite{kwon2017selfish}. 
Moreover, the manager of victim pool can detect BWH attacks by investigating the ratio between the number of submitted shares and the number of found blocks~\cite{eyal2015miner, luu2015power}.

Furthermore, we found that the period of attack detection can be reduced if the manager can focus only on the shares submitted by \textit{unlucky miners}\footnote{Miners who submitted a relatively small number of FPoWs} and \textit{short-term miners}\footnote{Miners who stayed in the pool for relatively short period of time}, rather than all shares.
For example, if a victim pool's size is 20\% and a BWH attacking pool infiltrates 0.5\% into the pool, the victim pool would find 20.1\% (0.2/0.995) of all blocks on average.
Note that the victim pool would find 20.5\% of all blocks on average in the case that the infiltrating power of 0.5\% belongs to benign miners. 
When the manager of the victim pool considers entire 2000 blocks (about two weeks in Bitcoin), there is an approximately 35.82\% probability 
that the ratio of the victim pool's found blocks to all blocks is 0.201, under the assumption that an attack had not occurred.
Therefore, it is difficult for the manager to determine whether an attack has occurred.
However, if the manager considers the shares submitted by only unlucky and short-term miners, he would be able to find these shares that account for about 0.5\% of the entire computational power. 
Assuming that the attack had not occurred, the probability that these shares do not contain FPoWs is about 0.0045\%. Therefore, the manager can successfully detect \textit{attacks} and estimate the \textit{infiltration power} used for attacks. 
If a pool does not detect attacks, other pools' actions are set as $\bm{\overline{0}}$. Otherwise, the pool needs to identify the attacking pool.

\smallskip
\noindent{\bf Identification of Attackers: } The FAW or BWH
attacker receives a portion of the reward earned by the victim pool. 
Therefore, the attacker's reward is related to the number of blocks generated from the victim pool. 
We denote by $P$ the period in which the attacker finds
one block in its pool. At the end of $P$, miners in the attacker's pool
receive a portion of the reward for the one block and the rewards gained
from the victim pool for period $P$. For ease of presentation, we
use $\alpha$ and $\beta$ to represent the attacker's size and the victim's size, respectively, instead of $\alpha_{-i}$ and $\alpha_i$. We also suppose the attacker infiltrates a fraction $\gamma$ of the attacker's computational power into the victim pool. Moreover, we denote by $Rd_p$ the earned reward
density of miners belonging to the attacking pool for the period $P$. If the number of generated blocks from the victim pool for the period $P$ is $N$, $Rd_p$ would be
$\frac{1}{\alpha}+\frac{N\gamma}{\beta+\gamma\alpha}$.  
We can easily check that $N$ has a geometric distribution with a parameter 
$\frac{(1-\gamma)\alpha}{\beta+\gamma\alpha(1-\alpha-\beta)+(1-\gamma)\alpha}$
for FAW attack, and
$\frac{(1-\gamma)\alpha}{\beta+(1-\gamma)\alpha}$ for
BWH attack (see Theorem~\ref{thm:geo} in Appendix~\ref{subsec:number} for details). 

If $\alpha$ does not change and the pool does not attack, the reward density $Rd_p$ for the period $P$ is fixed at $\frac{1}{\alpha}$. 
Meanwhile, if the pool attacks, $Rd_p$ would continuously change depending on $N$.
Therefore, the victim can identify the attacking pool immediately by observing their variances in $Rd_p$ after planting moles in pools.
However, even if attacks do not occur, the reward density as well as $\alpha$ and $\beta$ changes in practice.
If $\alpha$ usually has a large variance, 
it would indeed be difficult to identify the attacker by investigating the variance in the reward density.
Thus, to determine this, we should find out how much is the variance in pools' computational powers in real world when attacks do not occur.
To this end, we collected hash rates from two pools (ViaBTC~\cite{viabtc} and BTC.com~\cite{btc.com}) by monitoring their hash rates over one month (Jan. 21, 2019$\sim$Feb. 18, 2019).
Two pools publicly provide their average hash rate (PH/s) for one hour. 
Using the data, we first normalized their computational power and then calculated the reward density when assuming that these pools are benign, as a reciprocal number of computational power (e.g., if a pool's computational power is 0.2, its reward density $Rd_p$ would be 5).
The reward density at time $t$ indicates how much of the reward per computational power a pool miner can earn when the corresponding pool finds one block at time $t$. 
Note that the value of $\frac{Rd_p}{E[P]}$ for each miner is the 
same regardless of pool when all miners are honest, where $E[P]$ is the mean of $P$.

We simulated two scenarios with the FAW attack: (1) first scenario (attack pool: ViaBTC) and (2) second scenario (attack pool: BTC.com) in which the attack pool executes the FAW attack with an infiltration power of 0.005 (0.5\%) against the victim pool with a computational power 0.2 (20\%).
Fig.~\ref{fig:detection} shows the reward densities $Rd_p$ of two pools. 
Blue and orange lines represent reward densities observed when each pool executes the FAW attack and when they are benign, respectively.
In Fig.~\ref{fig:detection}, blue lines fluctuate noticeably more than the orange lines.
By running the simulation several times, we observed that the variation in the blue lines was usually about 99$\sim$152 and 24$\sim$32 times the variation in the orange lines for ViaBTC and BTC.com, respectively.

We also simulated two scenarios with the BHW attack: (1) first scenario (attack pool: ViaBTC) and (2) second scenario (attack pool: BTC.com) in which the attack pool executes the BWH attack with an infiltration power of 0.005 against the victim pool with a computational power 0.2.
Similarly, in two cases, the variance in reward densities of the attacker when the attack is executed increases about 85$\sim$177 and 22$\sim$36 times that for when the attack is not executed, respectively.
For space reasons, we omit figures representing the simulation results similar to Fig.~\ref{fig:detection}.
\textit{These results show that if a pool is a FAW or BWH attacker, the pool's reward density $Rd_p$ would usually increase (or decrease) when $N$ increases (or decreases).} 
Therefore, the victim can identify the attacker by observing the variance in reward densities in other pools and by comparing it with the number of blocks found in his pool.
Probably, the attacker would try several methods to reduce the variance in the attacker's reward density to hide the attack evidence. 
However, we note that the victim can still observe relatively high variance in the attacker's reward density even when such evasion methods are used (see Appendix~\ref{sec:evade}). 

\begin{figure}[t]
\hspace*{-2mm}
\centering{
\subfloat[ViaBTC's reward densities]{
\includegraphics[width=0.33\textwidth]{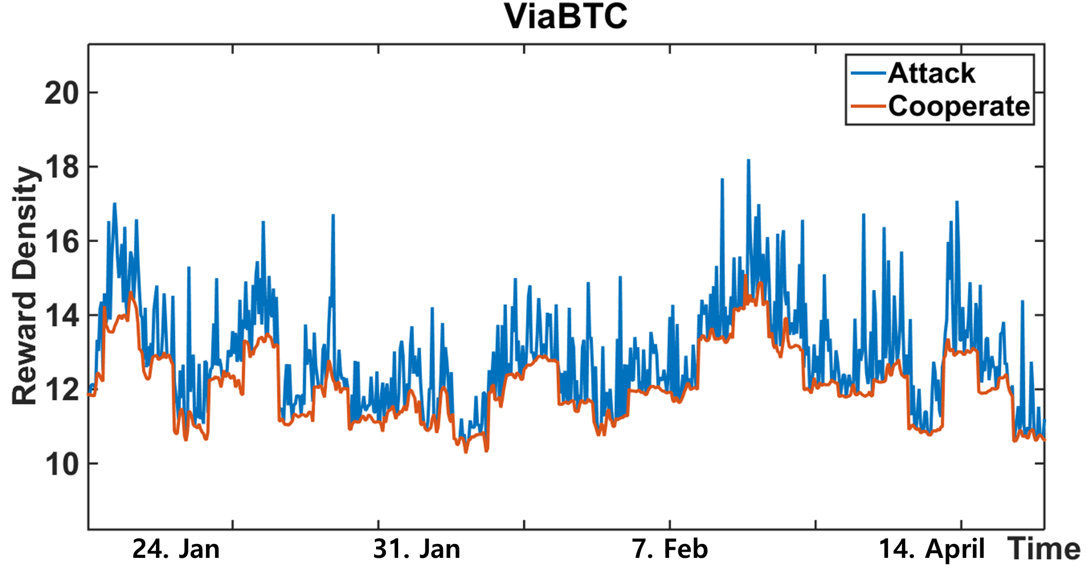}
\label{fig:viabtc}
}\\
\subfloat[BTC.com's reward densities]{
\includegraphics[width=0.33\textwidth]{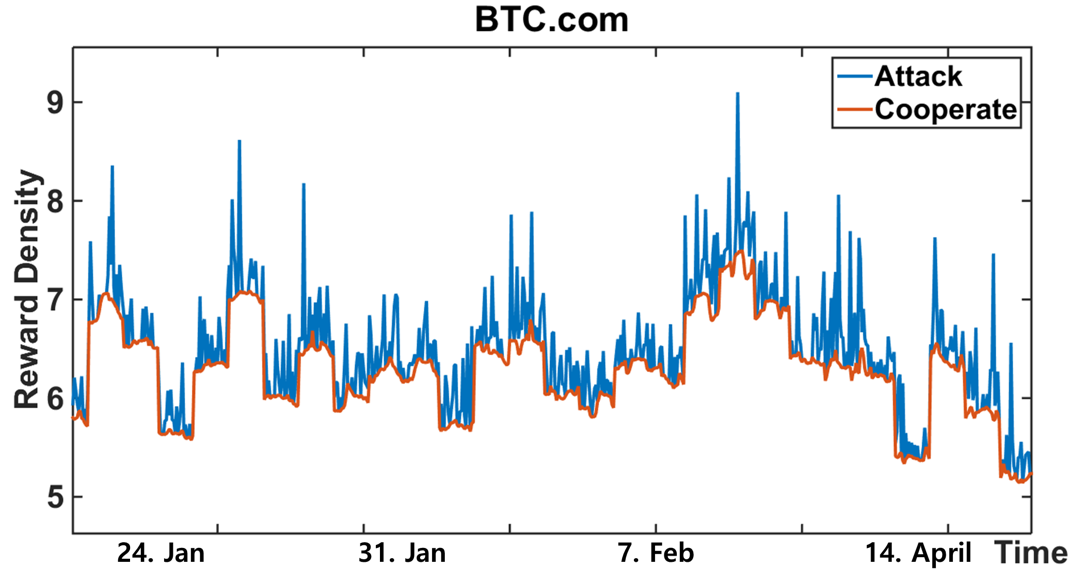}
\label{fig:btc_com}}}
 \vspace{-3mm}
\caption{Reward densities of two pools. In each figure, the blue and orange lines represent each pool's reward density when the pool executes FAW attacks and when the pool cooperates, respectively.}
\label{fig:detection} 
 \vspace{-1mm}
\end{figure}

\smallskip
\noindent\textbf{Summary:}
In summary, the victim first determines whether attacks occurred by investigating the fork rate and the ratio of found blocks to submitted shares.
This method also allows the victim to estimate the infiltration power in the victim pool.
The victim then identifies the attacker by investigating the variance in $Rd_p$ in other pools.
Using the above two methods, a pool can identify other pools' actions. Therefore, a pool can find out all seven parameters in Table~\ref{tab:par} to run ARS.
\section{Multiple Pools}
\label{sec:npools}

\begin{table*}[ht]
\centering
\caption{\small The simulation results of ARS for multiple pools.
The third and sixth columns show the optimal infiltration ratio for BTC.com's 
FAW and BWH attacks, respectively. 
The fourth and seventh columns show each pool's infiltration ratio
for retaliation against BTC.com's FAW and BWH attacks, respectively.
Finally, the fifth (or eighth) column represents BTC.com's payoff for two stages in 
which BTC.com first executes the FAW (or BWH) attack and then four pools 
retaliate against BTC.com. \vspace{-2mm}}
\resizebox{.95\textwidth}{!}{ 
\begin{tabular}{|c|c|c|c|c|c|c|c|}
  \hline
\parbox[c][0.5cm][c]{1.3cm}{\centering \textbf{Name}} & \parbox[c][0.5cm][c]{3.3cm}{\centering \textbf{Computational Power}} & \parbox[c][0.5cm][c]{0.8cm}{\centering \textbf{$r_{1j}^F$}}&
\parbox[c][0.5cm][c]{3.3cm}{\centering \textbf{$(r_{j1}^F, r_{j1}^B)$ against FAW}} & \parbox[c][0.5cm][c]{0.8cm}{\centering \textbf{$\mathcal{U}_1$}}& \parbox[c][0.5cm][c]{0.8cm}{\centering \textbf{$r_{1j}^B$}}& \parbox[c][0.5cm][c]{3.3cm}{\centering \textbf{$(r_{j1}^F, r_{j1}^B)$ against BWH}}& \parbox[c][0.5cm][c]{0.8cm}{\centering \textbf{$\mathcal{U}_1$}} \\ \hline \hline
\parbox[c][0.35cm][c]{1.3cm}{\centering AntPool} & \parbox[c][0.35cm][c]{3.3cm}{\centering 15\%} & \parbox[c][0.35cm][c]{0.8cm}{\centering $22.7\%$}& \parbox[c][0.35cm][c]{3.3cm}{\centering $(0, 13.67\%)$}& 
\multirow{4}{0.8cm}{\centering $-5.4 \%$}
& \parbox[c][0.35cm][c]{0.8cm}{\centering $9.5\%$} & \parbox[c][0.35cm][c]{3.3cm}{\centering $( 46.2\%, 0)$}& \multirow{4}{0.8cm}{\centering $-1.55 \%$}
 \\ %\hline
\parbox[c][0.35cm][c]{1.3cm}{\centering ViaBTC} & \parbox[c][0.35cm][c]{3.3cm}{\centering 10\%} & \parbox[c][0.35cm][c]{0.8cm}{\centering $15.1\%$} & \parbox[c][0.35cm][c]{3.3cm}{\centering $(0, 13.7\%)$}& 
& \parbox[c][0.35cm][c]{0.8cm}{\centering $6.4 \%$}& \parbox[c][0.35cm][c]{3.3cm}{\centering $( 47.2\%,0)$}& \\ %\hline
\parbox[c][0.35cm][c]{1.3cm}{\centering DPool}
& \parbox[c][0.35cm][c]{3.3cm}{\centering 3.5\%} & \parbox[c][0.35cm][c]{0.8cm}{\centering $5.3\%$} & \parbox[c][0.35cm][c]{3.3cm}{\centering $(0,13.14\%)$}
&& \parbox[c][0.35cm][c]{0.8cm}{\centering $2.2 \%$}& \parbox[c][0.35cm][c]{3.3cm}{\centering $(0,13.14\%)$}& 
\\ %\hline
\parbox[c][0.35cm][c]{1.3cm}{\centering Bixin}
& \parbox[c][0.35cm][c]{3.3cm}{\centering 2\%} & \parbox[c][0.35cm][c]{0.8cm}{\centering $3\%$} & \parbox[c][0.35cm][c]{3.3cm}{\centering $(0,13\%)$}& 
& \parbox[c][0.35cm][c]{0.8cm}{\centering $1.3\%$}& \parbox[c][0.35cm][c]{3.3cm}{\centering $(0, 13\%)$}&  \\ \hline
\end{tabular}}
\label{tab:npools}
\vspace{-0.2cm}
\end{table*}

\begin{figure}[!t]
 \removelatexerror
\scalebox{0.9}{
\begin{minipage}{1.1\linewidth}
\begin{algorithm}[H] 
  \caption{ARS for multiple pools.}
 \par\noindent\hrulefill \\
 {\bf ARS$_K$ for each pool $i$ against pool $j \neq i$}
  \vspace{-0.15cm}
 \par\noindent\hrulefill
\begin{compactenum}[$~$]
 \item[\bf Start when $t=0$:] $~$\\
 \begin{compactenum}[$~$]
   \item Start the stage game with no attack, (i.e., $a_{ij}^0 = \bm{\overline{0}}$), 
   and set a variable ${\tt stnd_{ij}=\good.}$
   \end{compactenum}
 \item[{\bf At each stage $t \geq 1$}:]  \label{step:select} $ $\\
   \begin{compactenum}[\hspace{4mm}\bf \em S1.]
\item {\em Set the standing of this stage.} \\
     If ($a_{ij}^{t-1} == a_{ij,ars}^{t-1}$),
       ${\tt stnd_{ij}=}\good,$ else ${\tt stnd_{ij}=}\bad.$
    \item {\em Estimate the infiltration power.} \\    
    If (${\tt stnd_{ij}} == \good$) and (${\tt stnd_{ji}}== \bad$) \\
      \hspace{0.6cm}$a^t_{ij}=\text{\bf Retaliate$_K$}\left(\alpha_i, a^{t-1}_{ij},\alpha_{j},a^{t-1}_{ji},a^{t-1}_{ji,ars}\right)$ \\
      else \hspace{0.08cm}$a_{ij}^t = \bm{\overline{0}}.$
    \item Output $a_{ij}^t.$
\end{compactenum}
\end{compactenum}
  \par\noindent\hrulefill	
  \vspace{0.1in}
  \label{al:npools}
\end{algorithm}
\end{minipage}}
\end{figure}

In this section, we present ARS for $n$ pools (Pool$_i$: $i=1\sim n).$ 
To this end, we now specify, for all notations of Pool$_i$'s standing and action, those against Pool$_j$ for each $j \neq i,$ such as $\staij,$ $a_{ij}^t,$ and $a_{ij,ars}^t.$ Then, Pool$_i$ maintains $n-1$ dimensional standing vectors $(\staij)_{j\neq i}$ and action vectors $(a_{ij}^t)_{j \neq i},$ $(a_{ij,ars}^t)_{j \neq i}.$ Note that $a_{ij}^t = (f_{ij}^t, b_{ij}^t),$ where $f_{ij}^t$ and $b_{ij}^t$ are the infiltration powers for FAW and BWH attacks, respectively. ARS$_K$ of Pool$_i$ against Pool$_j$ is described in Algorithm~\ref{al:npools} (similar to Algorithm~\ref{al:ars}).
When Pool$_i$ follows ARS$_K$, the pool retaliates against Pool$_j$ only if $\staij$ and ${\tt stnd_{ji}}$ are $\good$ and $\bad,$ respectively, where {\bf Retaliate} in Algorithm~\ref{al:npools} is similar to that in 
Algorithm~\ref{al:rt}.
The difference between {\bf Retaliate} for $n$ and two pools is that 
$n$-pool {\bf Retaliate} simply replaces $f_i$ and $b_{i}$ in Algorithm~\ref{al:ars} with $f_{ij}$ and $b_{ij}$, and outputs $a_{ij}^t$. 

For the identification of attackers, a mechanism similar to that described in Section~\ref{sec:detection} can be applied. 
Even if multiple attackers execute attacks against multiple victims in parallel, each victim can find out who the attacker is because
the corresponding attacker's reward density fluctuates 
depending on \textit{the number of blocks found by the corresponding victim.} 
In addition, when multiple pools target a victim, the victim can identify those attackers and estimate their infiltration rates because each attacker's reward density variance depends on its infiltration rate. 
More specifically, the larger infiltration rate is, the higher variance in the attacker's reward density is. 
To distinguish FAW from BWH, the victim can investigate a fork rate. 
Note that, because the FAW attack is to intentionally generate a fork with blocks generated by a victim and another miner, the fork rate from the FAW attack pool is relatively low compared with those of other pools.

It seems natural to expect that, similar to ARS for two pools, ARS for $n$ pools will make attacks unprofitable and thus induce cooperation, despite the mathematical challenges in formally proving it due to complex inter-coupling among $n$ pools. 
We numerically conduct this analysis through the simulation of the scenario in which BTC.com (Pool$_1$) possessing 25\% computational power optimally 
executes FAW or BWH attacks against four other pools (AntPool (Pool$_2$), ViaBTC (Pool$_3$), DPool (Pool$_4$), and Bixin (Pool$_5$)) at the same time while maximizing its short-term payoff, and four other pools follow ARS$_{1^-}$ (see Table~\ref{tab:npools}).
In Table~\ref{tab:npools}, we denote the infiltration ratios for FAW and BWH attacks of Pool$_i$ against Pool$_j$ by $r_{ij}=(r_{ij}^F,r_{ij}^B)$.
The third and sixth columns show the optimal infiltration ratios 
$r_{1j}^F$ and $r_{1j}^B$ ($j=2\sim5$) of BTC.com for FAW and BWH attacks, respectively; these ratios maximize the payoff of BTC.com. 
The fourth and seventh columns show each pool's infiltration ratio
for retaliation against BTC.com's FAW and BWH attacks, respectively.
Finally, the fifth (or eighth) column represents BTC.com's payoff for two stages in which BTC.com first executes the FAW (or BWH) attack and then four pools retaliate against BTC.com. For example, when BTC.com executes FAW attacks using infiltration ratios of 22.7\%, 15.1\%, 5.3\%, and 3\% against AntPool, ViaBTC, DPool, and Bixin, respectively, the four pools would retaliate against BTC.com at the next stage according to the fourth column. 
In this case, BTC.com suffers a loss of $5.4$ \% for two stages in the aggregate. As such, BTC.com's total payoff becomes negative when the pool executes FAW and BWH attacks. 
Considering these results, it becomes unprofitable to attack, and rational pools sustain cooperation without attacks when they follow ARS. 

\section{Discussion}
\label{Discussion}
\subsection{Closed Pools and Solo Miners}

Even though we focused on attacks executed only by open pools in this paper, solo miners or closed pools can also execute FAW and BWH attacks in practice.
If a solo miner or closed pool is an attacker, the victim cannot retaliate 
against the attacker because the victim cannot infiltrate its moles. 
This fact may lead to a rational solo miner or closed pool to execute FAW and/or BWH attacks.
However, fortunately, it is widely known that solo miners and closed pools have limited computational power.
To estimate the current size of solo miners or closed pools, we observed the hashrate distribution in websites given from BTC.com~\cite{pools2} at the time of writing (Jan. 2019), and we found the total of 22 pools. Among them, BitFury and 58COIN are only closed pools with about 3.1\% and 1.3\% computational powers, respectively.
If each of them executes FAW attacks, separately, against BTC.com, which has about 
25\% computational power as the largest mining pool,
BitFury and 58COIN earn extra reward densities of 0.74\%,
0.32\%, respectively.
Moreover, BTC.com suffers losses of 0.09\% and 0.016\% from each attack. 
However, as we can see, the impacts of those attacks seem rather marginal.

\subsection{Infiltration Power}

Pool$_i$'s infiltration power into other pools should be loyal to Pool$_i$~\cite{eyal2015miner, kwon2017selfish}. 
The loyal power can be the manager's own computational power or cloud mining~\cite{cloud}, or the computational power of miners with a private relation to the manager.
Moreover, pools' loyal power ratios are trade secrets.
In Section~\ref{sec:numerical}, there exist some cases in which the infiltration ratio for retaliation against the attacker is greater than 80\%.
However, these are extreme cases in which the victim pool's size is very small and the attacker's size is close to 50\%. 
In the current computational power distribution, the infiltration power for retaliation is less than 50\%, as shown in Table~\ref{tab:pools}. 
However, it is possible that a pool has a loyal power of less than 50\%.
For example, if AntPool has a loyal power of less than 46.2\%, the pool cannot prevent the BWH attack of BTC.com through retaliation with the FAW attack. 
Therefore, to retaliate against BTC.com, Antpool can execute the BWH attack instead of the FAW attack.
In this case, Antpool needs only about 14.33\% infiltration ratio to prevent 
the BWH attack through retaliation with the BWH attack of its own.

\subsection{Sabotage Attack}

We showed that rational pools can cooperate by making attacks unprofitable through ARS.
However, FAW and BWH attacks can still be executed by a large pool to disable a small pool (called \textit{Sabotage} attack) because a small pool's loss is greater than a large pool's loss when the large pool deviates from ARS. This situation is commonly known as the \emph{Chicken Game}. 
If the small pool stops the operation, the large pool would be the winner despite of the loss in their reward because the small pool's miners may migrate into the large pool.
However, if two pools among several pools launch attacks
against each other, other pools not involved in the attack do not suffer losses stemming from the attack. Therefore, even if the small pool ceases the mining operation, the small pool's miners would move into other pools rather than the large pool. As a result, the large pool cannot earn a direct profit through sabotage attacks.

\section{Related Work}
\label{Related}

Game theory has been used for analyzing attacks and protocols in Bitcoin.
Kroll et al.~\cite{kroll2013economics} analyzed the economics of 
Bitcoin mining under the assumption that all miners are rational.
They showed that there is a Nash equilibrium in which all miners comply with the Bitcoin protocol when considering a 51\% attack. 
Several studies~\cite{johnson2014game, laszka2015bitcoin} modeled a game in which two pools decide whether to trigger a DDoS 
attack against an opponent.
Lewenberg et al.~\cite{lewenberg2015bitcoin} considered 
miners' interactions among mining pools as a cooperative game.
They found that some miners would always switch among pools for their profit if the communication delay in the network is large.
Luu et al.~\cite{luu2015power} modeled a power splitting game to analyze how an attacker can optimally execute the BWH attack against multiple pools.
Moreover, Eyal~\cite{eyal2015miner} showed that a BWH game between two pools results in the prisoner's dilemma.
Carlsten et al.~\cite{carlsten2016instability} and Tsabary et al.~\cite{tsabary2018gap} analyzed a game among miners when miners earn only transaction fees as block rewards in the future.
Kwon et al.~\cite{kwon2017selfish} proposed the FAW attack and analyzed the FAW game between two pools in which can break the prisoner's dilemma. 
Yoo et al.~\cite{yoo2018promoting} studied an incentive design in proof-of-work blockchains, considering a cooperative and non-cooperative strategy of miners.
Kwon et al.~\cite{kwon2019bitcoin} analyzed a rational behavior of miners when two coins with a compatible proof-of-work mechanism exist.
To the best of our knowledge, this paper is the first attempt to consider and analyze both FAW and BWH attacks together in a repeated game. 
Moreover, we propose infinitely many strategies inducing cooperation among rational pools under the presence of both FAW and BWH attacks.
\section{Conclusion}
\label{sec:conclusion}
In this paper, by modeling a repeated game, we studied how pools can cooperate to avoid the health of systems being weakened from the attacks. 
Because the stage game, FAW-BWH game, highly differs from the prisoner's dilemma, it may be challenging to find a strategy inducing cooperation among pools. 
To solve this challenging problem, we proposed novel infinitely many strategies, called ARS, which are likely to be adopted by rational pools. 
In ARS, a pool first cooperates and then retaliates against the attacker in the case when attacks occur. 
ARS provably strikes a good balance between retaliation and selfishness. 
Moreover, there are several parameters required to use ARS in practice. 
Thus, we discuss the methods to determine the parameters, investigating the real-world data collected from mining pools.
As a result, ARS makes cooperation among pools stable, sustainable, and recoverable. 

%%% Local Variables:
%%% mode: latex
%%% TeX-master: "main"
%%% End:

\bibliographystyle{ACM-Reference-Format}
\bibliography{references} 

\appendix

\section{Appendix}

\subsection{Proof of Theorem~\ref{thm:nash_stage}}
\label{sec:proof5.1}

First, we prove \eqref{eq:stage_result} by showing that other actions
that at least one pool has positive BWH infiltration power cannot be a
Nash equilibrium. To this end, we first consider the action profile, say
$(0,b_i),(0,b_{-i}),$ where both pools execute BWH attacks. In this
case, it is easy to see that 
Pool$_i$ can increase its payoff by executing the
FAW attack rather than the BWH attack as follows:
$$U_i((0,b_i),(0,b_{-i}))<U_i((f_i,0),(0,b_{-i})), \quad \text{ if } f_i=b_i.$$
Second, consider the case when either of pools execute the BWH attack,
i.e., $((f_i,0),(0,b_{-i}))$ or $((0,b_i),(f_{-i},0)),$ which can be
similarly shown.

We now prove \eqref{eq:stage1} and \eqref{eq:stage2}, for which we
expand the FAW attack game in \cite{kwon2017selfish} by defining the
following mapping from the action profile $a^F$ in the FAW attack game to that
in the FAW-BWH attack game: 
\begin{equation*}
a^F = (f_i, f_{-i}) \longmapsto \text{ext}(a^F) = \Big((f_i,0), (f_{-i},0)\Big).
\end{equation*}
Lemma~\ref{lem:con} states a necessary and sufficient condition to be a
Nash equilibrium in the FAW-BWH game, whose proof is presented at
the end of this section. 
\begin{lemma}
\label{lem:con}
The action profile $a^F$ is a Nash equilibrium in the FAW game, if and only if
the action profile $\text{ext}(a^F)$ is a Nash equilibrium in the
FAW-BWH game. 
\end{lemma}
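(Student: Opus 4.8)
The plan is to translate unilateral deviations back and forth between the two games. The starting point is the identity that restricting both pools to FAW actions in the FAW--BWH game literally reproduces the FAW game: for every $x\in[0,\alpha_i]$ one has $U_i\big((x,0),(f_{-i},0)\big)=U^{FF}(x,f_{-i})$, and symmetrically for Pool$_{-i}$, while $U_i(\text{ext}(a^F))=U_i\big((f_i,0),(f_{-i},0)\big)=U^{FF}(f_i,f_{-i})$. The ($\Leftarrow$) direction is then immediate: if $\text{ext}(a^F)$ is a Nash equilibrium of the FAW--BWH game, neither pool gains by deviating inside its own FAW action set, which by the identity is exactly the statement that $a^F=(f_i,f_{-i})$ is a Nash equilibrium of the FAW game.

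For ($\Rightarrow$), assume $a^F$ is a Nash equilibrium of the FAW game; by symmetry it suffices to show Pool$_i$ has no profitable deviation from $\text{ext}(a^F)$ against the opponent action $(f_{-i},0)$. Since every action has product zero, such a deviation is either a FAW action $(x,0)$, with payoff $U^{FF}(x,f_{-i})$, or a BWH action $(0,y)$, with payoff $U^{BF}(y,f_{-i})$ (see \eqref{eq:bwh_faw_payoff}). FAW deviations are unprofitable because $f_i$ is a best response to $f_{-i}$ in the FAW game, i.e.\ $U^{FF}(x,f_{-i})\le U^{FF}(f_i,f_{-i})$ for all $x$. Hence everything reduces to the key inequality
\[
U^{BF}(y,f_{-i})\ \le\ U^{FF}(y,f_{-i})\qquad\text{for all }0\le y\le\alpha_i,\ 0\le f_{-i}\le\alpha_{-i},
\]
granting which $U^{BF}(y,f_{-i})\le U^{FF}(y,f_{-i})\le U^{FF}(f_i,f_{-i})=U_i(\text{ext}(a^F))$, so no BWH deviation pays off and $\text{ext}(a^F)$ is a Nash equilibrium.

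Proving the key inequality is the substantive step; it is the companion, for a FAW opponent, of the easy fact used to eliminate the other profiles in \eqref{eq:stage_result} --- namely that against a BWH opponent, switching one's own attack from BWH to FAW at equal infiltration is weakly profitable. I would compare $U^{FF}(y,f_{-i})$ from \eqref{eq:faw_payoff} with $U^{BF}(y,f_{-i})$ from \eqref{eq:bwh_faw_payoff}: the honest-mining first terms are identical, as are the trailing $-1$'s, and what remains are the fork-reward terms and the infiltration-reward terms, which must be compared jointly rather than termwise --- $U^{FF}$ carries the extra nonnegative mutual-fork term, but $U^{BF}$ carries a single-fork term with the smaller denominator $(1-y)(1-y-f_{-i})$. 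The main obstacle is the mutual recursion: $U^{FF}(y,f_{-i})$ refers to $U^{FF}(f_{-i},y)$ whereas $U^{BF}(y,f_{-i})$ refers to $U^{FB}(f_{-i},y)$ (see \eqref{eq:faw_bwh_payoff}). I would defuse it by first proving the auxiliary bound $U^{FF}(f_{-i},y)\ge U^{FB}(f_{-i},y)$ --- intuitively, a FAW-playing pool is weakly better off when the infiltrator attacks it with FAW rather than BWH at equal power, because under the assumption $c=1$ the fork-winning blocks accrue to the victim pool --- which is tractable since \eqref{eq:faw_payoff}, \eqref{eq:faw_bwh_payoff} and \eqref{eq:bwh_faw_payoff} are each affine in the recursively-referenced payoff, so their fixed points have closed forms that can be compared directly. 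Substituting these closed forms turns the key inequality into a single rational inequality in $y,f_{-i},\alpha_i,\alpha_{-i}$ on the region $0\le y\le\alpha_i$, $\alpha_i+\alpha_{-i}\le1$, $\alpha_i,\alpha_{-i}\le\tfrac12$, which one finishes by clearing the (positive) denominators and checking the resulting polynomial inequality. Applying the whole argument with Pool$_i$ and Pool$_{-i}$ interchanged then completes the proof.
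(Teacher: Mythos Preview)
Your reduction is exactly the paper's: the $(\Leftarrow)$ direction is immediate, and for $(\Rightarrow)$ the paper also argues by contradiction that a profitable BWH deviation $(0,b'_i)$ would give a profitable FAW deviation $(b'_i,0)$ via the same key inequality $U^{BF}(y,f_{-i})\le U^{FF}(y,f_{-i})$. The paper simply asserts this inequality without further justification, so your careful identification of the mutual-recursion obstacle and your plan to resolve it via the auxiliary bound $U^{FF}(f_{-i},y)\ge U^{FB}(f_{-i},y)$ and the closed-form fixed points is in fact more complete than the paper's own proof.
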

The above lemma is significantly convenient in that the Nash
equilibrium in the FAW-BWH game can be easily characterized by
the earlier result in the FAW game. 
Then, from \cite{kwon2017selfish}, the FAW game has a unique Nash
equilibrium $a^{F}_{\text{ne}}=(f_i^\star,f_{-i}^\star)$ which satisfies:
\begin{align*}
U^{FF}(f_i^\star,f_{-i}^\star)>0, \ U^{FF}(f_{-i}^\star,f_{i}^\star)<0&
\quad  \text{if} \quad \alpha_i > \alpha_{-i}, \cr
U^{FF}(f_i^\star,f_{-i}^\star) =  U^{FF}(f_{-i}^\star,f_{i}^\star) =0 &
\quad \text{if} \quad \alpha_i = \alpha_{-i}.
\end{align*}
Therefore, by Lemma~\ref{lem:con}, the action profile $\text{ext}(a^F_{\text{ne}})$ is
the Nash equilibrium, satisfying \eqref{eq:stage1} and \eqref{eq:stage2}. 
This completes the proof. 

\begin{proof}[Proof of Lemma~\ref{lem:con}]
  If an action profile $a^F$ is not a Nash equilibrium in the
  FAW game, the action profile $\text{ext}(a^F)$ is trivially not a Nash
  equilibrium in the FAW-BWH game.  Then we consider that an
  action profile $a^F=(f_i,f_{-i})$ is a Nash equilibrium in the FAW game.  Moreover, we assume that there exists
  $b'_i$ such that
$$U_i(\text{ext}(a^F))<U_i((0,b'_i),(f_{-i},0)).$$
In this case, the following
$$U_i(\text{ext}(a^F))<U_i((0,b'_i),(f_{-i},0))<U_i((f'_i,0),(f_{-i},0)) \text{ if } f'_i=b'_i$$
holds, and it is contradiction because $a^F$ is a Nash
equilibrium in the FAW game.  As a result, if an action
profile $a^F$ is a Nash equilibrium in the FAW game,
then $\text{ext}(a^F)$ is also a Nash equilibrium in the FAW-BWH game.
\end{proof}

\subsection{Proof of Theorem~\ref{thm:main}}
\label{sec:proofmain}
First, we prove that $(\text{ARS}_K, \text{ARS}_K)$ is a subgame perfect Nash equilibrium.
To prove this, we use the popular one-deviation
property which is a necessary and sufficient condition for subgame
perfect Nash equilibrium (SPNE). If {\em one-deviation property} (ODP) is
satisfied, no player can increase its payoff by changing its action at
the start of any subgame given the remainder of the player's own
strategy and the other players' strategies.

  \begin{theorem}[\cite{osborne2004introduction}] 
    \label{thm:odp}
For any
    infinitely repeated game with a discount factor $\delta <1,$ a
    strategy is a SPNE, if and only if it
    satisfies the one-deviation property.
\end{theorem}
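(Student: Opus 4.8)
The plan is to establish the two implications separately. The forward direction, SPNE $\Rightarrow$ ODP, is immediate and I would dispatch it in one line: a single-stage deviation is merely one particular strategy available to a player in a subgame, so if $s^\star_i$ is unimprovable by \emph{any} strategy in every subgame, it is in particular unimprovable by any one-stage deviation. All the content lies in the converse, ODP $\Rightarrow$ SPNE, which I would prove by contradiction.

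Suppose ODP holds but $s^\star$ is not an SPNE. Then there exist a history $h^t$, a player $i$, and a strategy $s_i'$ with $\mathcal{U}_i(s_i', \bm{s^\star_{-i}}\,|\,h^t) - \mathcal{U}_i(s^\star_i, \bm{s^\star_{-i}}\,|\,h^t) = \varepsilon > 0$. The first key step is \emph{continuity at infinity}, which is exactly what $\delta < 1$ buys us. Because the one-stage payoffs $U_i$ are bounded (say $|U_i| \le M$, as holds in our game since each pool's reward density is bounded) and $\delta < 1$, the discounted tail $\sum_{k \ge K} \delta^{k} U_i$ of any play is bounded in absolute value by $M \delta^{K}/(1-\delta) \to 0$. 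Hence I can truncate $s_i'$: define $\tilde s_i$ to agree with $s_i'$ for the first $K$ stages of the subgame and to revert to $s^\star_i$ thereafter, choosing $K$ large enough that $\tilde s_i$ still beats $s^\star_i$ by at least $\varepsilon/2$. This produces a profitable deviation that differs from $s^\star_i$ along the induced play at only \emph{finitely} many stages.

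The second key step is an \emph{unraveling} induction. Among all profitable deviations in the subgame at $h^t$ that differ from $s^\star_i$ in finitely many stages, pick $\tilde s_i$ with the fewest deviating stages, and let $\tau$ be its last one. Consider the subgame reached at stage $\tau$ along the play induced by $(\tilde s_i, \bm{s^\star_{-i}})$: there $\tilde s_i$ deviates at stage $\tau$ only and coincides with $s^\star_i$ from $\tau+1$ onward, so it is precisely a one-stage deviation from $s^\star_i$. By ODP this cannot increase player $i$'s continuation payoff in that subgame; therefore replacing $\tilde s_i$'s action at $\tau$ by the action prescribed by $s^\star_i$ weakly raises the continuation payoff there and, since play before $\tau$ is unchanged, weakly raises the payoff measured from $h^t$. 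The modified deviation is still strictly profitable (the positive gain only increased) yet deviates one fewer time, contradicting minimality. Hence no profitable deviation exists and $s^\star$ is an SPNE.

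The main obstacle is the continuity-at-infinity step: making rigorous that truncating a possibly infinite deviation preserves a strict gain. This is where $\delta < 1$ is indispensable, since without discounting the tail need not vanish, the reduction to finite deviations fails, and the one-deviation principle can genuinely break down. Once truncation is justified, the unraveling induction is routine, relying only on the monotone fact that improving the continuation payoff at a reached subgame, while holding earlier play fixed, cannot lower the overall discounted payoff.
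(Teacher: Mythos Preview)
The paper does not supply its own proof of this statement: Theorem~\ref{thm:odp} is simply quoted from Osborne's textbook \cite{osborne2004introduction} and invoked as a black box in the proof of Theorem~\ref{thm:main}. So there is no in-paper argument to compare against.

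Your proposal is the standard textbook proof (essentially the one found in Osborne or in Fudenberg--Tirole), and it is correct. The two ingredients you identify are exactly the right ones: continuity at infinity, which requires bounded stage payoffs and $\delta<1$ so that truncating an arbitrary profitable deviation to a finite horizon preserves a strict gain; and the backward-unraveling step that peels off the last deviation using ODP. One small point worth tightening in a formal write-up: when you ``replace $\tilde s_i$'s action at $\tau$,'' you should replace it at \emph{all} stage-$\tau$ histories (not only the one on the realized path), so that the modified strategy genuinely agrees with $s^\star_i$ from stage $\tau$ onward and the deviation count drops by one; off-path stage-$\tau$ histories are unreached and hence payoff-irrelevant, so this does not disturb the inequality. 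With that clarification the minimality contradiction is airtight.
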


Without loss of generality, to prove the ODP of (ARS$_K$, ARS$_K$), let Pool$_2$
deviates from ARS$_K$, i.e., attempts to change its action at the start of
any subgame, but all the actions of Pool$_1$ and the actions of Pool$_2$
after deviation follow ARS$_K$. Let time $0$ be the time when the deviation
occurs. Then, we have the following four cases of the subgame.  {\em
  (i)} $(\stai, \stami)$ is $({\tt G,G})$ at stage 0 (i.e., the start of
subgames), {\em (ii)} $(\stai, \stami)$ is $({\tt G,B})$ at stage 0,
{\em (iii)} $(\stai, \stami)$ is $({\tt B,G})$ at stage 0, and {\em (iv)}
$(\stai, \stami)$ is $({\tt B,B})$ at stage 0.

\smallskip
\noindent{{\bf Case (i)}}:   
In this case, the two pools continue to cooperate from stage 0, and their total payoffs $\mathcal{U}_1$ and $\mathcal{U}_2$ in the subgames are 0.
If Pool$_2$ executes an attack at the start of the subgame (i.e., $a_2^0 \not = \bm{\overline{0}}$), 
Pool$_1$ (which follows ARS$_K$) retaliates with infiltration power
$a_{1,ars}^1$ at stage 1, such that
\begin{equation}
U_2(\bm{\overline{0}},a_2^0)+U_2(a_{1,ars}^1,\bm{\overline{0}})<U_2(\bm{\overline{0}},\bm{\overline{0}}). \label{eq:case1_ars}
\end{equation}
Note that the value of $U_2(a_{1,ars}^1,\bm{\overline{0}})\leq 0.$
After stage 1, cooperation would be restored, thus Pool$_2$'s total
payoff $\mathcal{U}_2$ becomes $$U_2(\bm{\overline{0}},a_2^0)+\delta U_2(a_{1,ars}^1,\bm{\overline{0}}),$$ 
when the pool deviates from ARS$_K$ at the start of the subgame. 
This implies that, for any $a_2^0$, satisfying the ODP for Pool$_2$ requires the following condition: 
$$U_2(\bm{\overline{0}},a_2^0)+\delta
U_2(a_{1,ars}^1,\bm{\overline{0}}) \leq U_2(\bm{\overline{0}},\bm{\overline{0}}),$$
i.e., 
\begin{equation}
\delta \geq
    \frac{U_2(\bm{\overline{0}},\bm{\overline{0}})-U_2(\bm{\overline{0}},a_2^0)}{
    U_2(a_{1,ars}^1,\bm{\overline{0}})} \quad\text{  if  }\quad U_2(a_{1,ars}^1,\bm{\overline{0}})\not=0.\label{eq:case1}
\end{equation}
If $a^1_{1,ars}$ comes from $IP_{\text{faw}}$, not $IP_{\text{bwh}}$, RHS of \eqref{eq:case1} is less than $K$.
If $a^1_{1,ars}$ comes from $IP_{\text{bwh}}$, RHS of \eqref{eq:case1} is in a compact set 
$$\left\{\text{\bf Retaliate}_0\left(\alpha_1, \bm{\overline{0}},\alpha_{2},a_2^0, \bm{\overline{0}}\right)\big|\,0\leq f_2^0\leq \alpha_2 \text{ or } 0\leq b_2^0\leq \alpha_2\right\},$$ which is 
a piecewise continuous image of a compact space and includes only values less than 1.
Therefore, the maximum value of RHS of Eq.~\eqref{eq:case1} is less than $1$; i.e., 
\begin{equation}
    \max_{\substack{a^0_2\\U_2(a_{1,ars}^1,\bm{\overline{0}})\neq 0}}\frac{U_2(\bm{\overline{0}},\bm{\overline{0}})-U_2(\bm{\overline{0}},a_2^0)}{
    U_2(a_{1,ars}^1,\bm{\overline{0}})}<1.
    \label{eq:deltamax}
\end{equation}
Moreover, $U_2(a_{1,ars}^1,\bm{\overline{0}})$ is an increasing function of $K$ (i.e., $|U_2(a_{1,ars}^1,\bm{\overline{0}})|$ is a decreasing function of $K$), which implies that LHS of \eqref{eq:deltamax} is also an increasing function of $K$. 
This is because the FAW attack gives not only the retaliator (Pool$_1$) but also the deviator (Pool$_2$) more reward when compared with the BWH attack. 
In addition, for given $\alpha_1$ and $a_{2}^0$, when $\alpha_2$ increases, $U_2(\bm{\overline{0}},a_{2}^0)$ and $U_2(a_{1,ars}^1,\bm{\overline{0}})$ respectively increase and decrease (refer to \eqref{eq:faw_payoff} and \eqref{eq:bwh_payoff}). 
Again note that $U_2(a_{1,ars}^1,\bm{\overline{0}})\leq 0$ if $U_2(\bm{\overline{0}},a_{2}^0)\geq 0.$ 
Therefore, LHS of \eqref{eq:deltamax} is also an increasing function of $\alpha_2$ for given $\alpha_1$. 

\smallskip
\noindent{\bf Case (ii)}:  
In this case, Pool$_1$ retaliates against Pool$_2$, and Pool$_2$ cooperate at stage 0.
Let Pool$_1$ following ARS$_K$ retaliate with the action $a_{1,ars}^0$ at stage 0.
Then, when Pool$_2$ always follows ARS$_K$ in the subgames, 
Pool$_2$'s total payoff $\mathcal{U}_2 = U_2(a_{1,ars}^0,\bm{\overline{0}}).$ 
If Pool$_2$ deviates from ARS$_K$, executing an attack with an action $a_2^0
\neq \bm{\overline{0}},$ 
Pool$_1$ turns out to retaliate with an action $a_{1,ars}^1$ at stage 1,
such that \begin{equation*}
U_2(a_{1,ars}^0, a_2^0)+U_2(a_{1,ars}^1,\bm{\overline{0}})<U_2(a_{1,ars}^0,\bm{\overline{0}}), %\label{eq:case2_ars}
\end{equation*}
in which case Pool$_2$'s total payoff $\mathcal{U}_2$ is $$U_2(a_{1,ars}^0,a_2^0)+\delta U_2(a_{1,ars}^1,\bm{\overline{0}}).$$ 
To satisfy ODP for Pool$_2,$ the condition $\mathcal{U}_2
\leq U_2(a_{1,ars}^0,\bm{\overline{0}})$ should be satisfied, i.e., 
\begin{equation}
\delta \geq \frac{U_2(a_{1,ars}^0,\bm{\overline{0}})-U_2(a_{1,ars}^0,a_2^0)}{ U_2(a_{1,ars}^1,\bm{\overline{0}})}.\label{eq:case2}
\end{equation}
Similar to the Case \textit{(i)}, the following is satisfied:
\begin{equation}
\max_{\substack{a^0_2\\U_2(a_{1,ars}^1,\bm{\overline{0}})\neq 0}}{\frac{U_2(a_{1,ars}^0,\bm{\overline{0}})-U_2(a_{1,ars}^0,a_2^0)}{ U_2(a_{1,ars}^1,\bm{\overline{0}})}<1}\label{eq:deltamax2}
\end{equation}
Also, LHS of \eqref{eq:deltamax2} is an increasing function of $K$ and $\alpha_2$ for given $\alpha_1$. 

\smallskip
\noindent{\bf Case (iii)} and {\bf Case (iv)}:  We omit
these, because they can be treated similarly to Cases {\em (i)} and {\em (ii)}. 

As a result, we can set $F_K(\alpha_1,\alpha_2)$ as  $\max(X_1^\delta, X_2^\delta),$
where 
\begin{align*}
    X_1^\delta=\max_{\substack{a^0_2\\U_2(a_{1,ars}^1,\bm{\overline{0}})\not = 0}}&\left\{\frac{U_2(\bm{\overline{0}},\bm{\overline{0}})-U_2(\bm{\overline{0}},a_2^0)}{
    U_2(a_{1,ars}^1,\bm{\overline{0}})},
    \frac{U_2(a_{1,ars}^0,\bm{\overline{0}})-U_2(a_{1,ars}^0,a_2^0)}{ U_2(a_{1,ars}^1,\bm{\overline{0}})},\right.\\
    &\left.\frac{U_2(\bm{\overline{0}},a_{2,ars}^0)-U_2(\bm{\overline{0}},a_{2}^0)}{ U_2(a_{1,ars}^1,\bm{\overline{0}})}, \frac{U_2(\bm{\overline{0}},\bm{\overline{0}})-U_2(\bm{\overline{0}},a_2^0)}{U_2(a_{1,ars}^1,\bm{\overline{0}})}\right\}
\end{align*}
and 
\begin{align*}
    X_2^\delta=\max_{\substack{a^0_1\\U_1(\bm{\overline{0}},a_{2,ars}^1)\not = 0}}&\left\{\frac{U_1(\bm{\overline{0}},\bm{\overline{0}})-U_1(a_1^0,\bm{\overline{0}})}{
    U_1(\bm{\overline{0}},a_{2,ars}^1)},
    \frac{U_1(a_{1,ars}^0,\bm{\overline{0}})-U_1(a_1^0, a_{2,ars}^0)}{ U_1(\bm{\overline{0}},a_{2,ars}^1)},\right.\\
    &\left.\frac{U_1(a_{1,ars}^0,\bm{\overline{0}})-U_1(a_{1}^0,\bm{\overline{0}})}{ U_1(\bm{\overline{0}},a_{2,ars}^1)}, \frac{U_1(\bm{\overline{0}},\bm{\overline{0}})-U_1(a_1^0,\bm{\overline{0}})}{U_1(\bm{\overline{0}},a_{2,ars}^1)}\right\}.
\end{align*}
The values of $X_1^\delta$ and $X_2^\delta$ are always less than 1, and they are increasing functions of $K$. 
Moreover, $X_1^\delta$ is an increasing function of $\alpha_2$ for given $\alpha_1,$ and $X_1^\delta$ is a decreasing function of $\alpha_2$ for given $\alpha_1.$
Note that $X_1^\delta=X_2^\delta$ when $\alpha_1=\alpha_2.$
Therefore, $F_K(\alpha_1,\alpha_2)$ is an increasing function of $|\alpha_1-\alpha_2|$ for given $\alpha_1.$

Next, we prove that (ARS$_K$, ARS$_{K'}$) is a Nash equilibrium regardless of whether $K$ is equal to $K'$.
Indeed, this fact can be easily proved by using the above result.
When Pool$_1$ and Pool$_2$ follow ARS$_K$ and ARS$_{K'}$, respectively, they achieve cooperation, where their total payoff $\mathcal{U}_1$ and $\mathcal{U}_2$ are 0.
Because, in a Nash equilibrium, each strategy is the best response for given other strategies, we need to show 
\begin{align}
&\mathcal{U}_1(s_1, \text{ARS}_{K'})\leq \mathcal{U}_{1}(\text{ARS}_K, \text{ARS}_{K'})=0 \text{  if  } \,\,\forall s_1 \not\in \text{ARS},\label{eq:1_de}\\
&\mathcal{U}_2( \text{ARS}_{K}, s_2)\leq \mathcal{U}_{2}(\text{ARS}_K, \text{ARS}_{K'})=0 \text{  if  } \,\,\forall s_2 \not\in \text{ARS}.\label{eq:2_de}
\end{align}
Also, we note the following:
$$\mathcal{U}_{1}(\text{ARS}_K, \text{ARS}_{K'})=\mathcal{U}_{1}(\text{ARS}_{K'}, \text{ARS}_{K'})=0.$$
Therefore, for all $s_1\not\in$ ARS, \eqref{eq:1_de} would be satisfied because the strategy vector $(\text{ARS}_{K'}, \text{ARS}_{K'})$ is a subgame perfect Nash equilibrium.
In the same manner, \eqref{eq:2_de} holds.
This completes the proof.

\subsection{Non-Emptiness of infiltration set for BWH}
\label{sec:nonempty}
\begin{theorem}
The infiltration set $IP_{bwh}$ is non-empty. 
\label{thm:nonempty}
\end{theorem}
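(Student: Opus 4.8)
The plan is to unwind \eqref{eq:BWH-can} into a single scalar inequality about the stage‑game payoffs and then exhibit an explicit member of $IP_{\text{bwh}}$. The set is only ever consulted by \textbf{Retaliate} when ARS$_K$ is in the standing state $({\tt G},{\tt B})$ and $IP_{\text{faw}}$ has come back empty, so I would begin by recording what the inputs $a_i^{t-1},a_{-i}^{t-1},a_{-i,ars}^{t-1}$ can be in that situation: since Pool$_i$ obeyed ARS$_K$ at stage $t-1$ while Pool$_{-i}$ did not, Fig.~\ref{fig:ars} forces at least one of $a_i^{t-1},a_{-i,ars}^{t-1}$ to equal $\bm{\overline{0}}$, and the analysis splits into the sub‑case where Pool$_i$ was cooperating at $t-1$ (then $a_{-i,ars}^{t-1}=\bm{\overline{0}}$ as well) and the one where Pool$_i$ was itself mid‑retaliation (then $a_{-i,ars}^{t-1}=\bm{\overline{0}}$ because Pool$_{-i}$ should have been contrite). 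Writing $P_{\mathrm{act}}:=U_{-i}(a_i^{t-1},a_{-i}^{t-1})$, $P_{\mathrm{ars}}:=U_{-i}(a_i^{t-1},a_{-i,ars}^{t-1})$ and $\phi(b):=U_{-i}\big((0,b),\bm{\overline{0}}\big)$, membership $b\in IP_{\text{bwh}}$ is exactly $\phi(b)<P_{\mathrm{ars}}-P_{\mathrm{act}}$; and because $f_i=0$ gives $U_{-i}((0,0),\bm{\overline{0}})=0$, the emptiness of $IP_{\text{faw}}$ already forces $\Delta:=P_{\mathrm{act}}-P_{\mathrm{ars}}\ge 0$. So it suffices to find $b\in[0,\alpha_i]$ with $-\phi(b)>\Delta$: a BWH counter‑attack whose damage to the deviator strictly exceeds the deviator's one‑stage advantage.

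For the core sub‑case ($a_i^{t-1}=\bm{\overline{0}}$, hence $P_{\mathrm{ars}}=0$ and $\Delta=U_{-i}(\bm{\overline{0}},a_{-i}^{t-1})$ is just Pool$_{-i}$'s gain from hitting a cooperating pool of size $\alpha_i$) I would use \eqref{eq:bwh_payoff} to get the closed form $-\phi(b)=\dfrac{b\,(1-\alpha_{-i}-b)}{(1-b)(\alpha_{-i}+b)}$, which vanishes at $b=0$ and is strictly positive on $(0,\min\{\alpha_i,1-\alpha_{-i}\})$. If Pool$_{-i}$'s deviation was a (profitable) BWH attack of power $x$, take the ``equal‑retaliation'' choice $b=x$; this is feasible because any profitable BWH infiltration obeys $x<\frac{\alpha_{-i}\alpha_i}{1-\alpha_{-i}}\le\alpha_i$ (using $\alpha_{-i}\le 1/2$), and clearing denominators in $-\phi(x)>U_{-i}(\bm{\overline{0}},(0,x))$ leaves a difference of numerators that factors as
\[
(1-2\alpha_{-i})\,(\alpha_{-i}\alpha_i+x^2)\;+\;2\,\alpha_{-i}\,x\,(1-\alpha_{-i}-\alpha_i),
\]
which is strictly positive as soon as $\alpha_{-i}<1/2$ or $\alpha_i<1/2$. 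If instead the deviation was a FAW attack of power $x$, its gain exceeds the analogous BWH gain only by a fork‑stealing term of size $O\!\big(x(1-\alpha_{-i}-\alpha_i)\big)$; there I would not match powers but push $b$ up toward the interior maximizer of $-\phi$ on $[0,\alpha_i]$, using the slack above (which has the same shape) to swallow the fork term. The remaining sub‑cases, in which Pool$_i$ was mid‑retaliation at $t-1$ so that $P_{\mathrm{ars}}<0$, need a separate bound on $\Delta$ in terms of the infiltration powers that were already in play at stage $t-1$; they are analogous but bookkeeping‑heavy.

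The step I expect to be the genuine obstacle is exactly this quantitative comparison: $-\phi$ is bounded (it does not blow up as $b\to\alpha_i$), so one really must show the retaliation bound beats $\Delta$ uniformly over the admissible region $\{0<\alpha_i,\alpha_{-i}\le 1/2\}$, and the FAW‑deviation case is the delicate one because of the extra fork‑stealing gain. Moreover the inequality is tight at the corner $\alpha_i=\alpha_{-i}=1/2$: with no external miners FAW degenerates to BWH, and there the deviator's best gain and the largest BWH damage Pool$_i$ can inflict both equal $\frac{1}{9}$, so the strict inequality---hence non‑emptiness---only survives once the standing hypothesis $\alpha_i\le 1/2$ is read as strict (equivalently: the two pools do not jointly own all of the hash power). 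Away from that boundary, the factorization above together with monotonicity of $-\phi$ near $0$ produces an explicit witness $b$, and what is left is the routine verification that substituting the optimizers $M^F_{-i},M^B_{-i}$ of \eqref{eq:max_faw}--\eqref{eq:max_bwh} into the gain expressions keeps them below $-\phi$ evaluated at the truncated maximizer.
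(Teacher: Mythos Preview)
Your approach is fundamentally different from the paper's. The paper does not give an analytic proof at all: it sets up the inequality, bounds the admissible inputs $a_i^{t-1}$ and $a_{-i,ars}^{t-1}$ by using that they are themselves outputs of \textbf{Retaliate} (so $b_i\le M_i^B$, and $f_i$ lies below an explicit threshold), packages the required slack into scalars $\mathcal{F}$ and $\mathcal{F}'(k)$, and then \emph{verifies numerically}, via two figures over the $(\alpha_i,\alpha_{-i})$ grid, that a suitable $k$ with $\mathcal{F}'(k)>0$ always exists. Your attempt to exhibit an explicit witness and close the inequality by hand is more ambitious, and the factorization $(1-2\alpha_{-i})(\alpha_{-i}\alpha_i+x^2)+2\alpha_{-i}x(1-\alpha_{-i}-\alpha_i)$ for the BWH-deviation case is a clean observation the paper does not have.

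That said, there is a genuine gap in your case split. You assert that in both sub-cases $a_{-i,ars}^{t-1}=\bm{\overline{0}}$, but this is false: the transition $({\tt B},{\tt G})\to({\tt G},{\tt B})$ in Fig.~\ref{fig:ars} occurs exactly when Pool$_i$ plays $\bm{\overline{0}}$ contritely while Pool$_{-i}$ was \emph{supposed} to retaliate (so $a_{-i,ars}^{t-1}=\text{Rt}\neq\bm{\overline{0}}$) but played something else. In that branch $P_{\mathrm{ars}}=U_{-i}(\bm{\overline{0}},a_{-i,ars}^{t-1})$ is a retaliation payoff, not zero, and your reduction of $\Delta$ to the single-attacker gain $U_{-i}(\bm{\overline{0}},a_{-i}^{t-1})$ is invalid. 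The paper carries exactly this extra term $U_{-i}(\bm{\overline{0}},a_{-i,ars}^{t-1})-U_{-i}(\bm{\overline{0}},a_{-i}^{t-1})$ through its definition of $\mathcal{F}$, and it is part of why a recursive bound on the range of $a_{-i,ars}^{t-1}$ is needed.

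Separately, your FAW-deviation case is only a sketch: ``push $b$ toward the interior maximizer and use the slack to swallow the fork term'' is not an argument, and you yourself flag this as the real obstacle. The fact that the paper resorts to numerical verification here suggests the uniform analytic bound is genuinely awkward; your boundary observation at $\alpha_i=\alpha_{-i}=\tfrac12$ confirms there is no slack to spare, so any completed analytic proof would have to exploit the strict inequality $\alpha_i+\alpha_{-i}<1$ explicitly.
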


\begin{proof}

To prove this theorem, we show that there exists an infiltration power $b_i$ such that 
$$U_{-i} (\bm{\overline{0}},a_{-i}^{t-1})+U_{-i} ((0,b_i),\bm{\overline{0}})
< U_{-i} (\bm{\overline{0}},a_{-i,ars}^{t-1}),$$ 
\textit{and} there also exists an infiltration power $b_i$ such that 
$$U_{-i} (a_i^{t-1},a_{-i}^{t-1})+U_{-i} ((0,b_i),\bm{\overline{0}})< U_{-i} (a_i^{t-1},\bm{\overline{0}}),$$
because only one of $a_i^{t-1}$ and $a_{-i,ars}^{t-1}$ can have 
a value different to $\bm{\overline{0}}$.
To show this, we need to show that there is an infiltration power $b_i$ satisfying the following:
\begin{align}
U_{-i} ((0,b_i),\bm{\overline{0}})< 
\min_{a_i^{t-1},a_{-i}^{t-1},a_{-i,ars}^{t-1}} \Big\{&
U_{-i} (a_i^{t-1},\bm{\overline{0}})-U_{-i} (a_i^{t-1},a_{-i}^{t-1}),
\cr & U_{-i} (\bm{\overline{0}},a_{-i,ars}^{t-1})-U_{-i} (\bm{\overline{0}},a_{-i}^{t-1})
\Big\},\label{eq:con_f}
\end{align} 
where the right-hand side of \eqref{eq:con_f} can have the minimum value when $a_{-i}^{t-1}$, and $a_{-i,ars}^{t-1}$ have the following forms, 
$(f_{-i},0)$ and $(0, b_{-i})$, respectively.
Moreover, for an arbitrary value $p$ in $[0,\alpha_i]$, 
$U_{-i} (a_i^{t-1},\bm{\overline{0}})-U_{-i} (a_i^{t-1},a_{-i}^{t-1})$ in the case that
$a_i^{t-1}$ is $(p,0)$ is always less than or equal to that in the case that $a_i^{t-1}$ is $(0,p)$.

First, we consider the range of $a_i^{t-1}$ and $a_{-i,ars}^{t-1}$.
They are outputs of ARS executed by Pool$_i$ and Pool$_{-i}$, respectively.
Note that Pool$_1$, which calls {\bf Retaliate}, has \textit{good} standing. 
Therefore, $a_i$ should be $\bm{\overline{0}}$ or an output of \textbf{Retaliate}.
Also, $a_{-i,ars}$ is $\bm{\overline{0}}$ or an output of {\bf Retaliate} called by Pool$_{-i}$, by its definition. 
Next, we assume that $IP_{\text{bwh}}$ in {\bf Retaliate} of Pool$_i$ and Pool$_{-i}$ includes $M_{i}^B$ and $M_{-i}^B$, respectively.
Then the element $b_i$ of $a_i^{t-1}$ and element $b_{-i}$ of $a_{-i,ars}$ would be less than or equal to $M_i^{B}$ and $M_{-i}^B$, respectively.
Henceforth, $IP_{\text{bwh}}$ for Pool$_i$ and Pool$_{-i}$ is expressed as $IP_{\text{bwh},i}$ and $IP_{\text{bwh},-i}$, respectively, unless confusion arises.
Similarly, $IP_{\text{faw}}$ would be also expressed as described above.
Moreover, if $IP_{\text{faw},i}$ is a non-empty set, the set includes as follows.
\begin{equation} 
\frac{\sqrt{(1-\alpha_1)\alpha_2^2+(\alpha_1-\alpha_1^2)\alpha_2}-\alpha_2}{1-\alpha_1-\alpha_2},
\label{eq:max}
\end{equation}
which is an infiltration power $f_i$ that minimizes $U_2((f_i,0),\bm{\overline{0}})$.
Therefore, it is sufficient to consider the range of $a_i^{t-1}$ as 
$$\{(f_i,0)\,|\, 0\leq f_i \leq \max\{M_i^{B}, Eq.~\eqref{eq:max}\}\}$$
in order to show that there always exists 
an infiltration power $b_i$, which satisfies \eqref{eq:con_f}, because of the fact that
$U_{-i} (a_i^{t-1},\bm{\overline{0}})-U_{-i} (a_i^{t-1},a_{-i}^{t-1})$ in the case that
$a_i$ is $(p,0)$ is always less than or equal to that in the case that $a_i$ is $(0,p)$, for an arbitrary value $p$ in $[0,\alpha_i]$.

We define $\mathcal{F}$ as 
\begin{align*}
\min_{a_i^{t-1},a_{-i}^{t-1},a_{-i,ars}^{t-1}}\Big \{&U_{-i} (a^{t-1}_i,\bm{\overline{0}})-U_{-i} (a^{t-1}_i,a^{t-1}_{-i}),\cr 
&U_{-i} (\bm{\overline{0}},a_{-i,ars}^{t-1})-U_{-i} (\bm{\overline{0}},a_{-i}^{t-1})\Big\}-U_{-i} ((0,M_i^{B}),\bm{\overline{0}}).
\end{align*} 
Under the assumption that $IP_{\text{bwh},i}$ includes $M_i^B$, 
we investigated $\mathcal{F}$ by varying Pool$_i$'s size and Pool$_{-i}$'s size.
If $\mathcal{F}$ is always positive, $IP_{\text{bwh},i}$ is a non-empty set, including $M_i^B$.
Fig.~\ref{fig:p_existence} represents the value of $\mathcal{F}$ 
under the assumption that $IP_{\text{bwh},i}$ includes $M_i^B$.
However, the figure shows that there are some cases that $\mathcal{F}$ is negative. 

\begin{figure}[ht]
\centering{
\subfloat[When assuming that $IP_{\text{bwh},i}$ includes $M_i^B$, 
this figure represents the value of $\mathcal{F}$.]{
\includegraphics[width=0.24\textwidth]{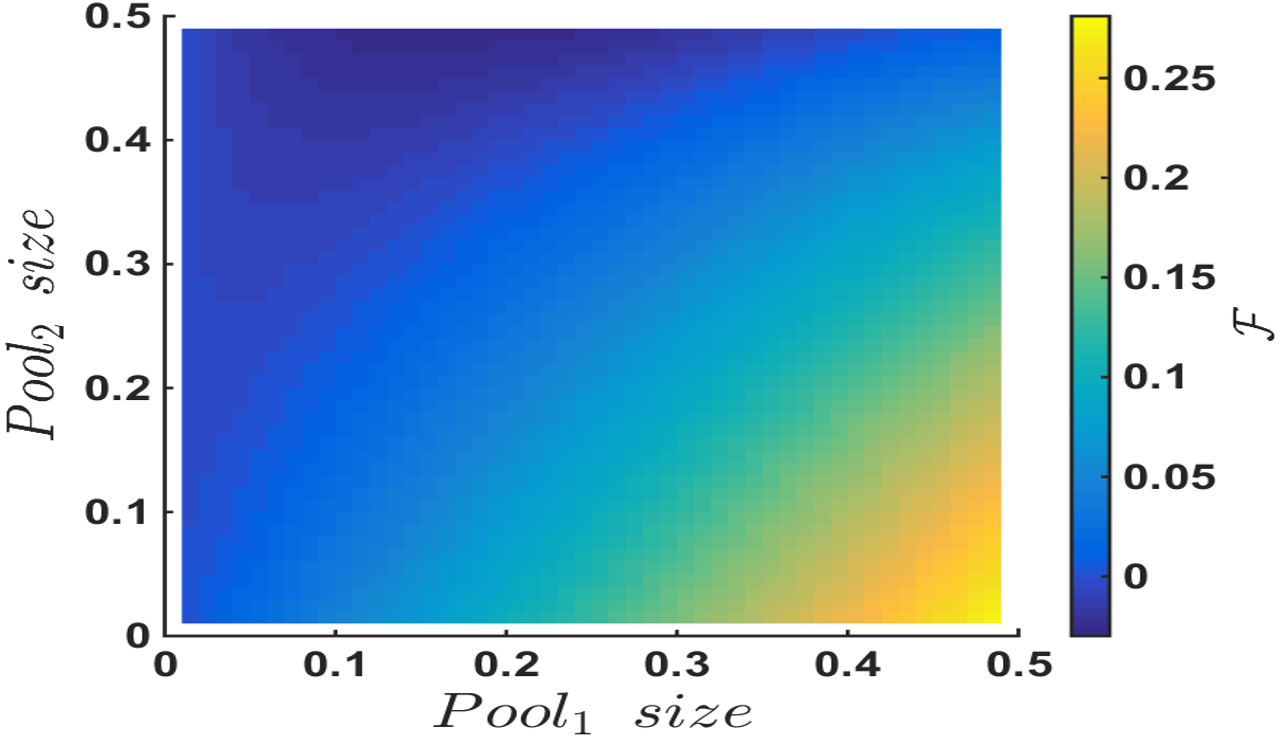}
\label{fig:p_existence}}
\subfloat[This figure represents the value of $\mathcal{F'}(k)$ for a value $k$, 
which makes $\mathcal{F'}(k)$ positive.]{
\includegraphics[width=0.24\textwidth]{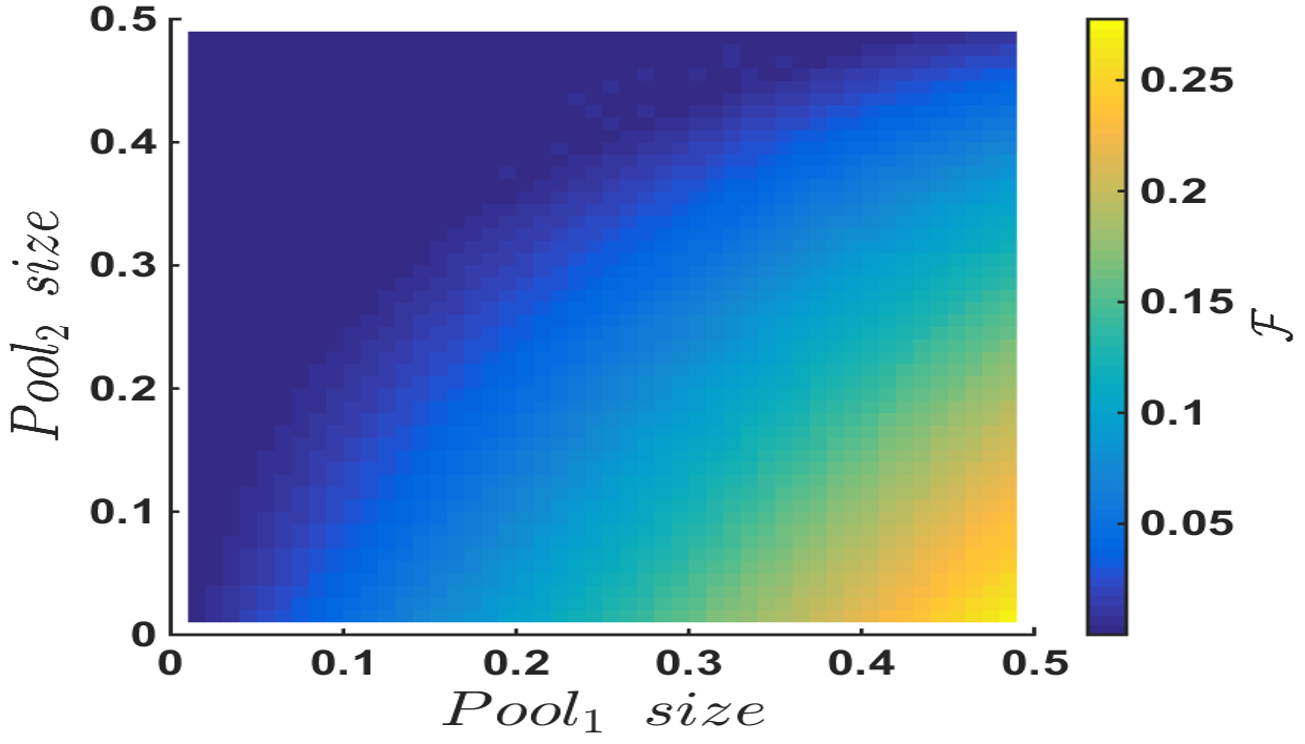}
\label{fig:existence}}}
\caption{The values of $\mathcal{F}$ and $\mathcal{F'}(k)$}
\label{fig:exist}
\end{figure}

In the case that $\mathcal{F}$ is negative, all elements of the set $IP_{\text{bwh},i}$ 
are greater than $M_i^B$, or the set $IP_{\text{bwh},i}$ is an empty set.
Therefore, we consider a function $\mathcal{F'}(k)$, which is defined as
\begin{align*}
\min_{a^{t-1}_i,a^{t-1}_{-i},a^{t-1}_{-i,ars}} \Big\{&U_{-i} (a_i^{t-1},\bm{\overline{0}})-U_{-i} (a^{t-1}_i,a^{t-1}_{-i}), \cr
&U_{-i} (\bm{\overline{0}},a^{t-1}_{-i,ars})-U_{-i} (\bm{\overline{0}},a^{t-1}_{-i})\Big\}-U_{-i} ((0,k),\bm{0}),\notag
%$}
\end{align*} 
and an element $f_i$ of $a_{i}^{t-1}$ ranges from 0 to $k$.
When $k$ is $M_i^B$, the function $\mathcal{F'}(k)$ is equal to $\mathcal{F}$.
In the case that $\mathcal{F}$ is negative, $k$ should be greater than $M_i^B$
to ensure that $\mathcal{F'}(k)$ is positive.
We find the value of $k$ such that $\mathcal{F'}(k)$ is positive, considering only $k$ equal to or greater than $M_i^B$.
For example, even though $k$ is less than $M_i^B$ in the case that  
$\mathcal{F}$ is positive, we regard the value of $k$ as $M_i^B$.
Moreover, in this case, $b_{-i}$ in $a_{-i,ars}^{t-1}$ ranges from $0$ to $k'$, which corresponds to $k$ when 
Pool$_{-i}$ calls \textbf{Retaliate} for retaliation against Pool$_i$.

Fig.~\ref{fig:existence} represents $\mathcal{F'}(k)$ for $k$, which makes $\mathcal{F'}(k)$ positive.
We can see that Fig.~\ref{fig:existence} represents positive values of $\mathcal{F'}(k)$ in all cases.
It also means that there always exists $k$, which makes $\mathcal{F'}(k)$ positive.
As a result, $IP_{\text{bwh},i}$ is a non-empty set. 
\end{proof}

\subsection{Number of Blocks Found by Attackers}
\label{subsec:number}
\begin{theorem}
  Let $N_{\text{faw}}$ and $N_{\text{bwh}}$ be the number of blocks
  found by attackers for FAW and BWH attacks, respectively. Then, the
  following hold:
  \begin{align*}
    N_{\text{faw}} &\sim Geo\,\,\left(\frac{(1-\gamma)\alpha}{\beta+\gamma\alpha(1-\alpha-\beta)+(1-\gamma)\alpha}\right)\cr
    N_{\text{bwh}} &\sim Geo\,\,\left(\frac{(1-\gamma)\alpha}{\beta+(1-\gamma)\alpha}\right).
  \end{align*}
\label{thm:geo}
\end{theorem}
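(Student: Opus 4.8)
The plan is to obtain both distributions from one continuous-time renewal argument. First I would model block discovery as a superposition of four independent Poisson clocks whose rates are proportional to computational power and sum to $1$: the attacker's solo mining at rate $(1-\gamma)\alpha$, the attacker's infiltrated mining at rate $\gamma\alpha$, the victim pool's loyal miners at rate $\beta$, and all remaining miners at rate $1-\alpha-\beta$. Since $P$ lasts until the attacker's solo power finds a block, and since a new $P$ begins immediately after the previous such block (at which point any infiltrated FPoW the attacker was holding has become stale and is discarded), at the start of $P$ the attacker is in the ``not-withholding'' state. By memorylessness of the exponential clocks it then suffices to track the ordered sequence of discovery events together with the binary state ``the attacker is / is not sitting on an unreleased infiltrated FPoW''.

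For the BWH attack this is immediate: an infiltrated FPoW is withheld forever and never causes a fork, so every firing of the infiltrated clock, and every firing of the ``all remaining miners'' clock, is a null event. Discarding the null events leaves an i.i.d.\ stream of Bernoulli trials, each of which is a victim-pool block with probability $\frac{\beta}{(1-\gamma)\alpha+\beta}$ and the terminating attacker-solo block with probability $\frac{(1-\gamma)\alpha}{(1-\gamma)\alpha+\beta}$; hence $N_{\mathrm{bwh}}$ is the number of victim-pool blocks before the first solo block, which is $Geo\bigl(\frac{(1-\gamma)\alpha}{\beta+(1-\gamma)\alpha}\bigr)$.

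For the FAW attack I would write the two-state chain explicitly. From the not-withholding state: an attacker-solo firing (probability $(1-\gamma)\alpha$) ends $P$; a victim-loyal firing (probability $\beta$) credits the victim with a block; an infiltrated firing (probability $\gamma\alpha$) moves to the withholding state; a firing by the remaining miners (probability $1-\alpha-\beta$) is null. From the withholding state: an attacker-solo firing still ends $P$ (the held FPoW is abandoned); a firing by the remaining miners triggers the fork, which the attacker wins since $c=1$, so the victim is credited and the process returns to the not-withholding state; a victim-loyal firing likewise credits the victim and returns to not-withholding (the held FPoW is now stale); a further infiltrated firing keeps the withholding state. The key structural fact is that \emph{every} event crediting the victim with a block returns the process to the not-withholding state, so the chain regenerates after each victim-pool block. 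Letting $q_0$ and $q_1$ denote the probability that $P$ ends before the next victim-pool block, starting from the not-withholding and withholding states respectively, I would solve the two-equation linear system $q_1=(1-\gamma)\alpha+\gamma\alpha\,q_1$ and $q_0=(1-\gamma)\alpha+(1-\alpha-\beta)\,q_0+\gamma\alpha\,q_1$, giving $q_1=\frac{(1-\gamma)\alpha}{1-\gamma\alpha}$ and $q_0=\frac{(1-\gamma)\alpha}{(1-\gamma\alpha)(\alpha+\beta)}$; expanding $(1-\gamma\alpha)(\alpha+\beta)=\beta+\gamma\alpha(1-\alpha-\beta)+(1-\gamma)\alpha$ then shows $N_{\mathrm{faw}}$ is $Geo$ with exactly the stated parameter.

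The arithmetic is routine; the step that needs care is fixing the transition behavior in the withholding state, because Section~\ref{Background} states the fork-release rule only for the case where an \emph{external} miner publishes a block. I would argue that when instead the victim's own loyal miners or the attacker's own solo miner find a block, the held FPoW becomes stale and is simply dropped: this is what makes the ``return to not-withholding after every victim block'' regeneration property hold, and hence what upgrades the conclusion from a merely asymptotic statement to an exact geometric law. It is also the reading consistent with the reward-density expression $Rd_p=\frac1\alpha+\frac{N\gamma}{\beta+\gamma\alpha}$ used in Section~\ref{sec:detection}, in which each of the $N$ victim-pool blocks contributes the infiltrator's share $\frac{\gamma\alpha}{\beta+\gamma\alpha}$ regardless of which of the two mechanisms produced it.
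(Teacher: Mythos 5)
Your proposal is correct and reaches exactly the stated parameters, but it follows a genuinely different route from the paper. The paper's proof works in continuous time at the level of the whole period: it asserts that $P$ is exponential with rate $\frac{(1-\gamma)\alpha}{1-\gamma\alpha}$ and that, conditionally on $P$, the victim's block count is Poisson with mean $\frac{P(\beta+\gamma\alpha(1-\alpha-\beta))}{1-\gamma\alpha}$, and then evaluates the exponential mixture of Poissons by a single Gamma-type integral to read off the geometric law. Your argument instead passes to the embedded jump chain of the superposed Poisson clocks, tracks the withholding state explicitly, and uses the regeneration after every victim-credited block to get the geometric law directly as ``failures before the first success,'' solving a two-equation system for $q_0$ and $q_1$; the identity $(1-\gamma\alpha)(\alpha+\beta)=\beta+\gamma\alpha(1-\alpha-\beta)+(1-\gamma)\alpha$ then matches the paper's parameter. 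What your version buys is a first-principles derivation of the fork contribution $\gamma\alpha(1-\alpha-\beta)$, which the paper simply declares ``straightforward to see'' inside its conditional Poisson rate, together with an explicit statement of the modeling convention (a held infiltrated FPoW is dropped when a non-external block arrives) that the paper leaves implicit; it also avoids the mixture integral entirely. What the paper's version buys is brevity and a computation that does not require verifying the Markov/regeneration structure of the finer event-level model. Since both models induce the same jump-chain probabilities for the events that matter (victim block versus period-ending solo block), they agree on the distribution of $N$, and your BWH case coincides with the paper's ``similar'' reduction with $\beta$ in place of $\beta+\gamma\alpha(1-\alpha-\beta)$.
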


\begin{proof}
  We consider the FAW attack, where it is straightforward to see that  
  $P$ has an exponential distribution with rate parameter
  $\frac{(1-\gamma)\alpha}{1-\gamma\alpha}$, and for a \textit{given}
  value of $P$, $N$ has a Poisson distribution with a parameter
  $\frac{P(\beta+\gamma\alpha(1-\alpha-\beta))}{1-\gamma\alpha}$.  Then,
  the following holds:
\begin{equation}
\resizebox{\hsize}{!}{
$\begin{aligned}
&\mathsf{Pr}(N) = \int_{0}^{\infty}\mathsf{Pr}(N,P)dP=\int_{0}^{\infty}\mathsf{Pr}(N|P)\mathsf{Pr}
(P)dP\notag
\\=&\int_{0}^{\infty}\frac{(P(\beta+\gamma\alpha(1-\alpha-\beta)))^Ne^{-\frac{P(\beta+\gamma\alpha(1-\alpha-\beta))}{1-\gamma\alpha}}}{N!(1-\gamma\alpha)^N}\cdot \frac{(1-\gamma)\alpha e^{-\frac{P(1-\gamma)\alpha}{1-\gamma\alpha}}}{1-\gamma\alpha}\notag\\
=&\left(\frac{\beta+\gamma\alpha(1-\alpha-\beta)}{1-\gamma\alpha}\right)^N \cdot\frac{(1-\gamma)\alpha}{1-\gamma\alpha}\int_{0}^{\infty}\frac{P^N}{N!}e^{\frac{-P(\beta+\gamma\alpha(1-\alpha-\beta)+(1-\gamma)\alpha)}{1-\gamma\alpha}}\notag\\
=&\left(\frac{\beta+\gamma\alpha(1-\alpha-\beta)}{1-\gamma\alpha}\right)^N \cdot\frac{(1-\gamma)\alpha}{1-\gamma\alpha}\cdot\left(\frac{1-\gamma\alpha}{\beta+\gamma\alpha(1-\alpha-\beta)+(1-\gamma)\alpha}\right)^{N+1}\notag\\
=&\frac{(\beta+\gamma\alpha(1-\alpha-\beta))^N (1-\gamma)\alpha}{(\beta+\gamma\alpha(1-\alpha-\beta)+(1-\gamma)\alpha)^{N+1}},\notag
\end{aligned}$}
\end{equation}
which implies that $N$ is a geometric random variable with parameter
$\frac{(1-\gamma)\alpha}{\beta+\gamma\alpha(1-\alpha-\beta)+(1-\gamma)\alpha}$
for FAW attack. The proof of BWH attack is similar to the above.   
\end{proof}

\subsection{Evasion of the Identification Methods Reducing the Variance}\label{sec:evade}

\noindent\textbf{Case 1: What if the attacker (the manager of the attacking pool) distributes the rewards gained from the victim pool to miners at a random time?} 
The attacker can pay the rewards gained from the victim pool to miners
at a random time. This behavior can reduce the variance in $Rd_p$.
However, if the victim (the manager of the victim pool) frequently changes the mole's account, the victim still observes a relatively large variance among rewards in the accounts.
Therefore, although identification is slackened by the attacker, the victim can still identify the attacker by observing the variance in reward densities.

\noindent\textbf{Case 2: What if the attacker distributes the rewards gained from the victim pool for one period $P$ in proportion to the number of shares submitted over several periods?}
To reduce the variance in $Rd_p$, the attacker may pay the rewards 
gained from the victim pool for one period $P$ to miners in proportion to
the number of shares submitted over several periods. 
In other words, even if a miner works in the attacker's pool for only one period, the miner can earn part of the rewards gained from the victim pool for several periods.
Indeed, the greater the number of considered periods for paying rewards, the smaller the variance in $Rd_p$ will be.
However, the number of periods cannot be large.  
This is because there are side effects: First, miners in the attacker's pool must wait for a long time to receive the total reward for their work. 
Second, some miners can suffer losses from the continuous changes in $\alpha$ and $\beta$.
When $\alpha$ or $\beta$ changes, the mean value of the
$\frac{N\gamma}{\beta+\gamma\alpha}$ term of reward density 
$Rd_p$, which is earned from the victim, as well as the infiltration ratio $\gamma$~\footnote{The optimal infiltration ratio depends on $\alpha$ and $\beta$.} would change. 
However, if the attacker distributes the rewards earned from the victim 
over large number of periods, many miners may receive part of the rewards gained from the victim for some periods even though they did not work during the corresponding periods.
Because the part of reward density $Rd_p$ earned from the victim changes over time, the more periods the attacker distributes the rewards gained from the victim, the more unfair the reward system becomes.
For these reasons, the number of periods cannot be large. When the number of periods is small, the victim can still perceive a relatively large variance in $Rd_p$ in the attacker's pool.

\noindent\textbf{Case 3: What if the attacker distributes only partial rewards gained from the victim pool?}
Another method for the attacker to reduce the variance of $Rd_p$ is to share not entire rewards gained from the victim.
Indeed, in order to distribute extra rewards to miners in the attacker's pool, the attacker must share most of the rewards gained from the victim with the miners.
For example, we consider that a pool, which possesses a computational power of 0.2, executes the FAW attack with an infiltration power of 0.005 against other pool with the same computational power (0.2). 
Then, the attacker can earn an extra reward of 0.48\%. 
Meanwhile, if the attacker does not share rewards earned from the victim with her miners, the miners suffer a loss of 2.01\%.
Therefore, the attacker should divide at least approximately 80.7\% ($\frac{2.01}{2.01+0.48}$) of the rewards earned from the victim pool with the miners, in order to prevent losses of her miners.
As a result, a mole in her pool can still observe a relatively large variance in reward density $Rd_p$.

%%% Local Variables:
%%% mode: latex
%%% TeX-master: "main"
%%% End:

\end{document}